\theoremstyle{hhaplain}
\newif\ifmaths
\newcommand{\mathspacing}[1]{
	\relax\ifmmode\mathstrue\else\mathsfalse\fi
	{\ifmaths#1\else$\!\!#1$\fi}\xspace}
\newcommand{\vect}{\protect\mathspacing{\operatorname{Vect}_{\mathbb{C}}}}
\newcommand{\vects}{\protect\mathspacing{\operatorname{Vect}_{\mathbb{C}}^s}}
\newcommand{\catoc}{\protect\mathspacing{\mathcal{K}}}
\newcommand{\catc}{\protect\mathspacing{\bar{\mathcal{N}}}}
\newcommand{\cat}{\protect\mathspacing{\mathcal{N}}}
\newcommand{\catop}{\protect\mathspacing{\bar{\mathcal{O}}}}
\newcommand{\catoo}{\protect\mathspacing{\mathcal{O}}}
\newcommand{\lcatoc}{\mathspacing{\mathfrak{K}}}
\newcommand{\lcatc}{\mathspacing{\bar{\mathfrak{N}}}}
\newcommand{\lcat}{\protect\mathspacing{\mathfrak{N}}}
\newcommand{\lcatoo}{\mathspacing{\mathfrak{O}}}
\newcommand{\lcatocone}{\mathspacing{\mathfrak{K}_{(1,0)}}}
\newcommand{\lcato}{\mathspacing{\mathfrak{N}_0}}
\newcommand{\lcatone}{\mathspacing{\mathfrak{N}_1}}
\newcommand{\lcatn}{\mathspacing{\mathfrak{N}_n}}
\newcommand{\lcatm}{\mathspacing{\mathfrak{N}_m}}
\newcommand{\lcatooo}{\mathspacing{\mathfrak{O}_0}}
\newcommand{\lcatooone}{\mathspacing{\mathfrak{O}_1}}
\newcommand{\lcatootwo}{\mathspacing{\mathfrak{O}_2}}
\newcommand{\lcatoon}{\mathspacing{\mathfrak{O}_n}}
\newcommand{\cato}{\mathspacing{\mathcal{N}_0}}
\newcommand{\catb}{\mathspacing{\mathcal{N}_b}}
\newcommand{\catone}{\mathspacing{\mathcal{N}_1}}
\newcommand{\catoneor}{\mathspacing{\mathcal{N}^{\cator}_1}}
\newcommand{\catonenor}{\mathspacing{\mathcal{N}^{\catnor}_1}}
\newcommand{\nn}{\ensuremath{\mathbb{N}\times\mathbb{N}}}
\newcommand{\sminfty}{\hbox{\hspace{-2pt}\raise0pt\hbox{$_{^\infty}$}}}
\newcommand{\cator}{\hbox{\raise2pt\hbox{$_{+}$}}}
\newcommand{\catnor}{\hbox{\raise2pt\hbox{$_{-}$}}}
\newcommand*{\defeq}{\mathrel{\vcenter{\baselineskip0.5ex \lineskiplimit0pt\hbox{\scriptsize.}\hbox{\scriptsize.}}}=}
\newcommand{\smallotimes}{\hbox{\raise2pt\hbox{$\,_\otimes\,$}}}
\begin{document}

\title{Localisations of cobordism categories and invertible TFTs in dimension two}
\author{R. Juer}
\email{juerr@maths.ox.ac.uk}
\address{Mathematical Institute, 
	Oxford University, 
	OX1 3LB, 
	UK}
\author{U. Tillmann}
\email{tillmann@maths.ox.ac.uk}
\address{Mathematical Institute,
	Oxford University,
	OX1 3LB,
	UK}

\classification{[2010]{57R56, 18D10, 81T45.}}

\keywords{topological quantum field theories, non-orientable surfaces, 
pointed monoidal categories.}
%
\begin{abstract}
Cobordism categories have played an important role in classical geometry and more recently in mathematical treatments of quantum field theory. Here we will compute  localisations of  two-dimensional discrete cobordism categories. This allows us, up to equivalence, to determine the category of invertible two-dimensional topological field theories in the sense of Atiyah. We are able to treat the orientable, non-orientable, closed and open cases.   
\end{abstract}

\received{Month Day, Year}   
\revised{Month Day, Year}    
\published{Month Day, Year}  
\submitted{Chuck Weibel}      
\volumeyear{2013} 
\volumenumber{??} 
\issuenumber{1}   
\startpage{1}     
\articlenumber{1} 
\owner{International Press}

\maketitle
%
%
\section{Introduction}
\label{intro}
%
	The mid 1980s saw a shift in the nature of the relationship between mathematics and physics. Differential equations and geometry applied in a classical setting were no longer the principal players; in the quantum world topology and algebra began to move to the fore. This was a result of extracting the topological aspects of a quantum field theory, largely attributed in its infancy to Witten, for example in \cite{W}. The move to a formal mathematical theory came in 1989 with Atiyah's axiomatisation of a topological quantum field theory (TFT) \cite{A}. These ideas have since shown themselves to have applications to both topology (low dimensional manifolds) and physics (string theory) alike.
	
	Roughly speaking, a closed $2$-dimensional TFT is a functor from a category whose objects are closed oriented $1$-manifolds (disjoint unions of circles), and whose morphisms are homeomorphism classes of oriented surfaces, to the category of complex vector spaces. It is well known that closed $2$-dimensional TFTs are in one-to-one correspondence with commutative Frobenius algebras. Explicitly, the image of the circle $S^1$ under such a theory is an algebra of this form and, conversely, for any such algebra $A$ there exists a theory $F$ with $F(S^1)=A$. Proofs of this folk theorem have been given by Abrams \cite{Ab} and Kock \cite{K}.
	
	Atiyah's axiomatisation can be extended in two ways. Firstly, we can drop the assumption that all objects ($1$-manifolds) in the category are closed, requiring them only to be compact. This leads to the notion of \emph{open-closed} and \emph{open} TFTs, where for the latter we insist that the objects have boundary. Theories of this form have also been classified. Moore and Segal \cite{MS} showed that open $2$-dimensional TFTs are equivalent to symmetric Frobenius algebras. In the open-closed case there is a one-to-one correspondence with ``knowledgeable Frobenius algebras'' (see Lauda and Pfeiffer \cite{LP} for a definition and proof).
	
	The second way in which we can extend the axiomatisation is by allowing non-orientable surfaces in the category. Field theories of this form are known as \emph{Klein} TFTs. Turaev and Turner \cite{TT} proved that closed $2$-dimensional Klein TFTs correspond to finite dimensional commutative Frobenius algebras with additional structure coming from the non-orientable generators of the surface category. Open Klein TFTs are equivalent to symmetric Frobenius algebras with some extra structure, as shown by Braun \cite{B}. Finally, open-closed Klein theories correspond to ``structure algebras'' (see Alexeevski and Natanzon \cite{AN} for a definition and proof).
	
	Here we are interested in invertible $2$-dimensional TFTs, that is to say TFTs for which the defining functor $F$ is morphism inverting. These play an important role in ordinary quantum field theory, for example as anomalies, and over recent years have proven crucial in the understanding  of the subject. For orientable surfaces invertible TFTs have been classified, in the closed case by the second author \cite{T} and in the open-closed case by Douglas \cite{D}. Our aim is to extend these results to the corresponding non-orientable cases, and also to give a classification in the open case which has not previously been considered. 

	The previous classifications have been dependent on a particular choice of model for the cobordism categories considered. Though this is convenient for computations, it is unsatisfactory in terms of the general theory. Here we make our classification independent of such a choice and determine the category of TFTs on any model for the cobordism categories, up to natural equivalence. To do so we are forced to consider the category of (pointed) symmetric monoidal functors and prove that the equivalence class of such a functor category does not change when the source or target categories are replaced by equivalent symmetric monoidal categories.    
	
	Our interest in TFT lies not only with the functors themselves, but also with the surface categories involved. By studying their localisations, i.e.\ the categories obtained by inverting all morphisms, one can gain an understanding of the underlying combinatorics of these categories.  Here we adapt the methods of \cite {T} in order to analyse the localisation of the non-orientable cobordism categories. We are able to compute them in all three cases, obtaining the desired classification of invertible field theories as a corollary. For completeness' sake we have also included proofs for the orientable cobordism categories.		

	The categorical problem at hand is the discrete version of an extended problem, in which one includes diffeomorphism of surfaces in the categorical data. In \cite{GTMW} Galatius, Madsen, the second author and Weiss showed that there is a weak equivalence between the classifying space of the extended $d$-dimensional cobordism category  $\EuScript{C}_d$ and the infinite loop space of a certain Thom spectrum. This result can be interpreted to calculate the extended invertible TFTs in the setting of \cite {Lurie}. We refer to the recent article  by Freed \cite {F} for a survey of the discrete and extended cobordism categories and the TFTs they define, along with various examples and further applications.

	\noindent
	{\bf  Outline and results:}

	We will mainly study the following   cobordism categories in dimension two.

	\begin{description}
		\item [\catoc ] {\it    open-closed category:}
		{\/Objects are unoriented compact $1$-manifolds (disjoint unions of circles and intervals). A morphism $\Sigma:M_0\rightarrow M_1$ is a (not necessarily orientable) compact cobordism from $M_0$ to $M_1$ up to homeomorphism. We call $\partial\Sigma\smallsetminus\left(M_0\cup M_1\right)$ the \emph{free} boundary of $\Sigma$.}
		\item [\cat ] {\it closed category:} 
		{\/The subcategory of \catoc whose objects are disjoint unions of circles, and
		whose morphisms have no free boundary.}	
		\item [\catoo ]{\it open category:}    
		{\/The subcategory of \catoc whose objects are disjoint unions of intervals, and
		whose morphisms have no closed components.}
		\item [$\mathcal S$] {\it orientable category:}
		{\/The subcategory of \catoc in which only orientable surfaces are morphisms.}
	\end{description}

	In section 2,  we define carefully the two-dimensional cobordism categories \catoc, \cat, $\mathcal S$ and several other  subcategories. Using the Euler characteristic we define a symmetric monoidal functor $\Theta : \catoc \rightarrow \mathbb Z$. 

	Section 3 is the heart of the paper. Here we compute the localisations of all categories that we consider. It is shown that $\Theta$ induces on the localisations of \cat,   \catoo and \catoc an equivalence with $\mathbb Z$, see Theorems 3.6, 3.8, and 3.11. The same is true when considering the subcategories with only  orientable surfaces. In the case of \cat and $\cat \cap \mathcal S$ this gives in particular an elementary computation of the fundamental groups $\pi_1 (B \EuScript C_2) \simeq \mathbb Z$ and $\pi_1( B\EuScript C_2^+) \simeq \mathbb Z$ in the notation of \cite {GTMW}.  

	In section 4, we apply these results in order to describe the  category of TFTs (with target the category of complex vector spaces). In particular, we show that when the domain is \cat , \catoo, \catoc or $\mathcal S$ the  category of invertible TFTs is equivalent to the discrete category of non-zero complex numbers, see  Theorem 4.3.

	In section 5  we consider the classifying spaces of our cobordism categories and relate our results to those in \cite {GTMW} and \cite {Lurie}. Our results compute the invertible (discrete) TFTs in the $(\infty ,1)$-setting  if and only if Conjecture 5.3 holds.
 
	Appendix A contains the categorical definitions and results on which section 4 relies. It is a well-known fact that when the source  or target category are replaced by equivalent categories then the resulting functor categories are equivalent. The main purpose of the appendix is to show that this remains true when one considers pointed symmetric monoidal functors and when in turn one restricts to invertible functors. Though not surprising these results are not readily available in the literature. Because of their key role in section 4 we have therefore included a detailed treatment. 

	\ack
	We  would like to thank Chris Douglas and Jeff Giansiracusa for their interest and comments  as this project developed. 

%
%

%
%
%
%
%
%
\section{Cobordism categories and the functor $\Theta$}
\label{definitions}

	We introduce a convenient model of the open-closed cobordism category  
	and its subcategories in detail.
	A modification of the Euler characteristic  defines a functor $\Theta$ 
	to the integers. This functor plays a central role in the following 
	sections. 

	\subsection{The open-closed category \catoc}
		The objects of \,\catoc are compact unoriented $1$-manifolds. Morphisms are (not necessarily orientable) cobordisms up to homeomorphism relative to the boundary. Explicitly, a morphism $(\Sigma,\sigma_0,\sigma_1):M_0\rightarrow M_1$ in \catoc is a compact $2$-manifold $\Sigma$ with maps $\sigma_0: M_0\rightarrow\partial \Sigma$ and $\sigma_1:M_1\rightarrow\partial \Sigma$ which are homeomorphisms onto their images and satisfy $\sigma_0(M_0)\cap\sigma_1(M_1)=\emptyset$. The morphism $(\Sigma,\sigma_0,\sigma_1)$ will often be abbreviated as $\Sigma$. We call $\sigma_0(M_0)$, $\sigma_1(M_1)$ and $\partial\Sigma\smallsetminus(\sigma_0(M_0)\cup\sigma_1(M_1))$ the \emph{source boundary}, the \emph{target boundary} and the \emph{free boundary} respectively. We say two cobordisms $\Sigma, \Sigma':M_0\rightarrow M_1$ are homeomorphic relative to the boundary (and therefore define the same morphism in \catoc) if there exists a homeomorphism $\Psi:\Sigma\rightarrow\Sigma'$ such that the following diagram commutes.
%
		\begin{displaymath}
			   \xymatrix{
					& M_0 \ar@/_0.7pc/[dl]_{\sigma_0} \ar@/^0.7pc/[dr]^{\sigma_0'} & \\
		    		\Sigma \ar[rr]^{\Psi} & & \Sigma' \\
		     		& M_1 \ar@/^0.7pc/[ul]^{\sigma_1} \ar@/_0.7pc/[ur]_{\sigma_1'} & \\
			}
		\end{displaymath}
		Composition of $\Sigma:M_0\rightarrow M_1$ and $\Sigma':M_1\rightarrow M_2$ is given by glueing the two cobordisms via $\sigma'_0\circ\sigma_1^{-1}$, and has boundary maps $\sigma_0$ and $\sigma'_1$. The identity morphism from $M_0$ to itself is given by the cylinder $M_0\times I$.  \catoc is a symmetric  monoidal category under disjoint union of manifolds.
			
		It is convenient to identify \catoc with a skeleton. Let $M$ be a $1$-manifold in \catoc with $m$ closed components and $n$ components with boundary. For each closed component $M_i$  ($1\leq i\leq m$) we choose a homeomorphism $f_i:M_i\rightarrow S^1$, where $S^1$ denotes the standard (unoriented) circle. Similarly for each component $N_i$ ($1\leq i\leq n$) with boundary we choose a homeomorphism $g_i:N_i\rightarrow I$, where $I=[0,1]$ denotes the standard (unoriented) interval. Denote the mapping cylinders of $f_i$ and $g_i$ by $C_{f_i}$ and $C_{g_i}$ respectively. Then the union of mapping cylinders $\left(\coprod_{i=1}^{m}C_{f_i}\right)\amalg\left(\coprod_{i=1}^{n}C_{g_i}\right)$ defines an isomorphism $M\rightarrow\left(\coprod_{i=1}^mS^1\right)\amalg\left(\coprod_{i=1}^nI\right)$ in \catoc. Therefore, the full subcategory on disjoint unions of copies of a single unoriented circle $S^1$, and a single unoriented  interval $I$, is a skeleton for \catoc. We identify these objects with the product \nn, where $(0,0)$ represents the empty $1$-manifold and $(m,n)$ represents $m$ ordered copies of $S^1$ and $n$ ordered copies of $I$, and we shall often refer to an object of \catoc in this manner.
			
		For each morphism in \catoc, we choose a representative cobordism whose source and target boundary components are identified with copies of $S^1$ and $I$, and whose boundary maps are either inclusions or reflections on each component of an object $(m,n)$. Note that the boundary maps are essential in determining the homeomorphism class of a cobordism in \catoc. For example, consider the cylinder as a morphism $S^1\rightarrow S^1$. If both boundary maps are defined to be inclusions, then this is just the identity $S^1\times I$. However, if we define one boundary map to be an inclusion, and the other to be a reflection, we obtain an entirely different morphism. Although the underlying manifolds of the two cobordisms are homeomorphic, we cannot find a homeomorphism which commutes with their boundary maps. The latter cobordism can be thought of as the mapping cylinder of the reflection $r:S^1\rightarrow S^1$ and, with this in mind, we denote it by $C_r^{S^1}$. Similarly, the disc as a morphism $I\rightarrow I$ comes in two guises; the identity, and the mapping cylinder of the reflection $r:I\rightarrow I$. We shall often refer to the latter as the \emph{bow tie}, denoting it by $C_r^I$.
			
		The orientable symmetric monoidal subcategory $\mathcal{S}$ of \catoc consists of oriented $1$-manifolds and oriented cobordisms whose boundary maps are orientation preserving.\footnote{We identify $\mathcal{S}$  with its  skeleton:  the full subcategory on unions of copies of one fixed oriented circle and one fixed oriented interval. In this way we view $\mathcal{S}$ as a subcategory of \catoc by forgetting orientations.} A generators and relations description of $\mathcal{S}$ has been given by Lauda and Pfeiffer \cite{LP}. The fourteen cobordisms in Figure~\ref{GeneratorsOr} form a generating set for $\mathcal{S}$ under the operations of composition and disjoint union. In the figure, source boundary components are on the left, and target boundaries are on the right. For those cobordisms with free boundary, the thickened lines are intended to distinguish source and target boundary components from free components. For example, the bottom right hand cobordism is a morphism from $S^1$ to $I$.

		\begin{figure}[htp]
			\begin{center}
			\includegraphics[scale=0.56]{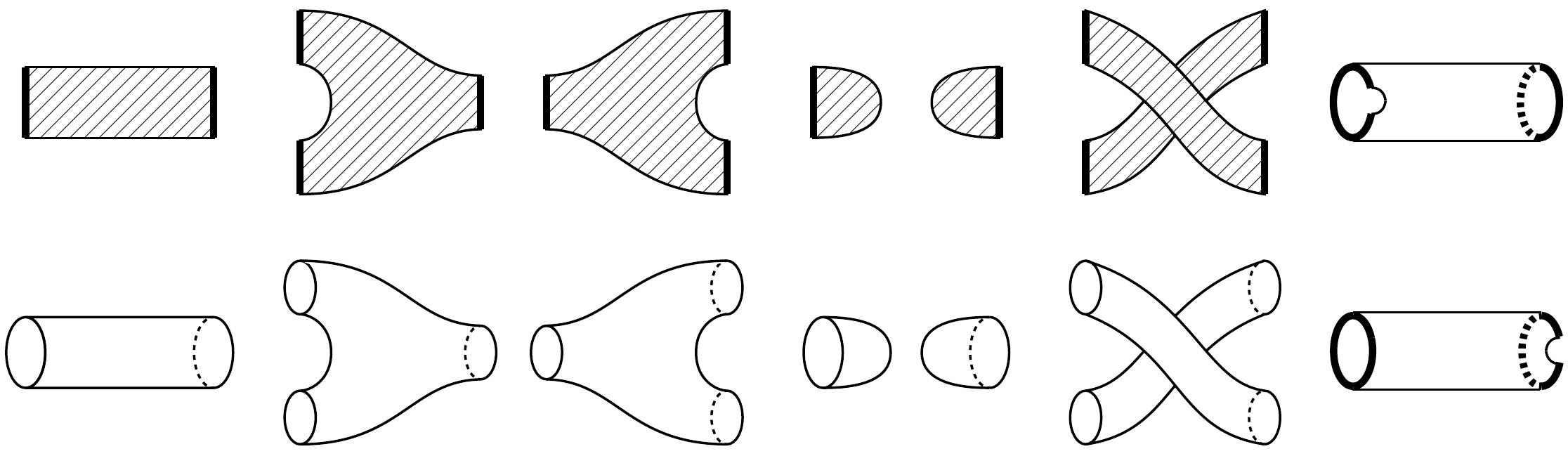}
			\caption{Generators of $\mathcal {S}$.}\label{GeneratorsOr}
			\end{center}
		\end{figure}
		By moving crosscaps and taking out factors of $C_r^{S^1}$ and $C_r^I$, we can decompose any cobordism in \catoc as an element of $\mathcal{S}$ composed with a combination of the four additional cobordisms in Figure~\ref{GeneratorsUnor}. In the figure, a dotted circle surrounding a cross represents a crosscap attached to a surface. From left to right, the cobordisms are: the projective plane with two discs removed, the M\"obius strip, the cylinder $C_r^{S^1}$, and the disc $C_r^I$.

		\begin{figure}[htp]
			\begin{center}	
			\includegraphics[scale=0.56]{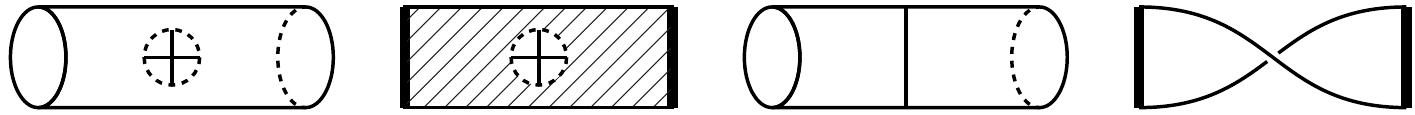}
			\caption{Additional generators of \catoc.}\label{GeneratorsUnor}
			\end{center}
		\end{figure}
		It follows that the eighteen cobordisms of Figures~\ref{GeneratorsOr} and~\ref{GeneratorsUnor} form a generating set for \catoc under composition and disjoint union. 
	\subsection{The category \cat and  and its subcategories}

		We define the closed category \cat to be the following symmetric monoidal subcategory of \catoc. The objects of \cat are all closed $1$-manifolds in \catoc, and morphisms $\Sigma:M_0\rightarrow M_1$ must have boundary maps satisfying the additional condition $\sigma_0(M_0)\cup\sigma_1(M_1)=\partial\Sigma$. In other words, we do not allow cobordisms in \cat to have any free boundary. 

		We will need a careful analysis of various subcategories of $\cat$.
		\begin{displaymath}
			\xymatrix@R=5pt@C=50pt{
				& & \;\cato\; \ar@{^(->}[dr] & \\
				\;\catoneor\;\ar@{^(->}[dr] & & & \;\cat\; \\
				& \;\catone\;\ar@{^(->}[r] & \;\catb\;\ar@{^(->}[ur] \\
				\;\catonenor\;\ar@{^(->}[ur] & & & &
			}
		\end{displaymath}

		\begin{definition}
		Let \catone be the subcategory of \cat consisting of connected endomorphisms of the circle $S^1$.
		\end{definition}
		Morphisms in \catone are connected surfaces with one source and one target boundary circle, and are determined by their genus or number of crosscaps and their boundary maps. Note that composition in \catone is abelian.
	Recall from Section 2.1 that for each morphism we choose a representative cobordism whose boundary circles are identified with $S^1$, and whose boundary maps are either inclusions or reflections.
		\begin{definition}
			Let $(\Sigma,\sigma_0,\sigma_1)$ be a cobordism of the chosen form representing a morphism in \catone. If $\sigma_0$ and $\sigma_1$ are both inclusions, or both reflections, then we say that they have the \emph{same direction}. In this case we say that $\Sigma$ is of \emph{type 0}. If one of $\sigma_0$ and $\sigma_1$ is an inclusion, and the other is a reflection, then we say that the boundary maps have \emph{opposite directions}. In this case we say that $\Sigma$ is of \emph{type 1}.
		\end{definition}
		For orientable surfaces,\footnote{Note that by orientable we do not mean oriented, and so no assumption is made about the direction of boundary maps.} the type of a cobordism carries through to a well-defined notion of the type of a morphism. To see this, we make the following observations. Firstly, for a cobordism of genus $g$ and type $0$, the two possible choices (boundary maps are both inclusions or boundary maps are both reflections) lie in the same homeomorphism class. Similarly, for a cobordism of genus $g$ and type $1$ the two possible choices define the same morphism. Finally, there is no homeomorphism between an orientable cobordism of genus $g$ and type $0$, and an orientable cobordism of genus $g$ and type $1$, which commutes with all boundary maps. Thus the type of an orientable morphism is well-defined. For example, the cylinder of type $1$ in \catone is the morphism $C_r^{S^1}$, whilst the cylinder of type $0$ is the identity.

		Now consider the subcategory $\catoneor$ of \,\catone whose morphisms are all orientable. Morphisms in $\catoneor$ are determined by their genera and their type. In terms of genera, composition in $\catoneor$ corresponds to addition. If two morphisms in $\catoneor$ have the same type, then their composition will always have type $0$. However, composing two morphisms of different types gives a morphism of type $1$. It follows that the type of a morphism lies in the group $\mathbb{Z}_2$. We therefore identify
		\[
			\catoneor = \mathbb{N}\times\mathbb{Z}_2,
		\]
		where the element $(g,\epsilon)$ stands for the unique morphism in $\catoneor$ of genus $g$ and type $\epsilon\in\{0,1\}$.

		Similarly, we can consider the subcategory $\catonenor$ of \catone whose morphisms are non-orientable cobordisms along with the cylinder. For non-orientable surfaces, the type of a cobordism does not carry through to a well-defined notion of morphism type. This is precisely because a cobordism from $S^1$ to $S^1$ with $k\geq 1$ crosscaps and type $0$ lies in the same homeomorphism class as a cobordism from $S^1$ to $S^1$ with $k$ crosscaps and type $1$. To see this, let $(\Sigma,\sigma_0,\sigma_1)$ be the  M\"obius band with one disc removed considered  as a cobordism from $S^1$ to $S^1$ of type $1$. This is depicted in the left of Figure~\ref{MobiusSlide}, where the boundary circle of the M\"obius band represents the image of $\sigma_0$, and the boundary of the removed disc represents the image of $\sigma_1$. The arrows denote the directions of the embeddings $\sigma_0$ and $\sigma_1$. The surface on the right of Figure~\ref{MobiusSlide} is the punctured M\"obius band $\Sigma'$ of type $0$. The required homeomorphism $\Sigma\rightarrow\Sigma'$ is then given by the crosscap slide, that is by pushing the image of $\sigma_1$ once round the M\"obius strip, and this commutes with all boundary maps.
		\begin{figure}[htp]
			\begin{center}
				\includegraphics[scale=0.55]{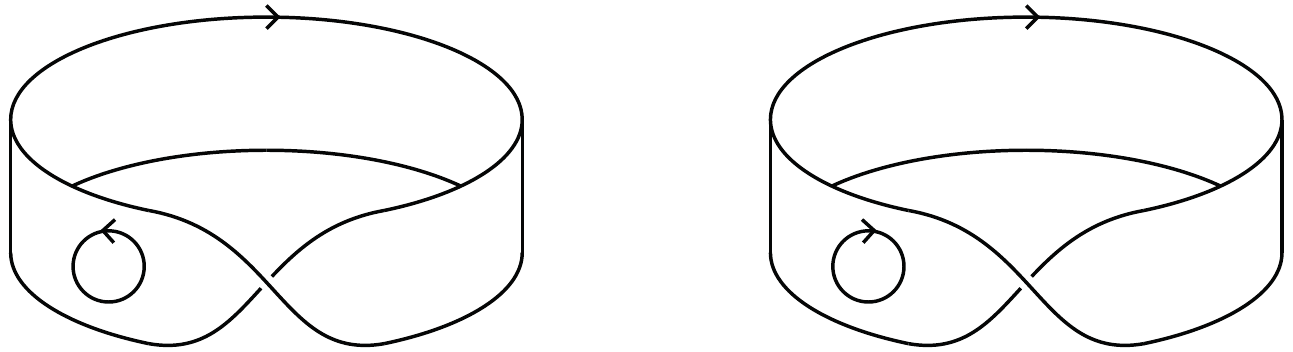}
				\caption{M\"obius bands of type $1$ and of type $0$.}\label{MobiusSlide}
			\end{center}
		\end{figure}
		\linebreak Thus for each $k\geq 1$ there is only one morphism with $k$ crosscaps in $\catonenor$. In terms of crosscaps, composition in $\catonenor$ corresponds to addition. It follows that $\catonenor$ is identifiable with the monoid $\mathbb{N}\cup C_r^{S^1}$, where $C_r^{S^1}$ acts as an identity for all elements other than itself and the cylinder $S^1\times I$.
		Using  notation compatible with that used  for $\catoneor$, we can identify
		\[
			\catonenor=\frac{\mathbb{N}\times\mathbb{Z}_2}{\sim},
		\]
		where the relation is $(k,0)\sim(k,1)$ for non-zero $k$.

		\begin{proposition}
			The monoid \catone can be identified with  $\frac{\nn\times\mathbb{Z}_2}{\sim}$ where the relation is generated by $(g,k,0)\sim(0,2g+k,1)$ if and only if $k$ is non-zero.
		\end {proposition}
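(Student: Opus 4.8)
The plan is to realise \catone as a quotient of $\nn\times\mathbb{Z}_2$ by exhibiting an explicit surjective homomorphism of monoids $\phi\colon\nn\times\mathbb{Z}_2\to\catone$ and computing its kernel congruence. Here $\nn\times\mathbb{Z}_2$ carries the coordinatewise monoid structure (addition of genus, addition of crosscap number, and addition of type in $\mathbb{Z}_2$). Write $O^{\epsilon}_g$ for the orientable connected endomorphism of $S^1$ of genus $g\geq 0$ and type $\epsilon\in\{0,1\}$, and $U_k$ for the non-orientable one with $k\geq 1$ crosscaps, for which the type plays no role by the crosscap slide of Figure~\ref{MobiusSlide}. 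The discussion preceding the statement shows that the morphisms $O^{\epsilon}_g$ ($g\geq 0$, $\epsilon\in\{0,1\}$) together with the $U_k$ ($k\geq 1$) form a complete and irredundant list of the morphisms of \catone: orientable morphisms of distinct genus or type are distinct, non-orientable morphisms with distinct crosscap numbers are distinct, and an orientable morphism is never a non-orientable one.

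First I would define $\phi(g,k,\epsilon)$ to be the morphism obtained from the orientable surface of genus $g$ by attaching $k$ crosscaps, with source and target boundary maps of the same direction if $\epsilon=0$ and of opposite directions if $\epsilon=1$ (a choice that only matters when $k=0$). Dyck's theorem — one handle is absorbed by two crosscaps whenever a crosscap is already present — gives $\phi(g,0,\epsilon)=O^{\epsilon}_g$ and $\phi(g,k,\epsilon)=U_{2g+k}$ for $k\geq 1$, so $\phi$ is surjective. That $\phi$ is a homomorphism amounts to the identities
\[
O^{\epsilon}_g\circ O^{\epsilon'}_{g'}=O^{\epsilon+\epsilon'}_{g+g'},\qquad
O^{\epsilon}_g\circ U_{k'}=U_{2g+k'},\qquad
U_k\circ U_{k'}=U_{k+k'},
\]
each of which follows by computing Euler characteristics, which add under gluing along $S^1$, together with the elementary observations that a gluing of two connected orientable surfaces along a single circle is again orientable — compatible orientations exist by connectedness — with types adding in $\mathbb{Z}_2$, whereas a gluing involving a non-orientable piece is non-orientable. (Both monoids are commutative, so only one order of composition need be checked.)

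From the list of morphisms one then reads off that $\phi(g,k,\epsilon)=\phi(g',k',\epsilon')$ precisely when either $k=k'=0$ with $(g,\epsilon)=(g',\epsilon')$, or $k,k'\geq 1$ with $2g+k=2g'+k'$. It remains to match this kernel with the congruence $\sim$ generated — in the sense of monoid congruences, which is the relevant notion when identifying monoids — by the relations $(g,k,0)\sim(0,2g+k,1)$ with $k\neq 0$. Each such pair lies in $\ker\phi$ because both sides map to $U_{2g+k}$, so $\sim\;\subseteq\;\ker\phi$. For the reverse inclusion, take $k,k'\geq 1$ with $2g+k=2g'+k'=:m$. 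The generating relation gives $(g,k,0)\sim(0,m,1)$; multiplying through by $(0,0,1)$ gives $(g,k,1)\sim(0,m,0)$; and the generating relation applied to the pair $(0,m)$ gives $(0,m,0)\sim(0,m,1)$. Chaining these shows $(g,k,\epsilon)\sim(0,m,\epsilon')$ for all $\epsilon,\epsilon'$, and likewise for $(g',k',\epsilon')$, so all of these triples are $\sim$-equivalent; hence $\ker\phi\subseteq\;\sim$. Therefore $\phi$ descends to the asserted monoid isomorphism $(\nn\times\mathbb{Z}_2)/{\sim}\;\cong\;\catone$.

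I expect the last step to be the only real point of substance. The naive equivalence relation generated by the listed pairs is too small — it would leave a triple such as $(1,1,1)$ in its own class, even though the corresponding morphism is $U_3$ — so one genuinely has to use the monoid structure, in the shape of the twist by $(0,0,1)$, to recover the full kernel; everything else is bookkeeping with the surface facts already recorded.
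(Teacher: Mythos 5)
Your proof is correct and follows essentially the same route as the paper, which simply pieces together the preceding classification of $\catoneor$ and $\catonenor$ (well-definedness of type for orientable morphisms, the crosscap slide, Dyck's theorem) and asserts that ``one checks'' the single relation generates all others. You supply exactly that check, and your observation that ``generated'' must mean generated as a monoid congruence --- so that the twist by $(0,0,1)$ is available --- is the one genuine subtlety the paper leaves implicit.
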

		\begin{proof}
	        Piecing together the above results, we see that \catone can be identified with the monoid $\nn\times\mathbb{Z}_2$ modulo the relations discussed above, where the class $(g,k,\epsilon)$ is represented by a cobordism with $g$ handles, $k$ crosscaps, and type $\epsilon\in\{0,1\}$. One checks that the single relation $(g,k,0)\sim(0,2g+k,1)\iff k\neq 0$ generates all others, and hence $\catone=\frac{\nn\times\mathbb{Z}_2}{\sim}$ where the relation is $(g,k,0)\sim(0,2g+k,1)$ for non-zero $k$.
		\end{proof}

	\subsection{The functor $\Theta:\catoc\rightarrow\mathbb{Z}$}
%
		It is well known that any connected closed surface is homeomorphic to one and only one of the following: the sphere $S^2$, a connect sum of $g$ tori for $g\geq1$, or a connect sum of $k$ real projective planes for $k\geq1$. The Euler characteristic $\chi$ of such a surface can be computed by $2-2g$ in the orientable case, and $2-k$ in the non-orientable case. Removing $n$ discs from the surface reduces its Euler characteristic by $n$. In general, for spaces $X_1$ and $X_2$ whose union is $X$, we have $\chi(X)=\chi(X_1)+\chi(X_2)-\chi(X_1\cap X_2)$. Consider $\mathbb {Z}$ as a category with one object and endomorphism set $\mathbb {Z}$. We define a functor $\Theta\colon \catoc\rightarrow\mathbb{Z}$ which sends all objects of \catoc to the only object of $\mathbb{Z}$, and sends a morphism $\Sigma\colon (m,n)\rightarrow (p,q)$ in \catoc to
		\[
			\Theta(\Sigma)\defeq (p+q)-m -\chi(\Sigma).
		\]
%
		\begin{proposition}
			$\Theta\colon \catoc\rightarrow\mathbb{Z}$ is a functor of symmetric monoidal categories.
		\end{proposition}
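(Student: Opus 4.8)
The plan is to check directly, from the definition $\Theta(\Sigma) = (p+q) - m - \chi(\Sigma)$, that $\Theta$ respects identities, composition, the monoidal product and the symmetry, using only three facts about the Euler characteristic: it is a homeomorphism invariant, it is additive under disjoint union, and it satisfies $\chi(X_1\cup X_2) = \chi(X_1)+\chi(X_2)-\chi(X_1\cap X_2)$ when $X_1,X_2$ are compact and meet along a $1$-manifold. Well-definedness of $\Theta$ on morphisms is immediate: a morphism of \catoc is a homeomorphism class of cobordisms rel boundary, and $\chi(\Sigma)$ depends only on the homeomorphism type of the underlying surface, while $m$ and $p+q$ are determined by the source and target objects.

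For identities, the identity on $(m,n)$ is a disjoint union of $m$ cylinders $S^1\times I$ (each with $\chi = 0$) and $n$ discs $I\times I$ (each with $\chi = 1$), so $\chi = n$ and $\Theta = (m+n)-m-n = 0$, the identity of the one-object category $\mathbb{Z}$. For composition, given $\Sigma\colon(m,n)\to(p,q)$ and $\Sigma'\colon(p,q)\to(r,s)$, the composite is the union of $\Sigma$ and $\Sigma'$ along the intermediate $1$-manifold $M_1$, which consists of $p$ circles and $q$ intervals and therefore has $\chi(M_1) = q$. Inclusion--exclusion gives $\chi(\Sigma'\circ\Sigma) = \chi(\Sigma)+\chi(\Sigma')-q$; substituting into the formula, the terms $\pm p$ cancel and one is left with $\Theta(\Sigma'\circ\Sigma) = (r+s)-m-\chi(\Sigma)-\chi(\Sigma')+q = \Theta(\Sigma)+\Theta(\Sigma')$. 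For the symmetric monoidal structure, additivity of $\chi$ over disjoint unions together with additivity of $(m,n)\mapsto m$ and $(p,q)\mapsto p+q$ yields $\Theta(\Sigma_1\amalg\Sigma_2) = \Theta(\Sigma_1)+\Theta(\Sigma_2)$; the unit $(0,0)$ is the empty surface, which maps to $0$; the associativity and unit coherence isomorphisms are built from identity cylinders, hence go to $0$; and the symmetry $\beta_{M,N}\colon M\amalg N\to N\amalg M$ is again a disjoint union of cylinders and discs with $\Theta(\beta_{M,N}) = 0$, matching the trivial symmetry of the abelian group $\mathbb{Z}$.

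The only point requiring care is the Euler-characteristic bookkeeping in the composition step: one must record that the intermediate $1$-manifold contributes exactly its number of interval components $q$ (the circle components contribute nothing), and one should note that inclusion--exclusion applies because the gluing is along a subcomplex for compatible CW structures on $\Sigma$ and $\Sigma'$. Everything else is a short direct calculation.
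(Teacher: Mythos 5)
Your proof is correct and follows the same route as the paper: the key step in both is the inclusion--exclusion computation $\chi(\Sigma_1\circ\Sigma_2)=\chi(\Sigma_1)+\chi(\Sigma_2)-q$, using that circles and intervals have Euler characteristic $0$ and $1$, after which additivity under disjoint union gives the monoidal statement. Your additional checks of identities, well-definedness, and the coherence isomorphisms are fine but are left implicit in the paper.
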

		\begin{proof}
			Let $\Sigma_1\colon (p,q)\rightarrow (r,s)$ and $\Sigma_2\colon (m,n)\rightarrow(p,q)$ be morphisms in \catoc. Then, since the Euler characteristics of the circle and the unit interval are $0$ and $1$ respectively, we have
		\begin{align*}
			\Theta(\Sigma_1\circ\Sigma_2) & = (r+s)-m -\chi(\Sigma_1\circ\Sigma_2)\\
			& (r+s)-m - \left(\chi(\Sigma_1)+\chi(\Sigma_2)-q\right)\\
			& =(r+s) -p -\chi(\Sigma_1)+(p+q)-m- \chi(\Sigma_2)\\
			& = \Theta(\Sigma_1)+\Theta(\Sigma_2).
		\end{align*}
			$\Theta$ is additive on disjoint unions and hence is a map of symmetric monoidal categories.
		\end{proof}
%
%
		The functor $\Theta$ describes the category of surfaces surprisingly well. Our main results in Section 3 show that $\Theta$ defines  equivalences on the localisation of \catoc and  certain subcategories. Furthermore,  we will  now prove that $\Theta$ restricted to a large subcategory  has a right  adjoint. 

		\begin{definition}
			We define the category \catb to be the subcategory of \cat containing all objects of \cat, and those morphisms no connected component of which is a cobordism to zero.
		\end{definition}
		For example, a disc as a morphism $0\rightarrow 1$ would be in \catb, but a disc as a morphism $1\rightarrow 0$ would not. 

		\begin{theorem}
			The functor $\Theta$ when restricted to \catb has a right inverse which is also a right adjoint.
		\end{theorem}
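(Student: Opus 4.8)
The plan is to write down an explicit candidate functor $S\colon\mathbb Z\to\catb$, verify by hand that $\Theta\circ S=\mathrm{id}$, and then upgrade $S$ to a right adjoint of $\Theta|_{\catb}$ by producing a unit natural transformation.

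\emph{The section.} Since $\mathbb Z$ has one object, a functor $S$ out of it amounts to a choice of object $S(\ast)$ of $\catb$ together with a monoid homomorphism $\mathbb Z\to\mathrm{End}_{\catb}(S(\ast))$; and because $\Theta$ is additive under composition it suffices to pick $S(\ast)$ and an endomorphism $P=S(1)$ with $\Theta(P)=1$, and then set $S(n)=P^{\circ n}$, so that $\Theta(S(n))=n$ and hence $\Theta\circ S=\mathrm{id}$. I take $S(\ast)=S^1$ (the skeletal object $(1,0)$) and $P\colon S^1\to S^1$ the crosscap cylinder — the real projective plane with two discs removed, i.e.\ the left-hand cobordism of Figure~\ref{GeneratorsUnor}. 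Since $\chi(P)=-1$ we get $\Theta(P)=(1+0)-1-(-1)=1$. (One first observes that $\Theta$ takes only non-negative values on $\catb$: every connected component of a morphism of $\catb$ has at least one target circle, and the classification of surfaces shows such a component contributes a non-negative integer to $\Theta$; thus $\Theta|_{\catb}$ corestricts to $\catb\to\mathbb N\subseteq\mathbb Z$, and it is this corestriction that $S$ sections.)

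\emph{The adjunction.} Because $\Theta\circ S=\mathrm{id}$ we may take the counit $\Theta S\Rightarrow\mathrm{id}$ to be the identity, so it remains to build a unit $\eta\colon\mathrm{id}_{\catb}\Rightarrow S\Theta$ satisfying the triangle identities, which here amount to $\Theta(\eta_M)=0$ for all objects $M$ and $\eta_{S^1}=\mathrm{id}_{S^1}$. For $M=(m,0)$ let $\eta_M\colon M\to S\Theta M=S^1$ be the connected genus-zero orientable cobordism from $m$ circles to one circle (the iterated comultiplication; a disc when $m=0$, the cylinder when $m=1$), with the standard inclusions as boundary maps; then $\chi(\eta_M)=1-m$, so $\Theta(\eta_M)=(1)-m-(1-m)=0$, and the case $m=1$ gives $\eta_{S^1}=\mathrm{id}_{S^1}$. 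The substance of the theorem is the naturality square: for $\alpha\colon M'\to M$ in $\catb$ one must show
\[
\eta_M\circ\alpha \;=\; S(\Theta\alpha)\circ\eta_{M'} \;=\; P^{\circ\Theta(\alpha)}\circ\eta_{M'}
\]
as morphisms $M'\to S^1$ of $\catb$. Both composites are connected cobordisms from $m'$ circles to a single circle, and a short Euler-characteristic count — using additivity of $\chi$ under gluing along circles together with $\chi(\alpha)=m-m'-\Theta(\alpha)$, $\chi(\eta_M)=1-m$, $\chi(\eta_{M'})=1-m'$ and $\chi(P^{\circ k})=-k$ — shows that both have Euler characteristic $1-m'-\Theta(\alpha)$. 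Invoking the classification of compact surfaces (a connected compact surface with a fixed number of boundary circles is pinned down by its Euler characteristic and its orientability class), together with the crosscap-slide relations of Section~2 to reconcile the boundary maps, the two composites represent the same morphism of $\catb$, and the triangle identities then conclude.

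\emph{Main obstacle.} The delicate step is this last reconciliation: one must check that the two composites agree not merely as homeomorphism types of surfaces but as morphisms of $\catb$ — boundary maps included — while keeping track of orientability, since $\alpha$ may be orientable with $\Theta(\alpha)>0$. This is where the argument leans on the decomposition of every morphism of $\catoc$ into an element of $\mathcal S$ (Figure~\ref{GeneratorsOr}) composed with the crosscap-carrying generators of Figure~\ref{GeneratorsUnor}, and on the relation allowing a crosscap to be slid past a boundary circle; the bookkeeping around these moves, rather than the Euler-characteristic arithmetic, is what carries the weight.
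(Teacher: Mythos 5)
Your section $i$ agrees with the paper's, and the overall strategy (exhibit a unit $\eta\colon\operatorname{Id}_{\catb}\Rightarrow S\circ\Theta$) is the same, but your choice of unit is wrong and the naturality square genuinely fails: this is not a bookkeeping issue that crosscap slides can repair. Take $\alpha\colon 1\rightarrow 1$ to be the torus with two discs removed, so $\Theta(\alpha)=2$. With your $\eta_1=\mathrm{id}_{S^1}$ the left-hand composite $\eta_1\circ\alpha$ is the (orientable) twice-punctured torus, while the right-hand composite $P^{\circ 2}\circ\eta_1$ is the (non-orientable) twice-punctured Klein bottle. These have the same Euler characteristic and boundary but are not homeomorphic, and they are \emph{not} equal as morphisms of \catb; they only become identified after localisation (see the Remark following Theorem 3.3). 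The crosscap slide cannot help here because it only relates two non-orientable cobordisms (type $0$ versus type $1$ with the same number of crosscaps); it never converts an orientable surface into a non-orientable one. A second failure, independent of genus: for $\alpha=C_r^{S^1}$ one gets $\eta_1\circ\alpha=C_r^{S^1}$ versus $P^{\circ 0}\circ\eta_1=1_{S^1}$, again distinct in \catb.

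The paper's proof avoids both problems by building a crosscap into the unit: $\tau_n\colon n\rightarrow 1$ is the pair of pants with $n$ legs \emph{and one crosscap}. Then both composites in the naturality square are connected \emph{non-orientable} surfaces with the same number of boundary circles and the same number of crosscaps, hence homeomorphic by the classification of surfaces, and the non-orientability also makes the direction of the boundary maps irrelevant (via the crosscap slide). The price is that $\Theta(\tau_n)=1$ rather than $0$, so the strict triangle identities you aim for cannot be arranged; but your insistence on $\Theta(\eta_M)=0$ and $\eta_{S^1}=\mathrm{id}$ is precisely what forces $\eta_M$ to be orientable and thereby breaks naturality. As written, your argument does not establish the adjunction.
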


		\begin{proof}
			First note that $\Theta$ restricted to $\catb$ takes only non-negative values, i.e. $\Theta\colon \catb \rightarrow \mathbb{N}$.
			Define  $i\colon \mathbb {N} \rightarrow \catb$ to be the inclusion that maps the only object to the circle and the morphism given by $k$ to the connected sum of $k$ projective planes, i.e.\ the non-orientable surface of genus $k$ with two discs removed.   
			Note that $\Theta\circ i=\operatorname{Id}_{\mathbb{N}}$. Therefore, to prove that the inclusion $i$ is right  adjoint to $\Theta$ it suffices to define a natural transformation $\tau\colon\operatorname{Id}_{\catb}\rightarrow i\circ\Theta$. For the object $n$ in \catb, let $\tau_n\colon n\rightarrow 1$ be the pair of pants surface with $n$ legs and one crosscap. For a morphism $\Sigma_{g,k,c}\colon n\rightarrow m$ with $c$ components, $k$ crosscaps, and total genus $g$, we need to check that the following diagram commutes.
		\begin{displaymath}
			\xymatrix{
				n \ar[d]_{\Sigma_{g,k,c}} \ar[r]^{\tau_n} & 1 \ar[d]^{2g+k+2m-2c} \\
				m \ar[r]_{\tau_m} & 1 \\
			}
		\end{displaymath}
		Now, $(2g+k+2m-2c)\circ\tau_n$ is the unique connected non-orientable morphism $n\rightarrow 1$ in \catb with precisely $2g+k+2m-2c+1$ crosscaps. On the other hand, the genus of $\Sigma_{g,k,c}$ increases by precisely $m-c$ on composition with $\tau_m$, and the resulting surface is connected. Therefore $\tau_m\circ\Sigma_{g,k,c}$ is non-orientable with $2(g+m-c)+k+1=2g+k+2m-2c+1$ crosscaps. Note that, by ensuring that both compositions are non-orientable, we need not worry about the direction of boundary maps. Therefore the diagram commutes as required.
	\end{proof}
%
%
\section{Localisations}
%
%
%
		For any category $\mathscr{C}$, we denote its \emph{localisation}, that is the groupoid obtained from $\mathscr{C}$ by formally adjoining inverses of all morphisms,  by $\mathscr{C}[\mathscr{C}^{-1}]$ (see \cite{GZ} for formal construction). Localisation of categories is a generalisation of the notion of group completion of monoids: considering a monoid $M$ as a category with one object then $M[M^{-1}]$ is the group completion $\mathcal G (M)$ of $M$. 

		The classification of invertible TFTs in the next section is based on our calculations here of the localisations of the cobordism categories. As a first step we  prove a general result relating the automorphisms of an object in the localisation of a category $\mathscr C$ to the group completion of the monoid of endomorphisms in $\mathscr C$ of the same object.
		\begin{definition}
			A category $\mathscr{C}$ is \emph{strongly connected} at an object $x$ if for any  object $y$ in $\mathscr{C}$ there exists morphisms $x\rightarrow y$ and $y\rightarrow x$.
		\end{definition}
%
%
		\begin{proposition}
		\label{stronglyconnected}
			If $\mathscr{C}$ is  strongly connected category at $x$, then the canonical map 
			\[
				\mathscr{G}(\operatorname{End}_{\mathscr{C}}(x)) \rightarrow \operatorname{Aut}_{\mathscr{C}[\mathscr{C}^{-1}]}(x)
			\] 
			is surjective.
		\end{proposition}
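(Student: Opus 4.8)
The plan is to identify the canonical map with the group homomorphism induced, via the universal property of group completion, by the monoid homomorphism $L\colon\operatorname{End}_{\mathscr{C}}(x)\to\operatorname{Aut}_{\mathscr{C}[\mathscr{C}^{-1}]}(x)$ that the localisation functor $L\colon\mathscr{C}\to\mathscr{C}[\mathscr{C}^{-1}]$ restricts to (every endomorphism in the groupoid $\mathscr{C}[\mathscr{C}^{-1}]$ being invertible). Its image is then a subgroup of $\operatorname{Aut}_{\mathscr{C}[\mathscr{C}^{-1}]}(x)$ containing $L(f)$ and $L(f)^{-1}$ for every $f\in\operatorname{End}_{\mathscr{C}}(x)$, so surjectivity will follow once I show that every $\phi\in\operatorname{Aut}_{\mathscr{C}[\mathscr{C}^{-1}]}(x)$ is a product of such elements. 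Recalling the construction of $\mathscr{C}[\mathscr{C}^{-1}]$ from \cite{GZ}, I write $\phi=g_n\circ\cdots\circ g_1$ where there are objects $x=y_0,y_1,\dots,y_n=x$ of $\mathscr{C}$ and, for each $i$, a morphism $f_i$ of $\mathscr{C}$ with $g_i=L(f_i)\colon y_{i-1}\to y_i$ or $g_i=L(f_i)^{-1}\colon y_{i-1}\to y_i$.

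The key step exploits strong connectedness: I choose, for every object $y$ of $\mathscr{C}$, morphisms $s_y\colon x\to y$ and $t_y\colon y\to x$ of $\mathscr{C}$, with $s_x=t_x=\operatorname{id}_x$, and abbreviate $s_i=s_{y_i}$, $t_i=t_{y_i}$. Since $L(s_i)$ is invertible in $\mathscr{C}[\mathscr{C}^{-1}]$, one has $\operatorname{id}_{y_i}=L(s_i)\circ L(s_i)^{-1}$ there; inserting these identities between the successive factors of $\phi$ and regrouping rewrites $\phi$ as a composite of $n$ automorphisms of $x$, each of the form $h_i=L(s_i)^{-1}\circ g_i\circ L(s_{i-1})$ (with the convention $L(s_0)=L(s_n)=\operatorname{id}_x$). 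It then suffices to show that each $h_i$ lies in the image of the canonical map. For this I use the identity $L(s_i)^{-1}=L(t_i\circ s_i)^{-1}\circ L(t_i)$ — valid because the right-hand side is a one-sided, hence two-sided, inverse of $L(s_i)$ — together with a second such insertion $\operatorname{id}_{y_{i-1}}=L(t_{i-1})^{-1}\circ L(t_{i-1})$ in the case $g_i=L(f_i)^{-1}$. In the two cases $g_i=L(f_i)$ and $g_i=L(f_i)^{-1}$ this expresses $h_i$ in the form $L(a)^{-1}\circ L(b)$ for suitable endomorphisms $a,b$ of $x$ in $\mathscr{C}$ (for instance $a=t_i\circ s_i$ and $b=t_i\circ f_i\circ s_{i-1}$ in the first case), so that $h_i$ is the image of $[a]^{-1}[b]\in\mathscr{G}(\operatorname{End}_{\mathscr{C}}(x))$. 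Hence $\phi$ lies in the image and the map is surjective.

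I expect the main obstacle to be that $\mathscr{C}$ need not admit a calculus of fractions, so one cannot assume $\phi$ is represented by a length-two zig-zag (a span or cospan); the argument has to handle zig-zags of arbitrary length, and the real content is precisely the bookkeeping above that re-expresses the formal inverses $L(s_i)^{-1}$ arising after regrouping in terms of genuine morphisms of $\mathscr{C}$ (via the chosen $t_y$), so that the factors $h_i$ visibly come from $\mathscr{G}(\operatorname{End}_{\mathscr{C}}(x))$ rather than merely from $\mathscr{C}[\mathscr{C}^{-1}]$. Everything after this set-up — checking the two cases and that the relevant composites are indeed endomorphisms of $x$ — is routine. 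It is worth noting that the canonical map is in general far from injective, so strong connectedness buys surjectivity only.
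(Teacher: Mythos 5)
Your proposal is correct and is essentially the paper's own argument: both write an automorphism of $x$ in $\mathscr{C}[\mathscr{C}^{-1}]$ as a zig-zag, use strong connectedness to insert an identity factored through $x$ at each intermediate object, and regroup so that each block is the image of an endomorphism of $x$ or the inverse of one. Your version merely makes the bookkeeping more explicit (in particular, converting formal inverses of non-endomorphisms such as $L(s_i)^{-1}$ into inverses of genuine endomorphisms via the morphisms $t_i$), a point the paper's proof passes over quickly.
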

		\begin{proof}
			Consider a general automorphism  $F$ of $x$ in $\mathscr{C}[\mathscr{C}^{-1}]$, which is a composition of morphisms in $\mathscr{C}$ and their inverses. This is represented by the top line of the diagram below, where $t,u,v,w$ are other objects in $\mathscr{C}$.
			\begin{displaymath}
				\xymatrix{
					\overset{x}{\bullet}\ar[r] & \overset{t}{\bullet}\ar@/_0.5pc/[d]\ar@/^0.5pc/[d] & \overset{u}{\bullet}\ar[l]\ar[r] & \overset{v}{\bullet}\ar@/_0.5pc/[d]\ar@/^0.5pc/[d] & \overset{w}{\bullet}\ar[l]\ar[r] & \overset{x}{\bullet} \\
					& \underset{x}{\bullet} & \underset{x}{\bullet}\ar@/^0.5pc/[u]\ar@/_0.5pc/[u] & \underset{x}{\bullet} & \underset{x}{\bullet}\ar@/^0.5pc/[u]\ar@/_0.5pc/[u]&}
			\end{displaymath}
			We move from left to right along the diagram. Since $\mathscr{C}$ is strongly connected at $x$, when we reach the object $t$ we can find a morphism $t\rightarrow x$ in $\mathscr{C}$, represented by the first downwards arrow. We then map back to $t$ via the inverse of this morphism. Similarly, when we reach $u$ we can find a morphism $x\rightarrow u$, and map down to $x$ via its inverse. Continuing in this manner, we can decompose $F$ as a composition of endomorphisms of $x$ in $\mathscr{C}$ and their inverses, that is to say as elements of $\mathscr{G}(\operatorname{End}_{\mathscr{C}}(x))$.
		\end{proof}
		
	\subsection{Localisations of subcategories of \cat}
	\label{closedsubcats}
%
%
		Before looking at the localisation of the whole category $\cat$, we will first study the localisation of some important subcategories. In particular we will need the following result.
		\begin{theorem}
		\label{prop: localisation of catone}
			$\catone[\catone^{-1}]=\mathbb{Z}$.
		\end{theorem}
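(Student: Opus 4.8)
Since $\catone$ is a one-object category, its localisation is simply the group completion $\mathscr{G}(\catone)$ of the monoid $\catone$, as recalled at the start of this section. So the plan is to compute this group completion and to check that the functor $\Theta$ identifies it with $\mathbb{Z}$; concretely, I will exhibit $\mathscr{G}(\catone)$ as a cyclic group onto which $\Theta$ restricts surjectively.

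First I would read off from the presentation of $\catone$ given above a small generating set together with two relations. Write $h=(1,0,0)$ for the handle, $c=(0,1,0)$ for the crosscap (the class $N_1$, whose type is immaterial), and $t=(0,0,1)=C_r^{S^1}$ for the reflection cylinder. Every element of $\catone$ is a word in $h,c,t$: indeed $(g,0,0)=h^{g}$, $(g,0,1)=h^{g}t$, and every class containing at least one crosscap is a power of $c$. From the crosscap slide $(0,1,0)\sim(0,1,1)$ one obtains $tc=(0,0,1)(0,1,0)=(0,1,1)=c$, and from the relation $(1,1,0)\sim(0,3,1)$ one obtains $hc=(1,1,0)=(0,3,1)=c^{3}$.

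In the group completion $\mathscr{G}(\catone)$ the element $c$ becomes invertible, so $tc=c$ forces $t=1$ and $hc=c^{3}$ forces $h=c^{2}$. Hence $\mathscr{G}(\catone)$ is generated by the single element $c$, i.e.\ it is cyclic. To finish I would use the functor $\Theta\colon\catoc\to\mathbb{Z}$: on an endomorphism $\Sigma$ of $S^{1}$ it takes the value $-\chi(\Sigma)$, and since the projective plane with two discs removed has Euler characteristic $-1$ we get $\Theta(c)=1$. The induced homomorphism $\overline{\Theta}\colon\mathscr{G}(\catone)\to\mathbb{Z}$ therefore sends the generator $c$ to $1$ and is surjective; a cyclic group that surjects onto $\mathbb{Z}$ is infinite cyclic, so $\overline{\Theta}$ is an isomorphism and $\catone[\catone^{-1}]=\mathbb{Z}$.

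I do not expect a serious obstacle: the argument is mechanical once the presentation of $\catone$ is in hand. The only point needing care is the bookkeeping of \emph{types} — equivalently, of the directions of boundary maps — in verifying the two relations $tc=c$ and $hc=c^{3}$; but these are exactly the crosscap slide and the Dyck relation (a handle equals two crosscaps once a crosscap is present) already encoded in that presentation. The conceptual upshot is that localising collapses both the reflection cylinder and the handle into powers of the crosscap, bringing $\catone$ all the way down to $\mathbb{Z}$.
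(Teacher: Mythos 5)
Your proof is correct, and its first half is the paper's argument in different clothing: the two relations you cancel, $tc=c$ and $hc=c^{3}$, are exactly the paper's equivalences (i) and (ii), and in both cases the mechanism is the same — adjoin a crosscap so that the defining relation $(g,k,0)\sim(0,2g+k,1)$ ($k\neq 0$) applies, then cancel it in the group completion. (You derive the relations already in the monoid and cancel afterwards; the paper works inside the Grothendieck construction $\frac{M\times M}{\approx}$ with the auxiliary element $(0,1,0)$, which is the same computation.) Where you genuinely diverge is in showing there are no further relations: the paper finishes by unravelling the Grothendieck relation on the subset $\{0\}\times\mathbb{N}\times\{0\}$ and identifying it with $\mathscr{G}(\mathbb{N})=\mathbb{Z}$, whereas you observe that the cyclic group generated by $c$ surjects onto $\mathbb{Z}$ via $\bar{\Theta}$ (since $\Theta(c)=-\chi=1$) and hence is infinite cyclic. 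This is a clean and fully valid shortcut — indeed it is precisely the device the paper itself deploys later, in Step 4 of the proof of Theorem 3.6, to kill the remaining relations in $\lcatone$ — at the mild cost of invoking Proposition 2.4 ($\Theta$ is a functor), which the paper's self-contained monoid computation does not need.
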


		\begin{proof}
			By our identification of $\catone$ in section 2, to prove the theorem is equivalent to proving that the group completion of $\frac{\nn\times\mathbb{Z}_2}{\sim}$ is $\mathbb{Z}$.

			Recall that the group completion of an abelian monoid $M$ is given by the Grothendieck construction $\frac{M\times M}{\approx}$ where the relation is $(x,y)\approx(x',y')$ if and only if there exists $k\in M$ such that $x+y'+k=y+x'+k$ in $M$. Here we denote the relation on $M\times M$ by $\approx$ to distinguish it from the relation ${\sim}$ on $\nn\times\mathbb{Z}_2$. Therefore, the group completion of $\frac{\nn\times\mathbb{Z}_2}{\sim}$ is given by $\frac{(\frac{\nn\times\mathbb{Z}_2}{\sim})\times(\frac{\nn\times\mathbb{Z}_2}{\sim})}{\approx}$ where $((a,b,\epsilon),(c,d,\eta))\approx((a',b',\epsilon'),(c',d',\eta'))$ if and only if there exists $(x,y,\zeta)\in\frac{\nn\times\mathbb{Z}_2}{\sim}$ such that $(a,b,\epsilon)+(c',d',\eta')+(x,y,\zeta)\sim(c,d,\eta)+(a',b',\epsilon')+(x,y,\zeta)$ in $\frac{\nn\times\mathbb{Z}_2}{\sim}$.
			
			We will show that the following two relations hold.
			\begin{IEEEeqnarray*}{LCCLR}
				\text{(i)}\hspace{1cm} & ((a,b,0),(c,d,\eta)) & \approx & ((a,b,1),(c,d,\eta)) & \text{for all } a,b,c,d,\eta\\
				\text{(ii)} & ((a,b,0),(c,d,0))      & \approx & ((0,2a+b,0),(c,d,0)) & \text{for all } a,b,c,d
			\end{IEEEeqnarray*}
			It will follow that $\mathscr{G}(\frac{\nn\times\mathbb{Z}_2}{\sim})$ is isomorphic to $\frac{(\{0\}\times\mathbb{N}\times\{0\})\times(\{0\}\times\mathbb{N}\times\{0\})}{\approx}$. The relation $\approx$ can then be written as $((0,b,0),(0,d,0))\approx((0,b',0),(0,d',0))$ if and only if there exists $(x,y,\zeta)\in\frac{\nn\times\mathbb{Z}_2}{\sim}$ such that $(x,b+d'+y,\zeta)\sim(x,d+b'+y,\zeta)$, which is the case if and only if $b+d'+y=d+b'+y$ in $\mathbb{N}$. It will follow that $\mathscr{G}(\frac{\nn\times\mathbb{Z}_2}{\sim})$ is isomorphic to the group completion $\mathscr{G}(\mathbb{N})=\mathbb{Z}$.

			For the first equivalence (i), take $(x,y,\zeta)=(0,1,0)\in\frac{\nn\times\mathbb{Z}_2}{\sim}$. Then $(a,b,0)+(c,d,\eta)+(0,1,0)=(a+c,b+d+1,\eta)\sim(c+a,d+b+1,\eta+1)=(c,d,\eta)+(a,b,1)+(0,1,0)$, where for the second step we have used the fact that $b+d+1>0$. Equivalence (ii) follows in a similar fashion, again by taking $(x,y,\zeta)$ to be $(0,1,0)$.
		\end{proof}

		\begin{remark}
			Note that equations (i) and (ii) tell us that in $\catone[\catone^{-1}]$ the morphism $C_r^{S^1}$ is equivalent to the identity, and the  torus with two discs removed is equivalent to the Klein bottle with two discs removed.
		\end{remark}
	
		The localisations of $\catoneor$ and $\catonenor$ can be computed via the same methods as were used for \catone. 

		\begin{proposition}
			$\catoneor[(\catoneor)^{-1}]=\mathbb{Z}\times\mathbb{Z}_2$ and $\catonenor[(\catonenor)^{-1}]=\mathbb{Z}$.
		\end{proposition}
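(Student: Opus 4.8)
The plan is to mirror the proof of Theorem \ref{prop: localisation of catone} in both cases, using the Grothendieck construction for the group completion of an abelian monoid together with the explicit monoid descriptions of $\catoneor$ and $\catonenor$ recorded in Section 2. Since $\catoneor = \mathbb{N}\times\mathbb{Z}_2$, its group completion is computed coordinatewise: $\mathscr{G}(\mathbb{N}\times\mathbb{Z}_2) = \mathscr{G}(\mathbb{N})\times\mathscr{G}(\mathbb{Z}_2) = \mathbb{Z}\times\mathbb{Z}_2$. By Proposition \ref{stronglyconnected}, or simply because $\catoneor$ is a monoid (one-object category), the localisation is exactly this group completion, giving $\catoneor[(\catoneor)^{-1}] = \mathbb{Z}\times\mathbb{Z}_2$. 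One should remark that, unlike in the proof of Theorem \ref{prop: localisation of catone}, no collapsing of the $\mathbb{Z}_2$ factor occurs here: the relation $(k,0)\sim(k,1)$ for $k\neq 0$, which was available via the crosscap slide in the non-orientable setting, is not present among orientable surfaces, so the type genuinely survives.

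For $\catonenor$, I would start from the identification $\catonenor = \frac{\mathbb{N}\times\mathbb{Z}_2}{\sim}$ with relation $(k,0)\sim(k,1)$ for $k\neq 0$, and run the same two-step argument as in Theorem \ref{prop: localisation of catone}. First, using the Grothendieck construction, the group completion is $\frac{(\frac{\mathbb{N}\times\mathbb{Z}_2}{\sim})^{\times 2}}{\approx}$. The key observation is the analogue of relation (i): for any $(a,\epsilon),(c,\eta)$ one has $((a,0),(c,\eta)) \approx ((a,1),(c,\eta))$, witnessed by adding the stabilising element $(1,0)$, since $a+c+1 > 0$ forces $(a+c+1,\eta)\sim(a+c+1,\eta+1)$ in the quotient. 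This collapses the $\mathbb{Z}_2$ coordinates, so that $\mathscr{G}(\catonenor)$ is isomorphic to $\frac{(\{0\}\times\mathbb{N}\times\{0\})^{\times 2}}{\approx}$, and the remaining relation reduces to equality in $\mathbb{N}$ after stabilisation, yielding $\mathscr{G}(\mathbb{N}) = \mathbb{Z}$.

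Alternatively — and this might be the cleanest writeup — one can observe that the inclusion $\catonenor \hookrightarrow \catone$ induces a map on localisations $\mathbb{Z} = \catonenor[(\catonenor)^{-1}] \to \catone[\catone^{-1}] = \mathbb{Z}$ which one checks is an isomorphism (the Klein bottle with two discs removed, i.e.\ the class $(0,2,\ast)$, generates both), so the computation for $\catonenor$ can be read off from Theorem \ref{prop: localisation of catone}; but I would still include the direct Grothendieck computation for completeness and to keep the proposition self-contained.

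There is no serious obstacle: the only point requiring a little care is the handling of the stabilising element in the Grothendieck construction for $\catonenor$ — one must remember that the relation $\approx$ on $M\times M$ asks for a \emph{common} stabiliser $k\in M$ making the sums equal \emph{in} $M$ (here in the quotient $\frac{\mathbb{N}\times\mathbb{Z}_2}{\sim}$, not in $\mathbb{N}\times\mathbb{Z}_2$), exactly as in the proof of Theorem \ref{prop: localisation of catone}. I expect the whole argument to be essentially a three-line invocation of that earlier proof plus a one-line coordinatewise remark for $\catoneor$.
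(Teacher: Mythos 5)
Your main argument is correct and is essentially the paper's own proof: the paper likewise computes $\mathscr{G}(\mathbb{N}\times\mathbb{Z}_2)=\mathbb{Z}\times\mathbb{Z}_2$ coordinatewise for $\catoneor$ and reruns the Grothendieck/stabilisation argument of Theorem \ref{prop: localisation of catone} on $\frac{\mathbb{N}\times\mathbb{Z}_2}{\sim}$ for $\catonenor$. One small correction to your side remark: the generator of $\catonenor[(\catonenor)^{-1}]\cong\mathbb{Z}$ (and of $\catone[\catone^{-1}]$) is the projective plane with two discs removed, i.e.\ the class of one crosscap, not the Klein bottle with two discs removed, which represents $2$.
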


		\begin{proof}
			For the first statement recall that we identified $\catoneor$ with the monoid $\mathbb{N}\times\mathbb{Z}_2$. This group completes to $\mathbb{Z}\times\mathbb{Z}_2$, generated by the torus with two discs removed of type $0$, and the cylinder $C_r^{S^1}$. For the second statement, recall that we identified $\catonenor$ with the monoid $\frac{\mathbb{N}\times\mathbb{Z}_2}{\sim}$ where the relation is $(k,0)\sim(k,1)$ for non-zero $k$. Via the same methods as in the proof of Theorem 3.3, one can show that the group completion of this monoid is $\mathbb{Z}$, generated by the projective plane with two discs removed.
		\end{proof}

			Finally, we define the category \cato to be the full subcategory of \cat on one object, the empty $1$-manifold $\emptyset$.

		\begin{proposition}
		\label{localisationofcato}
	        $\cato[\cato^{-1}]=\mathbb{Z}^{\infty}$.
		\end{proposition}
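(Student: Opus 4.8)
The plan is to identify $\cato$ explicitly as a commutative monoid and then group-complete. Since $\cato$ is a one-object category, its localisation is simply the group completion $\mathscr{G}(\cato)$ of the monoid $\operatorname{End}_{\cat}(\emptyset)$, as recalled at the start of this section, so everything reduces to computing that group completion.

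First I would observe that a morphism $\emptyset\to\emptyset$ in $\cat$ is exactly a closed surface: the source and target boundary are automatically empty, and membership in $\cat$ forbids free boundary, so $\partial\Sigma=\emptyset$. Composition of two such morphisms is gluing along their (empty) common boundary, that is, disjoint union, and the identity is the empty surface. Hence $\cato$ is the commutative monoid of homeomorphism classes of closed unoriented $2$-manifolds under $\amalg$. Invoking the classification of surfaces — every closed surface is a finite disjoint union of connected closed surfaces, each homeomorphic to exactly one of $S^2$, the orientable surface of genus $g\geq 1$, or the connected sum of $k\geq 1$ projective planes, and two such unions are homeomorphic if and only if they have the same components with multiplicity — identifies $\cato$ with the free commutative monoid $\bigoplus_{\mathcal T}\mathbb{N}$ on the countably infinite set $\mathcal T$ of connected closed surfaces.

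Finally, the group completion of the free commutative monoid on a set $S$ is the free abelian group on $S$; with $S=\mathcal T$ countably infinite this yields $\cato[\cato^{-1}]=\mathscr{G}(\cato)\cong\bigoplus_{\mathcal T}\mathbb{Z}=\mathbb{Z}^{\infty}$. The only point requiring care — really a matter of quoting the right classical fact rather than a genuine obstacle — is the freeness of the monoid, i.e.\ the uniqueness up to homeomorphism of the decomposition of a closed surface into connected pieces; this is immediate from the classification of closed surfaces together with the elementary observation that any homeomorphism carries connected components to connected components.
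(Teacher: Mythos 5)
Your proposal is correct and follows essentially the same route as the paper: identify $\cato$ with the free commutative monoid on the set of homeomorphism classes of connected closed surfaces (which the paper writes as $\mathbb{N}^{\mathbb{N}}\times\mathbb{N}^{\mathbb{N}_{>0}}$, indexing orientable components by genus and non-orientable ones by number of crosscaps) and then group-complete to the free abelian group $\mathbb{Z}^{\infty}$. Your extra remark on the uniqueness of the decomposition into connected components is a reasonable point of care that the paper leaves implicit.
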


		\begin{proof}
	        Morphisms in \cato are closed surfaces, and are completely determined by the genus $g= 0, 1, \dots$  or number of crosscaps $k=1,2,\dots$  of each of their components. A general morphism in \cato is thus just the union of an element of $\mathbb{N}^{\mathbb{N}}$ and an element of $\mathbb{N}^{\mathbb{N}_{>0}}$, with composition corresponding to component-wise addition in each monoid. We can therefore identify the whole category \cato with the monoid $\mathbb{N}^{\mathbb{N}}\times\mathbb{N}^{\mathbb{N}_{>0}}$. Hence $\cato[\cato^{-1}]=\mathbb{Z}^{\mathbb{N}}\times\mathbb{Z}^{\mathbb{N}_{>0}}=\mathbb{Z}^{\infty}$.
		\end{proof}
	\subsection{Localisations of $\cat$ and $\mathcal S \cap \cat$}
	\label{fundgroupclosed}
%
%

		The localisation of a symmetric monoidal category is again symmetric monoidal, see Appendix A.
		The purpose of this section is to prove  the following theorem.
		\begin{theorem}
		\label{thm: localisation of cat}
			$\lcat \defeq\cat[\cat^{-1}] $ is equivalent to $\mathbb{Z}$ as a symmetric monoidal category.
		\end{theorem}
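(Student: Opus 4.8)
The plan is to prove that the Euler–characteristic functor $\Theta$ realises the equivalence. A localisation of a symmetric monoidal category is again symmetric monoidal (Appendix~A) and $\Theta$ is strict symmetric monoidal, so it descends to a symmetric monoidal functor $\bar\Theta\colon\lcat\to\mathbb{Z}$; since a strong symmetric monoidal functor that is an equivalence of underlying categories is automatically a symmetric monoidal equivalence, it suffices to show $\bar\Theta$ is an equivalence of categories. Now $\cat$ is strongly connected at the empty $1$-manifold $\emptyset$, since for each object $(m,0)$ a disjoint union of $m$ discs provides morphisms $\emptyset\to(m,0)$ and $(m,0)\to\emptyset$; hence in $\lcat$ every object is isomorphic to $\emptyset$, so $\lcat\simeq B\operatorname{Aut}_{\lcat}(\emptyset)$ and it remains to prove $\bar\Theta\colon\operatorname{Aut}_{\lcat}(\emptyset)\to\mathbb{Z}$ is a group isomorphism. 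By Proposition~\ref{stronglyconnected}, $\operatorname{Aut}_{\lcat}(\emptyset)$ is a quotient of $\mathscr{G}(\operatorname{End}_{\cat}(\emptyset))$, and $\operatorname{End}_{\cat}(\emptyset)$ is the monoid of closed surfaces under disjoint union (as in the proof of Proposition~\ref{localisationofcato}), freely generated by the connected closed orientable surfaces $\Sigma_g$ ($g\geq 0$) and the connected closed non-orientable surfaces $N_k$ (with $k$ crosscaps, $k\geq 1$). So $\operatorname{Aut}_{\lcat}(\emptyset)$ is generated by the classes $[\Sigma_g]$ and $[N_k]$, and the problem reduces to identifying these.

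The key point — and, I expect, the main obstacle — is the relation $[T]=\operatorname{id}_\emptyset$ in $\lcat$, where $T=\Sigma_1$ is the torus. I would prove it by a duality argument. The circle is a dualisable object of $\cat$: it is its own dual, with evaluation the ``bent tube'' $\operatorname{ev}=\epsilon\circ\mu\colon S^1\sqcup S^1\to\emptyset$ and coevaluation $\operatorname{coev}=\Delta\circ\eta\colon\emptyset\to S^1\sqcup S^1$, where $\eta\colon\emptyset\to S^1$ and $\epsilon\colon S^1\to\emptyset$ are the disc and $\mu,\Delta$ are the pair of pants and its opposite. Since the localisation functor $\cat\to\lcat$ is symmetric monoidal it preserves duals, so $S^1$ remains dualisable in $\lcat$; moreover in $\lcat$ the disc $\eta$ is invertible, so $S^1\cong\emptyset$. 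Now $\operatorname{ev}\circ\operatorname{coev}=\epsilon\circ(\mu\circ\Delta)\circ\eta$ is the torus, so the categorical dimension of $S^1$ in $\lcat$ equals $[T]$; but the categorical dimension of a dualisable object depends only on its isomorphism class, and the monoidal unit has dimension the identity, whence $[T]=\dim_{\lcat}(S^1)=\dim_{\lcat}(\emptyset)=\operatorname{id}_\emptyset$.

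What remains is bookkeeping with group completions. Write $a=[\eta]$, $b=[\epsilon]$, $h=[\mu\circ\Delta]$ and $x=[\xi]$ in $\lcat$, where $\xi$ is the crosscap cylinder (the projective plane with two discs removed). Opening a closed surface along a disc gives, in $\cat$, the identities $\Sigma_g=\epsilon\circ(\mu\circ\Delta)^{g}\circ\eta$ and $N_k=\epsilon\circ\xi^{k}\circ\eta$, so $[\Sigma_g]=b\,h^{g}a$ and $[N_k]=b\,x^{k}a$. From $[T]=bha=\operatorname{id}_\emptyset$, conjugating by the isomorphism $a\colon\emptyset\to S^1$ gives $(ab)h=\operatorname{id}_{S^1}$, i.e.\ $b=a^{-1}h^{-1}$; and by Theorem~\ref{prop: localisation of catone} and the remark following it, the torus with two discs removed equals the Klein bottle with two discs removed in $\catone[\catone^{-1}]$, hence (via the induced functor $\catone[\catone^{-1}]\to\lcat$) also in $\lcat$, so $h=x^{2}$. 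Substituting, $[\Sigma_g]=a^{-1}h^{g-1}a=a^{-1}x^{2g-2}a$ and $[N_k]=a^{-1}x^{k-2}a$; as $a^{-1}x^{n}a=(a^{-1}xa)^{n}$, every generator of $\operatorname{Aut}_{\lcat}(\emptyset)$ is a power of $c:=a^{-1}xa=[N_3]$, so $\operatorname{Aut}_{\lcat}(\emptyset)=\langle c\rangle$. Finally $\bar\Theta(c)=\bar\Theta([N_3])=-\chi(N_3)=1$, so $\bar\Theta$ sends $\langle c\rangle$ onto $\mathbb{Z}$; hence $\langle c\rangle$ is infinite cyclic and $\bar\Theta$ is an isomorphism, making $\bar\Theta\colon\lcat\to\mathbb{Z}$ a symmetric monoidal equivalence.

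To summarise: the only substantial ingredient is the triviality of the torus in $\lcat$ (second paragraph), which I would expect to require real thought; the rest combines Propositions~\ref{stronglyconnected} and~\ref{localisationofcato}, the computation $\catone[\catone^{-1}]=\mathbb{Z}$, and routine group-completion algebra.
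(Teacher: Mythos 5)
Your argument is correct, but it reaches the crucial relation by a genuinely different route from the paper's. The paper works at the circle rather than at $\emptyset$: it analyses the full endomorphism monoid $\operatorname{End}_{\cat}(S^1)$ (connected and disconnected pieces, boundary-map types, closed components) and obtains the key relation --- that the twice-punctured torus union a sphere is $1_{S^1}$ --- by exhibiting two distinct representatives $\beta_1,\beta_2$ of the inverse of a fixed $\alpha\colon 1\to 2$ and using that conjugation must be well defined, i.e.\ $\beta_1\gamma\alpha=\beta_2\gamma\alpha$ for a suitable $\gamma\colon 2\to 2$ (Figure~\ref{Conjugation}). Your duality argument replaces this with the observation that the categorical dimension of $S^1$ (the closed torus) must agree with that of $\emptyset$ (the identity) once the disc is inverted; conjugating $[T]=\operatorname{id}_\emptyset$ by the disc recovers ``torus $\sqcup$ cylinder $=1_{S^1}$'', which is the paper's relation modulo its Step~2 identifications. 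Your route is more conceptual, and working at $\emptyset$ lets you bypass most of the paper's Steps~1--3 (the normal form for $\operatorname{End}_{\cat}(S^1)$ and the elimination of spheres), at the cost of invoking dualisability and the isomorphism-invariance of categorical dimension in $\lcat$ --- which leans on Appendix~A for the monoidal structure on the localisation, just as the paper does. Two small points should be made explicit in a final write-up: (i) the zigzag identities for your self-duality of $S^1$ in \cat produce a cylinder whose boundary maps must be checked to give $1_{S^1}$ rather than $C_r^{S^1}$ (harmless, since $C_r^{S^1}$ becomes the identity in \lcat by Theorem~\ref{prop: localisation of catone}, so it suffices to verify dualisability in \lcat); and (ii) the categorical dimension is $\operatorname{ev}$ composed with the symmetry composed with $\operatorname{coev}$, which coincides with your $\operatorname{ev}\circ\operatorname{coev}$ only because the bent tube $\operatorname{ev}$ is invariant under swapping its two inputs --- true here, but worth saying.
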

	
		The proof will show that this equivalence is induced by the Euler characteristic, or more precisely by the functor $\Theta$. It follows in particular that the automorphism group of an object is generated by a surface of Euler characteristic $1$ or $-1$.

		\begin{proof}
			Denote by \lcatn the group of automorphisms of the object $n$ in \lcat. Since \cat is strongly connected for all objects, all \lcatn are isomorphic and the inclusion $\lcatn\hookrightarrow\lcat$ is an equivalence of categories \cite{Q}. The proof proceeds by calculating \lcatone, the automorphism group of the circle. 
			\noindent
			{\it Step 1:}
			\/We first describe the monoid of endomorphisms $\text{End}_{\cat} (S^1)$.
			Disregarding closed components, an element $\Sigma$ in $\text{End}_{\cat}(S^1)$ consists of either a connected surface with two boundary circles (that is, an element of \,\catone) or of two connected surfaces each with one boundary circle (see Figure~\ref{EndS1}). Each component has some arbitrary genus $g\in\mathbb{N}$ or number $k\in\mathbb{N}$ of crosscaps. 
			Connected endomorphisms  as in the left of Figure~\ref{EndS1} also have an associated type which lies in the group $\mathbb{Z}_2$, as discussed in Definition 2.2. Via reflection in a suitable horizontal plane, one sees that the disc with boundary map $i$ is homeomorphic (relative to the boundary) to the disc with boundary map $r$, and so we need not worry about the direction of boundary maps for disconnected endomorphisms as in the right of Figure~\ref{EndS1}.
			
			\begin{figure}[htp]
				\begin{center}
					\includegraphics[scale=0.5]{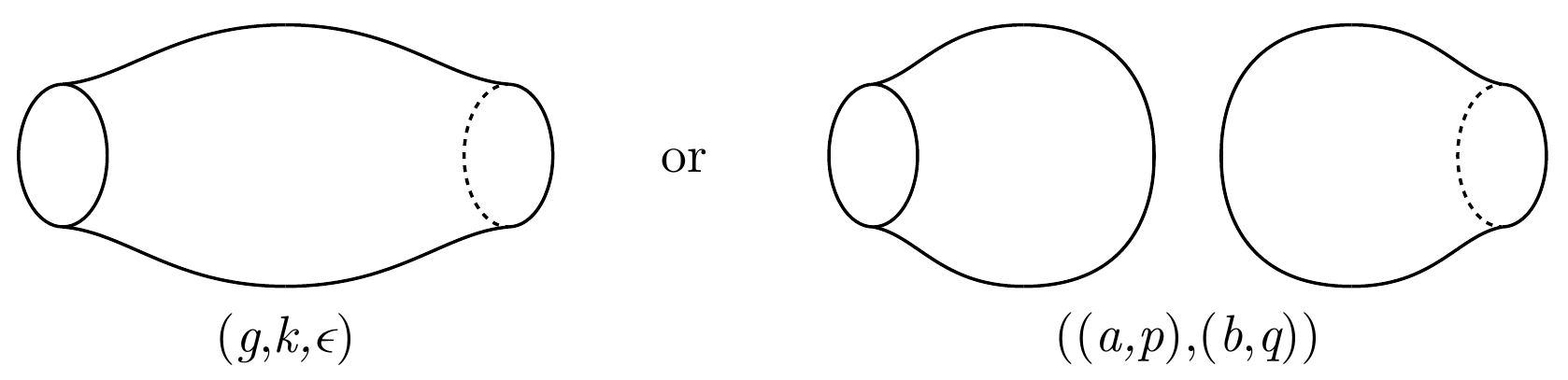}
					\caption{Two kinds of morphisms in $\text{End}_{\cat}(S^1)$.}\label{EndS1}
				\end{center}
			\end{figure}
			The closed components of $\Sigma$ lie in the monoid $\cato=\mathbb{N}^{\mathbb{N}}\times\mathbb{N}^{\mathbb{N}_{>0}}$ from the previous section. It follows that $\text{End}_{\cat}(S^1)$ is isomorphic to 
			\begin{equation*}
						\left(\left(\tfrac{\nn\times\mathbb{Z}_2}{\approx}\right) \amalg \left(\left(\tfrac{\nn}{\sim}\right) \times \left(\tfrac{\nn}{\sim}\right)\right)\right) \times \left(\mathbb{N}^{\mathbb{N}} \times \mathbb{N}^{\mathbb{N}_{>0}}\right)
			\end{equation*}
			where $\approx$ is the relation $(g,k,0)\approx(0,2g+k,1)$ for $k>0$ and $\sim$ is the relation $(g,k)\sim(0,2g+k)$ for $k>0$. Elements of the form $((g,k,\epsilon);n_{00},n_{10},n_{01},\ldots)$ represent morphisms comprising a connected surface as in the left of Figure~\ref{EndS1}, along with $n_{00}$ spheres, $n_{10}$ tori, $n_{01}$ projective planes, $n_{20}$ double tori, $n_{02}$ Klein bottles, and so on.  Elements of the form $((a,p),(b,q);n_{00},n_{10},n_{01},\ldots)$ represent morphisms comprising a disconnected surface as in the right of Figure~\ref{EndS1}, along with $n_{00}$ spheres, $n_{10}$ tori etc. Addition in the monoid is non-commutative and defined by the geometry.
			Since $\catone\subset\operatorname{End}_{\cat}(S^1)$, relations which hold in $\mathscr{G}(\catone)$ must also hold in $\mathscr{G}(\operatorname{End}_{\cat}(S^1))$, and hence also in \lcat. In particular the cylinder $C_r^{S^1}$ is equivalent to the identity $1_{S^1}$ by the proof of Theorem 3.3,  and so the direction of boundary maps becomes irrelevant upon localisation. Since we will be working in \lcat from now on, we will henceforth suppress all mention of boundary maps.
			\noindent
			{\it Step 2:}
			\/We next  eliminate all closed components other than spheres.
			Since \cat is connected, conjugation $c_{\alpha}\colon \lcatone\rightarrow\lcato$ defined by $\beta\mapsto\alpha^{-1}\beta\alpha$ is an isomorphism for any $\alpha\colon 0\rightarrow 1$. In particular it is injective, and we can use this to obtain relations on \lcatone by finding elements with homeomorphic images in \lcato. Let $\alpha$ be the disc as a morphism $0\rightarrow 1$. A representative for the inverse of $\alpha$ in \lcat is the disc $\colon 1\rightarrow 0$ union the inverse of a sphere. The images of elements of \lcatone under $c_{\alpha}$ are:
			\begin{align*}
				\pm((g,k);n_{00},n_{10},n_{01},\ldots) & \mapsto\pm(n_{00}-1,n_{10},n_{01},\ldots,n_{gk}+1,\ldots)\\
				\pm((a,p),(b,q);n_{00},n_{10},n_{01},\ldots) & \mapsto\pm(n_{00}-1,n_{10},n_{01},\ldots,n_{ap}+1,\ldots,n_{bq}+1,\ldots)
			\end{align*}
			where $g,k,a,b,p,q,n_{i0},n_{0j}$ are non-negative and $\pm$ represents an element of $\operatorname{End}_{\cat}(S^1)$ or its inverse respectively.
			Injectivity of $c_{\alpha}$ thus forces the following identifications in \lcatone:

			\vspace{-20pt}
			\begin{IEEEeqnarray*}{RLL}
				\text{(i)} & \pm ((g,k);n_{00},n_{10},\ldots)\\
				& =\pm((g+\sum_iin_{i0},k+\sum_jjn_{0j});\sum_in_{i0}+\sum_jn_{0j},0,0,\ldots )\\
				&=\pm ((0,2g+k+2\sum_iin_{i0}+\sum_jjn_{0j});\sum_in_{i0}+\sum_jn_{0j},0,0,\ldots )\\
				\text{(ii)} & \pm((a,p),(b,q);n_{00},n_{10},\ldots) \\ &=\pm ((a+b+\sum_iin_{i0},p+q+\sum_jjn_{0j});1+\sum_in_{i0}+\sum_jn_{0j},0,0,\ldots )\\
				&=\pm ((0,2a+2b+p+q+2\sum_iin_{i0}+\sum_jjn_{0j});1+\sum_in_{i0}+\sum_jn_{0j},0,0,\ldots ).
			\end{IEEEeqnarray*}
			Here we use `$=$' rather than the usual `$\sim$' to mean equivalent, in order to avoid confusion with the relation $\sim$ defined earlier.
			For the first step in each equivalence we have used the fact that $c_\alpha$ is a group homomorphism.
			For the second step  we have used the fact that the torus  and the  Klein bottle each  with two discs removed are equivalent in $\mathscr{G}(\catone)$ (by the proof of Theorem 3.3) and therefore also in \lcatone. We see that every morphism in \lcatone is equivalent to an element of $\catonenor$ union a collection of spheres, or its inverse, or a composition of such things.
			\noindent
			{\it Step 3:}	
			\/The next step is to eliminate the spheres, meaning that every element of \lcatone is equivalent to an element of $\mathscr{G}(\catonenor)$.
			We show that 
			\[
				((0,-2);0,0,\ldots)=((0,0);1,0,0,\ldots)
			\] 
			in \lcat, in other words the inverse of the Klein bottle with two discs removed is a cylinder union a sphere, or by our earlier identification, the inverse of the torus with two discs removed is a cylinder union a sphere, for which we follow  \cite[Theorem 7]{T}.
			 
			We recall that conjugation $c_{\alpha}\colon \lcatm\rightarrow\lcatn$ via $\beta\rightarrow\alpha^{-1}\beta\alpha$ is a group isomorphism for any $\alpha\colon n\rightarrow m$. Let $\alpha$ be the union of a cylinder and a disc as a morphism $1\rightarrow 2$. We consider the following two representatives for the inverse of $\alpha$ in \lcat: let $\beta_1\colon 2\rightarrow 1$ be the pair of pants surface, and let $\beta_2\colon 2\rightarrow 1$ be the union of a disc, a cylinder and the inverse of a sphere. In order for conjugation to be well defined, we must have $\beta_1\gamma\alpha=\beta_2\gamma\alpha$ in \lcat for any morphism $\gamma\colon 2\rightarrow 2$. Taking $\gamma$ to be the union of the pair of pants surface and a disc, we see that the twice punctured torus union a sphere is the identity (see Figure~\ref{Conjugation}).
			By composing each side of the above equation with $((0,1);0,0,\ldots)$, the  projective plane with two discs removed, it follows that in \lcat $((0,-1);0,0,\ldots) = ((0,1);1,0,0,\ldots)$ and more generally
			\[
				((0,-k);0,0,\ldots)=((0,k);k,0,0,\ldots),
			\]
			and
			\begin{align*}
				((0,k);n_{00},0,0,\ldots) & =((0,k-n_{00});0,0,\ldots)\circ((0,n_{00});n_{00},0,0,\ldots)\\
				& = ((0,k-n_{00});0,0,\ldots)\circ((0,-n_{00});0,0,\ldots)\\
				& = ((0,k-2n_{00});0,0,\ldots). 
			\end{align*}

			\begin{figure}[htp]
				\begin{center}
					\includegraphics[scale=0.38]{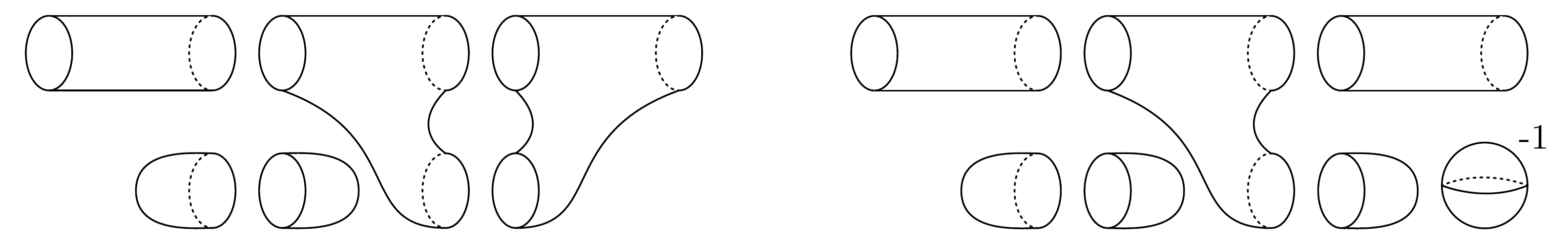}
					\caption{$c_{\alpha}(\gamma)=\beta_1\gamma\alpha$ and  $c_{\alpha}(\gamma)=\beta_2\gamma\alpha$.}\label{Conjugation}
				\end{center}
			\end{figure}
			Applying these identities to the equivalences (i) and (ii) we obtain
			\vspace{6pt}
			\begin{IEEEeqnarray*}{RL}
				\text{(i)$'$} & \pm((g,k);n_{00},n_{10},\ldots)\\
				&=\pm ((0,2g+k+2\sum_i(i-1)n_{i0}+\sum_j(j-2)n_{0j});0,0,\ldots )\\
				\text{(ii)$'$} & \pm((a,p),(b,q);n_{00},n_{10},\ldots)\\
				&=\pm ((0,2(a+b-1)+(p+q)+2\sum_i(i-1)n_{i0}+ \sum_j(j-2)n_{0j});0,0,\ldots)
			\end{IEEEeqnarray*}
			and so every element of \lcatone is of the desired form. It remains to show that there are no further relations.
	
			\noindent
			{\it Step 4:}
			\/The functor $\Theta\colon \cat\rightarrow\mathbb{Z}$ is morphism inverting and hence factors through \lcat by the universal property of localisations. Let $\bar {\Theta}$  denote the unique functor $\lcat\rightarrow\mathbb{Z}$ corresponding to $\Theta$, so that the following diagram commutes, where $\psi\colon \cat\rightarrow\lcat$ is the canonical projection.
			\begin{displaymath}
				\xymatrix{
					\cat\ar[rr]^{\psi}\ar[dr]_{\Theta} & & \lcat\ar[dl]^{\bar{\Theta}}\\
					& \mathbb{Z} & 
				}
			\end{displaymath}
			Since $\bar{\Theta}((0,\pm k);0,0,\ldots)=\mp\chi((0,k);0,0,\ldots)=\pm k$  we see that $\bar {\Theta}|_{\lcatone}\colon \lcatone\rightarrow\mathbb{Z}$ is an isomorphism. Hence all further relations on \lcatone are trivial, and therefore $\lcat\sim\lcatone=\mathbb{Z}$.

			Finally we note that ${\Theta} $ is a map of  symmetric monoidal categories by Proposition 2.4, and so is $\bar{\Theta}$. Indeed, they are strict monoidal functors of based, strict symmetric monoidal categories. Let $\Phi \colon \mathbb {Z} \rightarrow \cat$ be the functor that assigns to the integer $k$	the endomorphism of the empty 1-manifold consisting of $k$ copies of the projective plane. Both $\Phi$ and $\bar \Phi \defeq \psi \circ \Phi$ are based symmetric monoidal, and the pair $(\bar \Theta, \bar \Phi)$ gives rise to a based symmetric monoidal equivalence in the sense of Definition A.7.  
			\qedhere
		\end{proof}
		The subcategory  of orientable surfaces in $\cat$ was considered in \cite {T}.

		\begin{theorem}
			The localisation of $\mathcal S \cap \cat$ is equivalent to $\Bbb Z$ as a symmetric monoidal category.
			\qed
		\end{theorem}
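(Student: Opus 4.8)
The plan is to transport the proof of Theorem~\ref{thm: localisation of cat} to the orientable setting, working throughout with oriented surfaces; there the argument in fact simplifies, and the only genuinely new point is a normalisation issue in the last step. (Alternatively one could extract the computation of the relevant automorphism group directly from \cite{T} and supply only the symmetric monoidal bookkeeping.) Writing $\mathscr{L}\defeq(\mathcal{S}\cap\cat)[(\mathcal{S}\cap\cat)^{-1}]$, I would first observe that $\mathcal{S}\cap\cat$ contains the discs $0\to 1$ and $1\to 0$ and the pair of pants, hence is strongly connected at every object, so that all automorphism groups in $\mathscr{L}$ are isomorphic and the inclusion into $\mathscr{L}$ of its full subcategory on the single oriented circle $S^1$ is an equivalence of categories \cite{Q}. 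It then suffices to compute $\operatorname{Aut}_{\mathscr{L}}(S^1)$ and to produce a symmetric monoidal equivalence of it with $\mathbb{Z}$.

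Next I would run through Steps~1--3 of the proof of Theorem~\ref{thm: localisation of cat}. Disregarding closed components, an endomorphism of $S^1$ in $\mathcal{S}\cap\cat$ is a connected oriented surface of some genus $g\in\mathbb{N}$ carrying one source and one target circle, or a disjoint union of two connected oriented surfaces of genera $a,b$ carrying the source and the target circle; the closed components are counted by genus. Thus $\operatorname{End}_{\mathcal{S}\cap\cat}(S^1)\cong\bigl(\mathbb{N}\amalg(\mathbb{N}\times\mathbb{N})\bigr)\times\mathbb{N}^{\mathbb{N}}$ as a non-commutative monoid; compared with Theorem~\ref{thm: localisation of cat} this simplifies Step~1, since there are no crosscaps and no $\mathbb{Z}_2$-valued ``type'' to track --- the orientations rigidify the boundary maps. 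As in Step~2 there, conjugation by the disc $\alpha\colon 0\to 1$ is an injective homomorphism $\operatorname{Aut}_{\mathscr{L}}(S^1)\to\operatorname{Aut}_{\mathscr{L}}(\emptyset)$, and comparing $c_\alpha$-images of closed surfaces would let me absorb every positive-genus closed component into the genus of a connected endomorphism at the cost of a sphere. As in Step~3 there, the relation ``twice-punctured torus $\amalg$ sphere $=$ identity'' --- which is the orientable statement \cite[Theorem 7]{T} already invoked in that proof --- removes the spheres. At this stage $\operatorname{Aut}_{\mathscr{L}}(S^1)$ is seen to be cyclic, generated by the class of the torus with two discs removed.

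Finally I would pin down the order of this generator and exhibit the equivalence, which is where the one new point appears. The functor $\Theta$ is morphism-inverting, hence factors through $\mathscr{L}$; but on $\mathcal{S}\cap\cat$ it takes only even values, because a connected oriented cobordism of genus $g$ with $m$ source and $p$ target circles has $\chi=2-2g-m-p$, so $\Theta=p-m-\chi=2(g+p-1)$, and $\Theta$ is additive on disjoint unions. Hence $\tfrac12\Theta\colon\mathcal{S}\cap\cat\to\mathbb{Z}$ is a well-defined, morphism-inverting, strict symmetric monoidal functor; let $\Theta'\colon\mathscr{L}\to\mathbb{Z}$ be the induced functor. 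Since $\Theta'$ sends the torus with two discs removed to $1$, it is injective on $\operatorname{Aut}_{\mathscr{L}}(S^1)$, so $\operatorname{Aut}_{\mathscr{L}}(S^1)\cong\mathbb{Z}$ and all remaining relations are trivial. I would close exactly as in the last paragraph of the proof of Theorem~\ref{thm: localisation of cat}: exhibit a based symmetric monoidal section of $\Theta'$ --- built from the closed oriented surface of genus $2$, which has $\tfrac12\Theta=1$ --- so that $\Theta'$ together with this section gives a based symmetric monoidal equivalence in the sense of Definition~A.7. This yields $\mathscr{L}\simeq\mathbb{Z}$ as symmetric monoidal categories.

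The hard part will not be any individual step but the bookkeeping: transcribing Steps~1--3 with the right simplifications, and, in the final step, being alert to the fact that --- in contrast to the non-orientable, open and open-closed cases --- the functor $\bar\Theta$ induced by the Euler characteristic is faithful but not full on $\mathscr{L}$, so it is $\tfrac12\bar\Theta$, not $\bar\Theta$ itself, that realises the equivalence with $\mathbb{Z}$. No geometric input is needed beyond the identities already established for Theorem~\ref{thm: localisation of cat} and in \cite{T}.
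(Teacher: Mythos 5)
Your proposal is correct and is essentially the paper's own argument: the paper proves this theorem precisely by restricting the proof of Theorem~\ref{thm: localisation of cat} to orientable surfaces, identifying the torus with two discs removed as the generator of the automorphism group of the circle. Your extra observation that $\Theta$ takes only even values on $\mathcal S\cap\cat$, so that it is $\tfrac12\bar\Theta$ rather than $\bar\Theta$ which realises the equivalence with $\mathbb Z$, is accurate and is exactly what underlies the paper's subsequent remark that the inclusion $\mathcal S\cap\cat\hookrightarrow\cat$ induces multiplication by $2$ on localisations.
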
 

		Indeed, by restricting to only orientable surfaces  the  proof of Theorem 3.6 will reprove this result and identify the torus with two discs removed as a  generator of the automorphism group of the circle. In \lcat this is equivalent to the  Klein bottle with two discs removed. Thus the inclusion $\mathcal {S} \cap \cat \rightarrow\cat$ induces on localisations the multiplication by $2$ map.  It is somewhat surprising that the morphism $C_r^{S^1}$ plays no part in the localisation of \cat;  one might have expected a $\mathbb{Z}_2$ factor to appear.


	\subsection{Localisations of \catoo and $\mathcal S \cap \catoo$  }

		Let \catop be the full subcategory of \catoc on objects that are disjoint unions of intervals. We defined the open category \catoo to be the subcategory of \catop containing all its objects, and those morphisms which do not have any closed components. We can view the category \cato  as a subcategory of \catop; it is the subcategory of closed endomorphisms of the empty $1$-manifold $0$. Note that there is  a Cartesian product decomposition
		\[
			\catoo\times\cato\simeq \catop
		\]  
		of symmetric monoidal categories and that localisation commutes with Cartesian product. We thus turn our attention to $\catoo$. 

 		\begin{theorem}
		\label{thm: localisation of catoo}
			$\lcatoo \defeq \catoo[\catoo^{-1}]$ is equivalent to $\mathbb{Z}$ as a symmetric monoidal category.
        \end{theorem}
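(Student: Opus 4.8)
The plan is to repeat the argument of the proof of Theorem~\ref{thm: localisation of cat}, with the interval $I$ in place of the circle $S^1$ and open cobordisms in place of closed ones. Since $\mathbb{Z}$ is a groupoid, $\Theta\colon\catoc\to\mathbb{Z}$ is morphism inverting, so its restriction to $\catoo$ factors uniquely through the localisation functor $\psi\colon\catoo\to\lcatoo$; write $\bar{\Theta}\colon\lcatoo\to\mathbb{Z}$ for the induced functor. The category $\catoo$ is strongly connected at $I$: for an object consisting of $n$ intervals, a disc whose boundary circle is divided into one source arc, $n$ target arcs and free arcs separating them is a morphism $I\to n$, and its mirror image is a morphism $n\to I$. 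Hence, exactly as in Theorem~\ref{thm: localisation of cat}, all objects of $\lcatoo$ become isomorphic, the inclusion $\lcatooone\hookrightarrow\lcatoo$ of the full subcategory on $I$ is an equivalence of categories \cite{Q}, and it remains only to show that the automorphism group $\lcatooone=\operatorname{Aut}_{\lcatoo}(I)$ is $\mathbb{Z}$.

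First I would describe the monoid $\operatorname{End}_{\catoo}(I)$. Up to homeomorphism relative to the boundary, a morphism $I\to I$ in $\catoo$ is a disjoint union of connected compact surfaces with non-empty boundary, exactly one component carrying the source arc and exactly one (possibly the same) carrying the target arc; each component is classified by whether it is orientable, by its genus or crosscap number, by its number of boundary circles, and by the cyclic arrangement of source, target and free arcs along those circles, together with the directions of the source and target arcs. Composition is glueing along the source and target arcs and is non-commutative. By Proposition~\ref{stronglyconnected}, $\lcatooone$ is generated by the $\psi$-images of these endomorphisms together with their inverses, so it suffices to prove that each such endomorphism becomes, in $\lcatoo$, a power of one fixed morphism. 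For this I would take $\delta\colon I\to I$ to be the identity cobordism with one open disc removed from its interior, so that $\chi(\delta)=0$ and hence $\bar{\Theta}(\delta)=1$.

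The reduction to powers of $\delta$ is the analogue of Steps~2 and~3 in the proof of Theorem~\ref{thm: localisation of cat}. Conjugating $\lcatooone$ by a disc $\alpha\colon 0\to I$ — a representative of whose inverse in $\lcatoo$ is a disc $I\to 0$ union the inverse of a disc whose boundary is free — and comparing elements with homeomorphic images under $c_{\alpha}$, one absorbs every component carrying neither a source nor a target arc, amalgamates the source- and target-carrying components into one connected surface, and trivialises the arc-arrangement data, so that every element of $\lcatooone$ reduces to a single connected surface $I\to I$ carrying some handles, some crosscaps and some extra free boundary circles. Conjugating next by suitable morphisms $I\to 2$ and using the well-definedness of conjugation exactly as in the $\beta_1\gamma\alpha=\beta_2\gamma\alpha$ step of that proof, together with the relation established there identifying a handle with two crosscaps, one obtains in $\lcatoo$ the relations that trade a handle for two crosscaps, a crosscap for an extra free boundary circle, and the inverse of any one of these features for that feature together with a suitable number of discs with free boundary; the total count of all such features then descends to a single integer. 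Hence every element of $\lcatooone$ is a power of $\delta$, so $\lcatooone$ is cyclic; since $\bar{\Theta}(\delta)=1$, the homomorphism $\bar{\Theta}\colon\lcatooone\to\mathbb{Z}$ is surjective, and a cyclic group with $\mathbb{Z}$ as a quotient is $\mathbb{Z}$. Therefore $\bar{\Theta}|_{\lcatooone}$ is an isomorphism and $\lcatoo\simeq\lcatooone=\mathbb{Z}$ as categories. Finally, as in Theorem~\ref{thm: localisation of cat}, $\Theta$ and hence $\bar{\Theta}$ are strict monoidal functors of based strict symmetric monoidal categories; letting $\bar{\Phi}\colon\mathbb{Z}\to\lcatoo$ send the generator to $\psi$ of the three-holed sphere with all boundary free — an endomorphism of the empty $1$-manifold, which is the unique object of $\catoo$ fixed by $\otimes$ — the pair $(\bar{\Theta},\bar{\Phi})$ is a based symmetric monoidal equivalence in the sense of Definition~A.7. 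Restricting the whole argument to orientable surfaces yields in the same way the localisation of $\mathcal{S}\cap\catoo$.

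The hard part will be this reduction: unlike in the closed case an endomorphism of $I$ carries the extra combinatorial datum of how the source, target and free arcs are distributed along the boundary circles, so the conjugating morphisms and the elements to be compared must be chosen carefully, so that the resulting identities both destroy this datum and produce enough relations — handle versus crosscaps versus free boundary circles versus discs with free boundary — to force $\lcatooone$ down to the infinite cyclic group generated by $\delta$. Everything else is bookkeeping of the kind already carried out in the proofs of Theorems~\ref{prop: localisation of catone} and~\ref{thm: localisation of cat}.
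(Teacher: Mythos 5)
Your plan is correct and follows essentially the same route as the paper's proof: reduce to $\operatorname{Aut}_{\lcatoo}(I)$ by strong connectivity, kill boundary-map directions and purely-free components by conjugating with a disc $0\to I$, extract the relations identifying the inverse of a window and the inverse of a crosscap with a free disc via two choices of inverse for a morphism $I\to 2$ (the $\beta_1\gamma\alpha=\beta_2\gamma\alpha$ trick), and then use surjectivity of $\bar{\Theta}$ to conclude the group is exactly $\mathbb{Z}$. The reduction you defer as "the hard part" is precisely Steps 2--3 of the paper's argument, and the relations you name (handle $=$ two crosscaps, crosscap $=$ window, inverse of a window $=$ free disc) are exactly the ones the paper establishes.
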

 		
		\begin{proof}
	        For each object $n\in\catoo$ define $\lcatoon\defeq\operatorname{Aut}_{\lcatoo}(n)$. The category \catoo is strongly connected for every object. We proceed as in the proof  of Theorem 3.6. As the arguments are similar, we give only the essential steps.

			\noindent
			{\it Step 1:} \/We will first describe the monoid $\operatorname{End}_{\catoo}(I)$.
	        Disregarding components with entirely free boundary, elements of this monoid take one of three forms depicted in Figure~\ref{EndI}. For each cobordism in the picture, the thickened lines represent the source and target boundaries of the morphism. The thin lines represent free boundary. The morphism on the left is the disc as an endomorphism of the interval $I$, and the centre morphism is two discs. The morphism on the right is the cylinder as an endomorphism of $I$, and we often refer to this surface as the \emph{whistle}. Each component has some genus $g$, number of crosscaps $k$, and number of windows $w$.  Components of a morphism which have entirely free boundary are discs, also with genus, crosscaps and windows. We will refer to the standard disc $D^2$ with entirely free boundary as the \emph{free} disc.
			Next we show that, if we are only interested in the localisation, we do not need to mention boundary maps. Indeed,  
	conjugation $c_{\alpha}\colon\lcatooone\rightarrow\lcatooo$ is an isomorphism for any $\alpha\colon 0\rightarrow 1$. Let $\alpha$ be the disc as a morphism $0\rightarrow 1$, and note that an inverse for $\alpha$ is the disc $\colon 1\rightarrow 0$ union the inverse of a free disc. Under $c_{\alpha}$, the disc $C_r^I$ and the identity $1_I$ both map to $\emptyset$. It follows that $C_r^I$ and $1_I$ are equivalent in $\lcatoo$.
			\begin{figure}[htp]
				\begin{center}
					\includegraphics[scale=0.7]{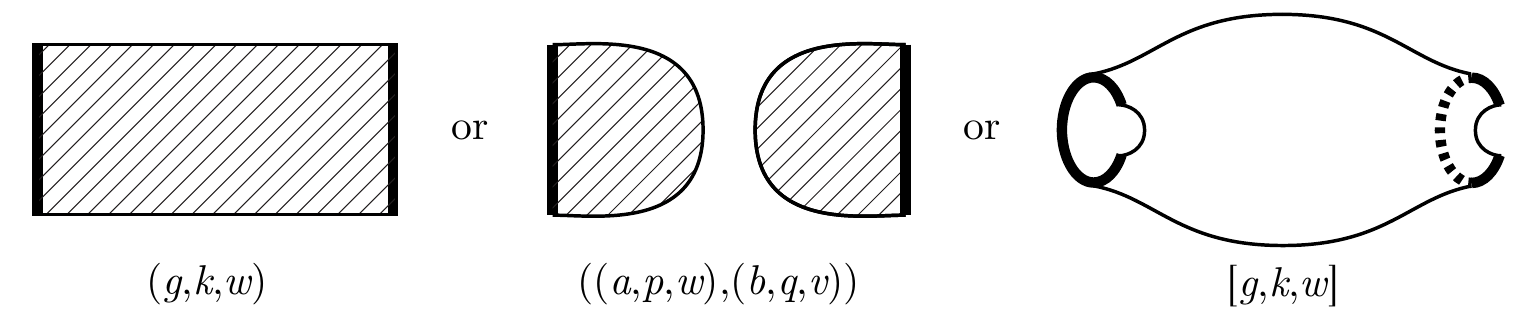}
					\caption{Three kinds of morphisms in $\text{End}_{\catoo}(I)$.}\label{EndI}
				\end{center}
			\end{figure}

			From the above discussion (and ignoring type) we can write the monoid $\operatorname{End}_{\catoo}(I)$ as
			\[
				\left(\left(\tfrac{\nn \times \mathbb{N}}{\sim}\right) \amalg \left(\left(\tfrac{\nn \times \mathbb{N}}{\sim}\right) \times \left(\tfrac{\nn \times \mathbb{N}}{\sim}\right)\right) \amalg \left(\tfrac{\nn \times \mathbb{N}}{\sim}\right)\right) \times \left(\mathbb{N}^{\mathbb{N} \times \mathbb{N}} \times \mathbb{N}^{\mathbb{N}_{>0} \times \mathbb{N}}\right)
			\]
			where the relation on each component is $(g,k,w)\sim(0,2g+k,w)$ for non-zero $k$. Addition in the monoid is non-commutative and defined by the geometry. Elements are written in one of the following ways:

			\begin{itemize}
				\item $((g,k,w);n_{000},n_{100},n_{010},\ldots,n_{001},n_{101},n_{011},\ldots,\ldots)$ represents a connected morphism as in the left of Figure~\ref{EndI} along with $n_{000}$  free discs, $n_{100}$ punctured tori, $n_{010}$ M\"obius bands,\ldots,$n_{001}$ annuli, $n_{101}$ twice punctured tori, $n_{011}$  twice punctured projective planes, and so on ($n_{ijt}$ is the number of discs with $i$ handles, $j$ crosscaps, and $t$ windows).
				\item $((a,p,w),(b,q,v);n_{000},n_{100},n_{010},\ldots,n_{001},n_{101},n_{011},\ldots,\ldots)$ represents a disconnected morphism as in the centre of Figure~\ref{EndI} along with $n_{000}$ free discs, $n_{100}$ punctured tori, and so on.
				\item $([g,k,w];n_{000},n_{100},n_{010},\ldots,n_{001},n_{101},n_{011},\ldots,\ldots)$ represents a morphism as in the right of Figure~\ref{EndI} along with $n_{000}$ free discs etc, where we use square brackets to distinguish the whistle from the disc in the marked component.
			\end{itemize}

			\noindent
			{\it Step 2:}
			\/We next  show that we may concentrate on one of the three kinds and eliminate all components with only free boundary other than the disc. This is achieved by  considering the images under the conjugation map $c_{\alpha}\colon \lcatooone\rightarrow\lcatooo$ as defined in Step 1. Let $\Sigma (g,k, w )$ denote a surface of (orientable) genus $g$ with $k$ crosscaps and $w +1 $ windows, i.e.\ the surface represented by $n_{g,k,w}$ in the above notation. We note that $c_\alpha (((g, k, w); 0, \dots))$ is $ \Sigma (g,k, w)$ minus a free disc, $c_\alpha (([g,k,w]; 0 \dots))$ is  $ \Sigma (g,k,w+1) $ minus a free disc, and $c_\alpha (((a,p,w), (b,q,v); 0, \dots))$ is $ \Sigma (a,p ,w) \cup \Sigma (b,q,v)$ minus a free disc.Calculations as in the proof of Theorem 3.6 show that in the localisation all morphisms can be represented by morphisms of the form $((g,k,w); n_{000}, 0, 0, \dots)$ and their inverses, and hence, as $((g,k,w); 0, \dots)) = ((0, 2g+k,w); 0, \dots)$ for $k>0$, those of the form $((0,k,w); n_{000}, 0, 0, \dots)$ and their inverses.   

			\noindent
			{\it Step 3:}
			\/We will deduce two more relations which will show that $\lcatoo_1 $ is a quotient of $\mathbb{Z}$.

			First we consider the  conjugation $c_{\alpha}\colon \lcatootwo\rightarrow\lcatooone$ where $\alpha$ is the union of two discs as a morphism $1\rightarrow 2$ as in Figure~\ref{Conjugation1}. The following are both inverses for $\alpha$. Let $\beta_1$ be the disc as a morphism $2\rightarrow 1$, and let $\beta_2\colon 2\rightarrow 1$ be the union of two discs and an inverse free disc (see Figure~\ref{Conjugation1}). Let $\gamma\colon 2\rightarrow 2$ be two discs as in the figure. Since we must have $\beta_1\gamma\alpha=\beta_2\gamma\alpha$ in \lcatoo, we see that the annulus union a free disc is equivalent to the identity, both as morphisms $1\rightarrow 1$.
			Thus in $\lcatoo$, 
			\[
				((0,0,-1);0,0,\ldots)=((0,0,0);1,0,0,\ldots).
			\] 

			Next we consider conjugation $c_{\alpha}\colon \lcatootwo\rightarrow\lcatooone$ just as above only that we choose a different $\beta_1$. This time $\beta_1$ is the disc as a morphism $2 \rightarrow 1$ where one of the incoming boundary maps is twisted. Note that this still defines an inverse of $\alpha$. This time the  surface $c_\alpha(\gamma)$ is a M\"obius band union a free disc. 
			Thus in $\lcatoo$, 
			\[
				((0,-1,0);0,0,\ldots)=((0,0,0);1,0,0,\ldots).
			\] 
			\begin{figure}[htp]
				\begin{center}
					\includegraphics[scale=0.45]{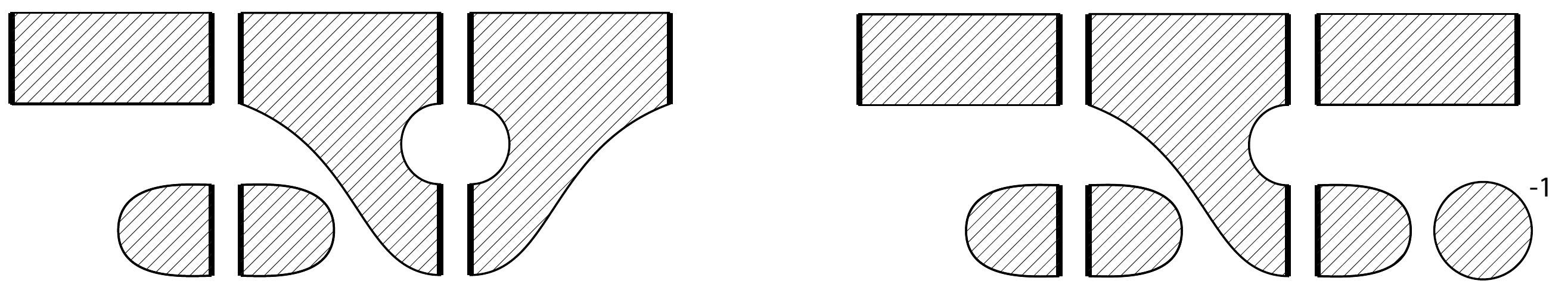}
					\caption{$c_{\alpha}(\gamma)=\beta_1\gamma\alpha$ and  $c_{\alpha}(\gamma)=\beta_2\gamma\alpha$.}\label{Conjugation1}
				\end{center}
			\end{figure}
		
			\noindent
			{\it Step 4:}
			\/The functor $\bar \Theta\colon \lcatoo \rightarrow \mathbb{Z}$ is surjective, mapping the M\"obius band and the annulus to $1\in \mathbb Z$ when considered as morphisms $1 \rightarrow 1$. Hence, in particular, 
			$\lcatoo_1$ is isomorphic to $\mathbb{Z}$.
		\end{proof}
		
		Consider now the subcategory of orientable surfaces  in $\catoo$.

		\begin{theorem}
			The localisation of $\mathcal S \cap \catoo$ is equivalent to $\Bbb Z$ as a symmetric monoidal category.
		\end{theorem}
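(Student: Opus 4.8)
The plan is to rerun the proof of Theorem~\ref{thm: localisation of catoo} with orientable surfaces and orientation-preserving boundary maps throughout, keeping track of the simplifications. Two are immediate: there are no crosscaps, so the relation $(g,k,w)\sim(0,2g+k,w)$ is vacuous; and $\mathcal S$ allows only orientation-preserving boundary maps, so there is no bow tie $C_r^I$ and the preliminary identification $C_r^I\sim 1_I$ is unnecessary. What remains is to go through Steps 1--4 of that proof, supplying a substitute for the one place where the non-orientable argument used the M\"obius band.

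First I would describe $\operatorname{End}_{\mathcal S\cap\catoo}(I)$ as in Step 1: ignoring components with entirely free boundary, its elements are built from the three connected orientable types of Figure~\ref{EndI} (the disc, two discs, and the whistle), each carrying only a genus $g$ and a number of windows $w$; the entirely-free components are orientable discs with genus and windows, among which the free disc $D^2$ is distinguished. Then, as in Step 2, conjugation $c_\alpha$ by the disc $\alpha\colon 0\rightarrow 1$ (with inverse the disc $1\rightarrow 0$ union the inverse of a free disc) is an isomorphism from the automorphism group of $I$ in the localisation onto that of the empty manifold; comparing images of generators under $c_\alpha$ and using injectivity eliminates every entirely-free component other than the free disc and absorbs the two-disc type and the whistle type into the disc type. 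Importing from the orientable refinement of Theorem~\ref{thm: localisation of cat}, that is from \cite[Theorem 7]{T}, the relation that the twice punctured torus union a sphere is the identity --- equivalently that a handle can be traded --- then collapses the genus, so that every automorphism of $I$ in the localisation is represented by $((g,0,w);n_{000},0,0,\ldots)$ or by its inverse.

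Next, following Step 3, I would conjugate by the union of two discs $\alpha\colon 1\rightarrow 2$ and compare the two inverses $\beta_1,\beta_2$ of $\alpha$ used there; this yields the relation that the annulus union a free disc is the identity as morphisms $1\rightarrow 1$, which together with the handle-trading relation collapses both the window count and the genus. Hence the automorphism group of $I$ in the localisation is cyclic, generated by the whistle (equivalently by the annulus). Finally, as in Step 4, $\Theta$ is morphism inverting on $\mathcal S\cap\catoo$ and so factors through a functor $\bar\Theta$ to $\mathbb Z$; the whistle is orientable and maps to $1$, so $\bar\Theta$ is surjective, and a cyclic group with $\mathbb Z$ as a quotient is infinite cyclic. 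Strong connectedness of $\mathcal S\cap\catoo$ at every object then upgrades this, exactly as in the proofs of Theorems~\ref{thm: localisation of cat} and~\ref{thm: localisation of catoo}, to an equivalence of the localisation with $\mathbb Z$; the symmetric monoidal refinement is obtained as at the end of the proof of Theorem~\ref{thm: localisation of cat}, by pairing $\bar\Theta$ with the strict monoidal section sending $1$ to the whistle.

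The step I expect to be the main obstacle is the genus collapse inside the open category. In the non-orientable case a handle became two crosscaps and was removed via the crosscap and M\"obius relations, a route unavailable here; instead one must run the conjugation argument of \cite[Theorem 7]{T} directly for surfaces carrying free boundary, checking that the twice-punctured-torus relation survives the presence of windows and meshes with the annulus relation of Step 3. The attendant bookkeeping of boundary circles and free discs is the delicate point; the rest is a routine transcription of the proof of Theorem~\ref{thm: localisation of catoo}. It is worth noting, in contrast to the closed case, that $\Theta$ restricted to $\mathcal S\cap\catoo$ is already surjective onto $\mathbb Z$, so the inclusion $\mathcal S\cap\catoo\hookrightarrow\catoo$ induces an isomorphism --- not multiplication by $2$ --- on localisations.
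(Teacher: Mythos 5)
Your overall architecture is right and you have correctly located the difficulty, but the one step that carries the content of this theorem --- eliminating the genus in the localisation of $\mathcal S\cap\catoo$ --- is not actually supplied, and the mechanism you gesture at cannot work as stated. The relation ``twice punctured torus union a sphere is the identity'' lives in the closed category: it involves a sphere as a closed component and circles as objects, and neither exists in \catoo (morphisms there are forbidden to have closed components, and the objects are unions of intervals). So it cannot be ``imported,'' and rerunning the conjugation of \cite[Theorem 7]{T} ``directly for surfaces carrying free boundary'' is not a transcription: the conjugating morphisms must be rebuilt entirely out of discs and whistles. The paper does this with a new configuration (Figure~\ref{Conjugation4}), deriving $([1,3];3,0,\dots)=([0,3];1,0,\dots)$ and hence that the genus-one disc union two free discs is the identity --- i.e.\ a handle is traded for two windows (both sides having $\Theta=2$), which is the open analogue of ``torus $=$ Klein bottle.'' Crucially, the verification that the two candidate inverses $\beta_1,\beta_2$ of $\alpha$ in that figure agree after composition with $\alpha$ rests on the identity $([0,0];0,\dots)=([0,w];w,0,\dots)$, which is itself a consequence of the annulus-union-free-disc relation. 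Your ordering --- genus collapse in Step 2, annulus relation afterwards in Step 3 --- is therefore backwards: the window relation must be established first, and the genus relation is then a separate, bespoke conjugation rather than a consequence of it.

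The surrounding material is fine: the simplifications you list (no crosscap relation, no bow tie), the description of $\operatorname{End}_{\mathcal S\cap\catoo}(I)$, the elimination of free components, and the use of $\bar\Theta$ to rule out further relations all match the paper, as does the upgrade to a symmetric monoidal equivalence. Your closing observation that the generator of the localisation of $\mathcal S\cap\catoo$ at $I$ has $\Theta=1$, so that the inclusion into \catoo induces an isomorphism rather than multiplication by $2$, is correct and is a genuine point of contrast with the closed case. But as written the proof has a hole exactly where Theorem 3.9 differs from a routine restriction of Theorem 3.8.
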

		
		\begin{proof}
			We adopt the proof of Theorem 3.8. The description of the endomorphisms of the interval simplifies in $\mathcal S \cap \catoo$	and we denote  the three kinds of surfaces by $((g, w); n_{000}, \dots) $, $ ((a,w), (b,v); n_{000} \dots)$ and $([g,w]; n_{000} , \dots )$. By the same argument as in Step 2 and the first part of Step 3 above we may reduce our attention to surfaces of the form $ ((g,0); 0, \dots))$ and $((0,w); 0, \dots)$ and their inverses. We will show that in the localisation of $\mathcal S \cap \catoo$
			\[
				((1, 0); 0, \dots) =  ((0,-2); 0, \dots ).
			\]   
			As Step 4 is still valid, this will prove the theorem.

			We work  in the localisation and note that $((0,1); 1, 0, \dots)$ is the identity by the analogue of the first part of Step 3. Composing this with a cylinder considered as a morphism from the interval to itself shows that	$([0,0]; 0,0 \dots) = ([0,w]; w, 0, \dots)$ for any $w$. From this identity it now follows in particular that$ \beta_1 \alpha=  ([0, 2]; 2,0 \dots) =  ([0,1]; 1,0 \dots )= \beta _2 \alpha$ for the morphisms $\alpha, \beta _1 $ and $\beta_2$ as depicted in Figure 3.5. Hence, the two composed surfaces $\beta_1 \gamma \alpha $ and $\beta_2 \gamma \alpha$ in Figure 3.5 are identified. This gives
			\[
				([1,3]; 3, 0, \dots) = ([0, 3]; 1, 0 ,\dots).
			\]
			Finally we note that $([1,3]; 3, \dots 0) = ([0, 3]; 1, 0 , \dots ) ( (1,0); 2, 0, \dots)$ which implies  $((1,0); 2,0 \dots)$ and hence $((1,2); 0, \dots )$ is the identity.
			\qedhere
		\end{proof}

		\begin{figure}[htp]
			\begin{center}
				\includegraphics[scale=0.56]{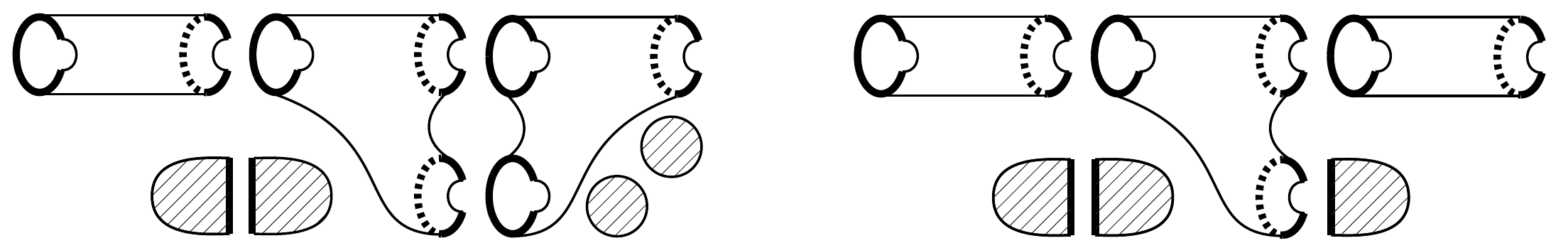}
				\caption{$\beta_1\gamma\alpha$ and  $\beta_2\gamma\alpha$.}\label{Conjugation4}
			\end{center}
		\end{figure}

	\subsection{Localisations of $\catc$, $\catoc$ and $\mathcal S$}

Let \catc be the full subcategory of \catoc on those objects which are closed. Note that \,\catc contains and is closely related to \,\cat, the difference being that in \,\catc we allow morphisms to have windows. In this section we compute the localisation of $\catc$. The results obtained, along with those from previous sections, will enable us to easily compute the localisation of $\catoc$.

		\begin{theorem}
		\label{thm: localisation of catc}
			$\lcatc\defeq\catc[\catc^{-1}]$ is equivalent to $\mathbb{Z}\times\mathbb{Z}$ as a symmetric monoidal category.
		\end{theorem}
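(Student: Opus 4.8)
The plan is to follow the pattern of the proofs of Theorems~\ref{thm: localisation of cat} and~\ref{thm: localisation of catoo}, enriching the bookkeeping with one new invariant that records windows. Since $\catc$ has the same objects as $\cat$ and contains it as a subcategory, it is strongly connected at every object (disjoint unions of discs give cobordisms in both directions between any two closed $1$-manifolds), so by \cite{Q} the inclusion $\lcatcone\hookrightarrow\lcatc$ of the full subcategory on $S^1$ is an equivalence of categories, and it suffices to identify $\lcatcone\defeq\operatorname{Aut}_{\lcatc}(S^1)$. Besides the restriction of $\Theta$, the category $\catc$ carries a second symmetric monoidal functor $b\colon\catc\to\mathbb{Z}$ sending a cobordism to its number of free boundary circles; $b$ is additive under gluing and under disjoint union and is morphism inverting, so it descends to $\bar b\colon\lcatc\to\mathbb{Z}$, giving together with $\bar\Theta$ a functor $(\bar\Theta,\bar b)\colon\lcatc\to\mathbb{Z}\times\mathbb{Z}$ of symmetric monoidal categories.

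First I would check that $(\bar\Theta,\bar b)$ is surjective on $\lcatcone$: the projective plane with two discs removed and the annulus with one disc removed, regarded as endomorphisms of $S^1$, are sent to $(1,0)$ and $(1,1)$, which generate $\mathbb{Z}\times\mathbb{Z}$. For the converse I would rerun the endomorphism analysis of Theorem~\ref{thm: localisation of cat}. An element of $\operatorname{End}_{\catc}(S^1)$ still decomposes as a distinguished piece (a connected surface with two boundary circles, or two connected surfaces each with one boundary circle) together with components disjoint from the source and target; the novelty is that each component now also carries a number of windows, and the components disjoint from source and target are now arbitrary compact surfaces (closed surfaces with discs removed). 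The relations established in Steps~1--3 of the proof of Theorem~\ref{thm: localisation of cat} all come from morphisms of $\cat\subseteq\catc$, so they continue to hold in $\lcatc$; they remove the closed components and reduce the distinguished piece to a connected cobordism $S^1\to S^1$, and make the boundary maps irrelevant exactly as before. I would then add one further family of relations, obtained by the $\beta_1\gamma\alpha=\beta_2\gamma\alpha$ device of Step~3 of Theorem~\ref{thm: localisation of cat} and Step~3 of Theorem~\ref{thm: localisation of catoo}, choosing $\gamma$ now to carry free boundary, in order to rewrite an arbitrary surface with windows in terms of the two morphisms above and their inverses. This shows $\lcatcone$ is generated by those two commuting morphisms, hence is a quotient of $\mathbb{Z}\times\mathbb{Z}$.

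A surjection from a quotient of $\mathbb{Z}\times\mathbb{Z}$ onto $\mathbb{Z}\times\mathbb{Z}$ is an isomorphism (as $\mathbb{Z}\times\mathbb{Z}$ is Hopfian), so $(\bar\Theta,\bar b)$ restricts to an isomorphism $\lcatcone\cong\mathbb{Z}\times\mathbb{Z}$ and therefore $\lcatc\simeq\mathbb{Z}\times\mathbb{Z}$ as categories. For the monoidal refinement I would, exactly as in Theorem~\ref{thm: localisation of cat}, take a section $\Phi\colon\mathbb{Z}\times\mathbb{Z}\to\catc$ sending the two generators to a projective plane and a free disc (as endomorphisms of the empty $1$-manifold), observe that $\Phi$ and $\bar\Phi\defeq\psi\circ\Phi$ are based symmetric monoidal, and conclude that the pair $(\bar\Theta\times\bar b,\bar\Phi)$ is a based symmetric monoidal equivalence in the sense of Definition~A.7, whence $\lcatc\simeq\mathbb{Z}\times\mathbb{Z}$ as symmetric monoidal categories.

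I expect the main obstacle to be the new family of conjugation relations: one must exhibit explicit morphisms $\alpha,\beta_1,\beta_2,\gamma$, with $\gamma$ carrying free boundary, whose two composites pin down in the localisation a M\"obius band, an annulus, and more generally any compact surface with windows, in terms of the projective plane with two discs removed and the free disc. Once these relations are in hand, the remainder is routine bookkeeping layered on top of Theorems~\ref{thm: localisation of cat} and~\ref{thm: localisation of catoo} together with the elementary fact about quotients of $\mathbb{Z}\times\mathbb{Z}$.
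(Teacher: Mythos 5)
Your argument is essentially the paper's: it likewise reduces to $\operatorname{Aut}_{\lcatc}(S^1)$ via strong connectedness, checks (as a ``simple exercise'' extending the analysis of Theorem~\ref{thm: localisation of cat}) that every endomorphism is represented by a connected surface $\Sigma_{k,w}$ with $k$ crosscaps and $w$ windows or a composite of such and their inverses, and then detects $\mathbb{Z}\times\mathbb{Z}$ by pairing $\Theta$ with a window-counting symmetric monoidal functor $\omega$ (your $b$), the paper's choice $(\Theta-\omega,\omega)$ differing from your $(\bar\Theta,\bar b)$ only by an automorphism of $\mathbb{Z}\times\mathbb{Z}$. The extra conjugation relations you flag as the main remaining obstacle are precisely what the paper leaves to the reader, so the proposal matches it in both strategy and level of detail.
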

		
 		\begin{proof}
			$\catc$ is strongly connected at the object $S^1$, so we proceed by looking at $\mathscr{G}(\operatorname{End}_{\catc}(S^1))$. Note that morphisms in \catc are those of \cat which may have also windows. It is a simple exercise to go through the computation of $\mathscr{G}(\operatorname{End}_{\cat}(S^1))$ in the proof of Theorem 3.3  and check that every morphism in $\mathscr{G} (\operatorname{End}_{\catc}(S^1))$ can be represented by a connected surface $\Sigma_{k,w}$ with $k$ crosscaps and $w$ windows or its inverse or composition of such.

		To see that in $\lcatc$ the variables $k$ and $w$ are independent, define a functor
		\[
			\omega \colon \catc \longrightarrow \mathbb Z
		\]
		which sends any surface to the total number of its windows. Note that $\omega$ is functorial as gluing along circles does not introduce additional windows. Indeed, $\omega$ is a functor of symmetric monoidal categories. The product functor
		\[
			(\Theta - \omega, \omega)\colon \catc \rightarrow \mathbb Z \times \mathbb Z
		\]
		assigns to the representative $\Sigma _{k,w}$ the pair $(k,w)$. It factors through $\lcatc$ where it defines thus an equivalence of symmetric monoidal categories.
		\qedhere
		\end{proof}
		
		We now turn our attention to the whole category \catoc.
		\begin{theorem}
		 \label{thm: localisation of catoc}
                $\lcatoc \defeq \catoc[\catoc ^{-1}]$ is equivalent to $\mathbb Z$ as a symmetric monoidal category. 
        \end{theorem}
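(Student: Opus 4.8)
The plan is to imitate the proofs of Theorems~\ref{thm: localisation of cat} and~\ref{thm: localisation of catoo}, but by now almost all of the geometry has been done. Since the Euler characteristic functor $\Theta\colon\catoc\rightarrow\mathbb{Z}$ of Proposition~2.4 is morphism inverting, it factors through the localisation as a strict, based symmetric monoidal functor $\bar\Theta\colon\lcatoc\rightarrow\mathbb{Z}$, and $\bar\Theta$ is onto because the M\"obius band, regarded as an endomorphism of the interval $I=(0,1)$, has $\Theta=1$. The category $\catoc$ is strongly connected at $I$, so $\lcatoc$ is a connected groupoid and hence equivalent as a category to the one--object groupoid $\operatorname{Aut}_{\lcatoc}(I)$ \cite{Q}; it therefore suffices to show that $\bar\Theta$ restricts to an isomorphism $\operatorname{Aut}_{\lcatoc}(I)\cong\mathbb{Z}$, and then to upgrade the equivalence to a symmetric monoidal one.

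First I would split the endomorphism monoid. A surface representing a morphism $I\rightarrow I$ in $\catoc$ is the disjoint union of its closed components, which constitute an element of $\cato$, and of its remaining components --- those meeting the source or target interval together with those having only free boundary --- which contain no closed component and hence constitute a morphism $I\rightarrow I$ in $\catoo$. Gluing along $I$ respects this decomposition, so $\operatorname{End}_{\catoc}(I)\cong\operatorname{End}_{\catoo}(I)\times\cato$ as monoids, whence
\[
\mathscr{G}\big(\operatorname{End}_{\catoc}(I)\big)\;\cong\;\mathscr{G}\big(\operatorname{End}_{\catoo}(I)\big)\times\mathscr{G}(\cato)\;=\;\mathscr{G}\big(\operatorname{End}_{\catoo}(I)\big)\times\mathbb{Z}^{\infty}
\]
by Proposition~\ref{localisationofcato}. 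By Proposition~\ref{stronglyconnected} the canonical map $\mathscr{G}(\operatorname{End}_{\catoc}(I))\rightarrow\operatorname{Aut}_{\lcatoc}(I)$ is onto, so $\operatorname{Aut}_{\lcatoc}(I)$ is generated by the images of the two factors. Because the inclusions $\catoo\hookrightarrow\catoc$, $\cat\hookrightarrow\catoc$ and $\catc\hookrightarrow\catoc$ induce functors on localisations, every relation valid in $\lcatoo$, $\lcat$ or $\lcatc$ remains valid in $\lcatoc$. In particular the image of $\mathscr{G}(\operatorname{End}_{\catoo}(I))$ factors through $\operatorname{Aut}_{\lcatoo}(I)=\mathbb{Z}$ (Theorem~\ref{thm: localisation of catoo}), a cyclic group generated by the class $M$ of the M\"obius band viewed as a morphism $I\rightarrow I$, with $\bar\Theta(M)=1$; and the image of $\mathbb{Z}^{\infty}=\mathscr{G}(\cato)$ factors through $\operatorname{Aut}_{\lcat}(\emptyset)=\mathbb{Z}$ (Theorem~\ref{thm: localisation of cat}), a cyclic group generated by the class $P$ of the closed real projective plane, with $\bar\Theta(P)=-1$.

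The crux is to show that $P$ lies in the cyclic subgroup generated by $M$, so that $\operatorname{Aut}_{\lcatoc}(I)$ is itself cyclic. Here I invoke Theorem~\ref{thm: localisation of catc}: the pair $(\Theta-\omega,\omega)$ is an equivalence $\lcatc\simeq\mathbb{Z}\times\mathbb{Z}$, hence two morphisms of $\catc$ between the same objects agree in $\lcatc$ as soon as their $(\Theta-\omega,\omega)$--values agree. Now $P$, the M\"obius band $N$ viewed as a morphism $\emptyset\rightarrow\emptyset$ (whose single boundary circle is a free window, so that $N$ is indeed a morphism of $\catc$), and the free disc $D$ satisfy $(\Theta-\omega,\omega)(P\amalg N)=(-1,0)+(-1,1)=(-2,1)=(\Theta-\omega,\omega)(D)$, so $P\amalg N=D$ in $\lcatc$, and this identity pushes forward to $\lcatoc$. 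On the other hand $N$ and $D$ are also morphisms of $\catoo$, and there --- after the standard transport to $\operatorname{Aut}_{\lcatoo}(I)$ --- one has $N=\operatorname{Id}$ (it has $\bar\Theta=0$) and $D=M^{-1}$, the relation "the free disc is inverse to the M\"obius band" established in Step~3 of the proof of Theorem~\ref{thm: localisation of catoo}; these relations too push forward to $\lcatoc$. Combining them, $P=P\amalg N=D=M^{-1}$ in $\operatorname{Aut}_{\lcatoc}(I)$. Consequently $\operatorname{Aut}_{\lcatoc}(I)$ is cyclic, generated by $M$, and $\bar\Theta$ maps it onto $\mathbb{Z}$ carrying $M$ to $1$; a surjection of a cyclic group onto $\mathbb{Z}$ that hits a generator is an isomorphism, so $\operatorname{Aut}_{\lcatoc}(I)\cong\mathbb{Z}$ and $\lcatoc$ is equivalent to $\mathbb{Z}$ as a category.

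Finally, exactly as in the last paragraph of the proof of Theorem~\ref{thm: localisation of cat}, I would promote this to a symmetric monoidal equivalence: $\bar\Theta$ is a strict monoidal functor of based, strict symmetric monoidal categories, and letting $\bar\Phi\colon\mathbb{Z}\rightarrow\lcatoc$ send $k$ to the $k$--th power of a fixed closed surface of Euler characteristic $-1$ gives a based symmetric monoidal functor with $\bar\Theta\bar\Phi=\operatorname{Id}_{\mathbb{Z}}$, so that $(\bar\Theta,\bar\Phi)$ is a based symmetric monoidal equivalence in the sense of Definition~A.7. The only delicate point is the one flagged above --- one must make sure the auxiliary surfaces carrying free boundary ($N$ and $D$) genuinely lie in $\catc$, i.e.\ that their free boundary consists of circles rather than arcs, so that $P\amalg N=D$ may legitimately be read off from the computation of $\lcatc$. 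Past that, the remaining work is purely organisational: keeping track of which earlier localisation each relation comes from and checking it survives the passage to $\lcatoc$, which is exactly why the computation is "easy" once Theorems~\ref{thm: localisation of cat},~\ref{thm: localisation of catoo} and~\ref{thm: localisation of catc} are in hand.
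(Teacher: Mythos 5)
Your proof is correct, and it rests on the same pillars as the paper's --- the prior computations of \lcatoo, \lcat and \lcatc, strong connectivity, and surjectivity of $\bar\Theta$ --- but it is organised around a different base object and a different mechanism for the one nontrivial identification. The paper works at the circle: since $\operatorname{End}_{\catoc}(S^1)=\operatorname{End}_{\catc}(S^1)$, Theorem 3.10 exhibits $\operatorname{Aut}_{\lcatoc}(S^1)$ as a quotient of $\mathbb{Z}\times\mathbb{Z}$ generated by the crosscap surface $\Sigma_k$ and the windowed cylinder $C_w$; conjugating by the cylinder $I\to S^1$ (whose inverse is the cylinder $S^1\to I$ union an inverse free disc) transports everything into $\lcatoo_1\cong\mathbb{Z}$, where $\Sigma_k=C_k$ because both have $\Theta=k$, so the two generators collapse to one. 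You instead work at the interval, split $\operatorname{End}_{\catoc}(I)\cong\operatorname{End}_{\catoo}(I)\times\cato$, and collapse the closed-surface generator $P$ onto the M\"obius generator $M$ via the explicit relation $P\amalg N=D$, obtained from faithfulness of $(\Theta-\omega,\omega)$ on $\operatorname{Aut}_{\lcatc}(\emptyset)$ and then combined with $N=\operatorname{Id}$ and $D=M^{-1}$ in \lcatoo. Both arguments exploit the same phenomenon --- once free discs are invertible, a crosscap and a window become interchangeable --- and each step of yours checks out: the product decomposition of $\operatorname{End}_{\catoc}(I)$ is legitimate (gluing along an interval cannot create closed components), the surfaces $N$ and $D$ do lie in \catc as you note, and the final cyclic-group argument and the monoidal upgrade are exactly parallel to the last paragraph of the proof of Theorem 3.6. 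The paper's version is terser because the conjugation trick moves the whole generating set into $\lcatoo_1$ at once; yours has the merit of isolating the key relation $P=M^{-1}$ explicitly.
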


		\begin{proof}
			We proceed by looking at $\mathscr{G}(\operatorname{End}_{\catoc}(S^1))$. Endomorphisms of the circle in \catoc take the same form as those in \catc. Thus we deduce that  $\lcatocone=\mathbb{Z}\times\mathbb{Z}$ modulo relations. Here $(s,t)$ denotes the object consisting of $s$ circles and $t$ intervals. In particular, a morphism in \lcatocone  can be represented by a connected surface $\Sigma _{k}$ with $k$ crosscaps, a cylinder $C_w$ with $w$ windows,  their inverses or composition of such things. 

			As \catoo is a subcategory of \catoc, the relations in \lcatoo must also hold in \lcatoc. In \lcatoc the inverse of the cylinder considered as a morphism from the interval to the circle is given by the cylinder considered as a morphism from the circle to the interval union a free disc. Conjugation by this element thus defines an isomorphism $\lcatocone \rightarrow \lcatoc _{(0,1)}$.
		
			In particular, using Theorem 3.8, $\Sigma _{k} = C_k \in \mathbb Z \simeq \lcatoo_1$ and hence they are equal in $ \lcatoc _{(0,1)}$. Thus $\lcatoc_{(1,0)} \simeq \lcatoc_{(0,1)} \simeq \mathbb Z$.
		\end{proof} 

		The analogue of this  theorem in the orientable case was considered in \cite {D}. We give an alternative proof here.

		\begin{theorem}
			The localisation of $\mathcal S$ is equivalent to $\mathbb Z$ as a symmetric monoidal category.
		\end {theorem}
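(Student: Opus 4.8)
The plan is to mimic the proof of Theorem~3.10 (the localisation of $\catoc$), restricting throughout to orientable surfaces and orientation-preserving boundary maps, and to combine the three already-established orientable results: $\mathcal S\cap\cat$ localises to $\mathbb Z$ (Theorem~3.7), $\mathcal S\cap\catoo$ localises to $\mathbb Z$ (Theorem~3.9), and the functors $\Theta$ and $\omega$ (number of windows) restrict to $\mathcal S$. Since $\mathcal S$ is strongly connected at $S^1$, by Proposition~\ref{stronglyconnected} it suffices to compute $\operatorname{Aut}_{\mathcal S[\mathcal S^{-1}]}(S^1)$, so the inclusion of the automorphism group of $S^1$ is an equivalence of categories.

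First I would describe $\operatorname{End}_{\mathcal S}(S^1)$: disregarding closed components and components with entirely free boundary, an endomorphism of $S^1$ in $\mathcal S$ is a connected orientable surface with two boundary circles (genus $g$, $w$ windows), or a pair of connected orientable surfaces each with one boundary circle, or the "whistle"-type surface with one boundary circle. The closed and free-boundary components contribute a big free abelian monoid of closed surfaces and free discs-with-handles-and-windows. Then, exactly as in Step~2 of Theorem~3.6 and Theorem~3.8, conjugation by a disc $\alpha\colon 0\to1$ is an isomorphism onto a group of automorphisms of the empty manifold, and injectivity of $c_\alpha$ together with "homeomorphic images" kills all the extra components: every element of $\operatorname{Aut}_{\mathcal S[\mathcal S^{-1}]}(S^1)$ is represented by a connected orientable surface of some genus $g$ with some number $w$ of windows, or its inverse. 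So $\operatorname{Aut}_{\mathcal S[\mathcal S^{-1}]}(S^1)$ is a quotient of $\mathbb Z\times\mathbb Z$, generated by the once-handled cylinder (torus with two discs removed) and by the annulus-with-an-extra-window.

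Next I would import the relation from $\mathcal S\cap\catoo$: since $\mathcal S\cap\catoo\subset\mathcal S$, the relations holding in $\mathscr (\mathcal S\cap\catoo)[(\mathcal S\cap\catoo)^{-1}]$ hold in $\mathcal S[\mathcal S^{-1}]$. By the proof of Theorem~3.9 (the first part of Step~3 there) the window-adding annulus, when viewed through conjugation, is equivalent to a sphere or free disc — concretely, in the localisation of $\mathcal S$ the generator coming from windows is identified with minus the sphere and hence can be expressed in terms of the genus generator, collapsing the $\mathbb Z\times\mathbb Z$ down to a single $\mathbb Z$. (Equivalently: conjugate the whistle-type surface by a cylinder $I\to S^1$, whose inverse in the localisation is the reverse cylinder union a free disc, as in the proof of Theorem~3.11, to transport the $\mathcal S\cap\catoo$ relations into $\mathcal S$.) Finally, Step~4 goes through verbatim: $\bar\Theta\colon\mathcal S[\mathcal S^{-1}]\to\mathbb Z$ is surjective — the torus with two discs removed has $\Theta=1$ — and since $\operatorname{Aut}_{\mathcal S[\mathcal S^{-1}]}(S^1)$ is already a quotient of $\mathbb Z$ mapping onto $\mathbb Z$ under $\bar\Theta$, it must be $\mathbb Z$ exactly. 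Together with the monoidal refinement (the pair $(\bar\Theta,\bar\Phi)$ with $\Phi\colon\mathbb Z\to\mathcal S$ sending $k$ to $k$ copies of the torus-with-two-discs-removed on the empty object) this yields a based symmetric monoidal equivalence $\mathcal S[\mathcal S^{-1}]\simeq\mathbb Z$.

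The main obstacle is making sure the bookkeeping of boundary maps is genuinely harmless in the orientable setting. Unlike in $\catoc$, where $C_r^{S^1}\simeq 1_{S^1}$ already in the localisation and one simply forgets directions, in $\mathcal S$ the boundary maps are orientation-preserving by definition, so the "type" subtlety does not arise — every orientable surface with oriented boundary is determined by genus and windows — but one must verify that the conjugation arguments can be carried out using only orientation-preserving gluings, i.e.\ that the relevant $\alpha$, $\beta_1$, $\beta_2$, $\gamma$ in Figures~3.4--3.6 all live in $\mathcal S$. This is exactly what the proof of Theorem~3.9 checks, so the remaining work is routine; the only real content is observing that the window generator is killed, which is inherited from Theorem~3.9 via the inclusion $\mathcal S\cap\catoo\hookrightarrow\mathcal S$.
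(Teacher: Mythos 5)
Your overall strategy is the paper's: reduce $\operatorname{Aut}(S^1)$ in the localisation to a quotient of $\mathbb Z\times\mathbb Z$ generated by the genus-one surface $\Sigma_1$ (torus with two discs removed) and the one-window cylinder $C_1$ (the paper gets this by adapting Theorem 3.10 to $\mathcal S\cap\catc$), and then import a relation from $\mathcal S\cap\catoo$ by conjugating with cylinders between the interval and the circle. But the relation you import is wrong, and wrong in a way that would sink the proof. You claim the window generator is ``identified with minus the sphere and hence can be expressed in terms of the genus generator,'' i.e.\ essentially killed. This cannot happen: $\bar\Theta(C_1)=-\chi(C_1)=1\neq 0$, so $C_1$ is nontrivial in $\mathcal S[\mathcal S^{-1}]$. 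The relation the paper actually extracts is the opposite one: transport endomorphisms of the circle to endomorphisms of the interval in $\mathcal S\cap\catoo$, where by Theorem 3.9 $\bar\Theta$ is injective on automorphism groups of the localisation; since $\Theta(\Sigma_1)=2=\Theta(C_2)$, this forces $\Sigma_1=C_2$ in the localisation. So it is the \emph{genus} generator that becomes redundant (it is twice the window generator), and $C_1$ is what generates the resulting copy of $\mathbb Z$.

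This error propagates into your Step 4. The torus with two discs removed has Euler characteristic $-2$, hence $\Theta=2$, not $1$; by itself it only hits $2\mathbb Z$. Surjectivity of $\bar\Theta$ on $\operatorname{Aut}(S^1)$ needs $C_1$, precisely the generator your claimed relation would have destroyed. Likewise your monoidal section $\Phi$ does not typecheck: the torus with two discs removed is not an endomorphism of the empty $1$-manifold, and no closed orientable surface can serve in its place since all have even Euler characteristic and hence even $\Theta$; one must instead use something like $k$ copies of the free disc, which is an endomorphism of the empty object in $\mathcal S$ with $\Theta=-1$. The skeleton of your argument matches the paper's, but the specific relation, the surviving generator, and the section all need to be corrected as above.
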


		\begin{proof}
			The proof of Theorem 3.10 can be adapted to prove that the localisation of $\mathcal S \cap \bar \cat$ is equivalent to $\mathbb Z \times \mathbb Z$. In the  endomorphisms of the circle the two generators of $\mathbb Z \times \mathbb Z$ can be represented by the connected surfaces $\Sigma_1$ of genus one and by the  cylinder $C_1$ with one window. We can map the endomorphisms of the circle to the endomorphisms of the interval in $\mathcal S \cap \catoo$ by precomposing and postcomposing with cylinders considered as morphisms from the interval to the circle and vice versa. By Theorem 3.9, $\Theta$ induces an injection on the automorphism group of any object in the localisation of $\mathcal S \cap \catoo$. But $\Theta ( \Sigma _1) = 2 = \Theta (C_2)$ and hence, in the localisation, $\Sigma_1 = C_2$ and $C_1$  generates the automorphism group of the circle.
		\end{proof}
		
\section {Invertible TFTs}

	The computation of the localisations of the cobordism categories allows us to completely determine the invertible field theories defined on the subcategories of $\catoc$. We will focus on the case $\cat$ for more detailed descriptions.   
%

	\subsection{Classification of functors $\cat\rightarrow\mathbb{Z}$}
	\label{classificationtozclosed}
		We have seen that $\Theta$ plays an important role in analysing the cobordism categories. Here we describe all functors $F\colon\cat \rightarrow \mathbb Z$. Note first that $F$ is morphism inverting and hence factors through the localisation \lcat. It is determined by its image on:
		\begin{itemize}
			\item a generator of $\lcat_0=\mathbb{Z}$ 
			\item a set of connecting morphisms $p_k\colon 0\rightarrow k$, one for each object $k \geq 1$ in \cat.
		\end{itemize}
		 We take  $p_k$  to be the union of $k$ discs each as a cobordism from the empty set to $S^1$.
		By Theorem 3.6, the projective plane $P^2$ is  a generator in $\lcat _0$. Thus $F$ is determined by the  sequence of integers $\{ b_0, b_1, \dots \}$ where $F(P^2) = b_0$ and $F(p_k) = b_k$. Conversely, given such a sequence of integers, we can define a functor. For a morphism $\Sigma\colon k\rightarrow k'$ in \cat, put
		\[
			F(\Sigma) \defeq
  			\begin{cases}
				b_{k'}-b_k-b_0\Theta(\Sigma) & \mbox{if } k,k'\geq 1 \\
				b_{k'}-b_0\Theta(\Sigma) & \mbox{if } k=0,\,k'\geq 1\\
				-b_k-b_0\Theta(\Sigma) & \mbox{if }k\geq 1,\,k'=0\\
				-b_0\Theta(\Sigma) & \mbox{if } k=k'=0.
			\end{cases}
		\]

		If we set $a_0\defeq b_0$ and $a_k\defeq\frac{b_k}{k}$ for $k\geq 1$, then $F$ can also be expressed  by the formula
		\[
			F(\Sigma) = a_mm-a_nn-a_0\Theta(\Sigma).
		\]
		$F$ is symmetric strict monoidal if and only if all the $a_i$ are equal for $i\geq 1$. To see this, note that  it is monoidal if and only if $F(p_{k+k'})=F(p_k)+F(p_{k'})$ for all $k$, $k'$. This is the case if and only if $F(p_k)=k F(p_1)$, in other words if and only if $ka_k=ka_1$ for all $k$. We thus have proved the following result.

	 	\begin{proposition}
			Any functor $F\colon \cat\rightarrow\mathbb{Z}$ is determined by where it sends the projective plane  $P^2$ and the connecting morphisms $p_k$. Conversely, for any set of integers $\{b_0,b_1,b_2,\ldots\}$ there exists a functor sending the projective plane to $b_0$, and sending $p_k$ to $b_k$ for all $k\geq 1$. It is symmetric strict monoidal if and only if $b_k = k \, b_1$ for all $k \geq1$.
		\qed
		\end{proposition}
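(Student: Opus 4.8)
\medskip
\noindent\emph{Proof proposal.} The plan is to transfer everything to the localisation, where Theorem~3.6 makes the structure transparent, and then to read off the answer by elementary bookkeeping with functors out of a connected groupoid. First I would note that $\mathbb{Z}$ is a groupoid, so $F$ is morphism inverting; by the universal property of localisation it factors uniquely as $F=\bar F\circ\psi$ through the canonical projection $\psi\colon\cat\to\lcat$, and classifying the $F$ is the same as classifying functors $\bar F\colon\lcat\to\mathbb{Z}$. By Theorem~3.6, $\lcat$ is a connected groupoid whose objects are those of $\cat$ --- the natural numbers $k\ge 0$ --- with $\lcato\cong\mathbb{Z}$; indeed $\bar\Theta$ restricts to an isomorphism $\lcato\cong\mathbb{Z}$, and since $\bar\Theta(P^2)=-\chi(P^2)=-1$ the projective plane generates $\lcato$.

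Next I would invoke the standard description of a functor out of a connected groupoid. Take $0$ as basepoint and, for each $k\ge 1$, the isomorphism $\psi(p_k)\colon 0\to k$ (with $\psi(p_0)=\operatorname{id}_0$) as a chosen spanning family; then every morphism $\Sigma\colon k\to k'$ of $\lcat$ factors uniquely as $\psi(p_{k'})\circ g\circ\psi(p_k)^{-1}$ with $g\in\lcato$. Consequently $\bar F$ is determined by the group homomorphism $\bar F|_{\lcato}\colon\lcato\to\mathbb{Z}$ --- equivalently by the single integer $b_0\defeq F(P^2)$, as $P^2$ generates $\lcato$ --- together with the integers $b_k\defeq F(p_k)$ for $k\ge 1$. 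Conversely, given any sequence $(b_0,b_1,b_2,\dots)$, the rule $\bar F(\Sigma)\defeq F(p_{k'})+\phi(g)-F(p_k)$, with $\phi\colon\lcato\to\mathbb{Z}$ the homomorphism sending $P^2$ to $b_0$ and with the term attached to the object $0$ read as $F(\operatorname{id}_0)=0$, defines a functor: functoriality is a routine telescoping check, since both $\phi$ and $\Sigma\mapsto g$ respect composition. Finally, because $\Theta(p_k)=k-\chi(k\ \text{discs})=0$ one has $\bar\Theta(g)=\Theta(\Sigma)$, hence $g=(P^2)^{-\Theta(\Sigma)}$ in $\lcato$ and $\phi(g)=-b_0\,\Theta(\Sigma)$; distinguishing the cases $k=0$ and $k'=0$ this is exactly the displayed four-case formula for $F(\Sigma)$.

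For the monoidal claim I would observe that $\cat$ and $\mathbb{Z}$ are strict symmetric monoidal, $\mathbb{Z}$ with trivial braiding, so $F$ is symmetric strict monoidal precisely when $F(\Sigma\amalg\Sigma')=F(\Sigma)+F(\Sigma')$ for all morphisms. Applied to $p_k\amalg p_{k'}$, which is $p_{k+k'}$ up to the symmetry permuting the boundary circles (on which $F$ vanishes), this forces $b_{k+k'}=b_k+b_{k'}$, i.e.\ $b_k=k\,b_1$ for all $k\ge 1$. Conversely, if $b_k=k\,b_1$ then for $\Sigma\colon m\to n$ the formula collapses to $F(\Sigma)=b_1 n-b_1 m-b_0\,\Theta(\Sigma)$, a fixed $\mathbb{Z}$-linear combination of $\Sigma\mapsto n$, $\Sigma\mapsto m$ and $\Theta$, each additive under $\amalg$ and vanishing on braidings, so $F$ is strict symmetric monoidal. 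The essential content is all in Theorem~3.6; the rest is bookkeeping, and the only mildly delicate points are keeping the degenerate object $0$ straight in the formula and checking --- immediate, given the trivial braiding and $F(\operatorname{id}_0)=0$ --- that the braiding and unit axioms impose nothing beyond $b_k=k\,b_1$.
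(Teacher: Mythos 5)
Your proposal is correct and follows essentially the same route as the paper: factor through the localisation, use that $\bar\Theta$ identifies $\lcat_0$ with $\mathbb{Z}$ generated by $P^2$ together with the connecting morphisms $p_k$ to determine $\bar F$, recover the explicit formula $F(\Sigma)=b_{k'}-b_k-b_0\Theta(\Sigma)$, and characterise strict monoidality by $F(p_{k+k'})=F(p_k)+F(p_{k'})$. Your version merely makes explicit the standard ``functor out of a connected groupoid'' decomposition that the paper leaves implicit, which gives a slightly cleaner verification that the formula is functorial.
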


		Note that Proposition 4.1 depends on our particular (skeletal) choice for the cobordism category \cat. This is somewhat  unsatisfactory. We rectify the situation by considering the functor category $[\cat , \mathbb Z]$ up to equivalence: the equivalence class of the functor category does not change when replacing the source or target category by an equivalent category. Hence by Theorem 3.6 we have the following equivalences of categories:
		\[
			[\cat , \mathbb {Z} ] = [\lcat, \mathbb {Z}] \simeq [ \mathbb {Z}, \mathbb {Z} ] = \mathbb {Z}.
		\]
		The final $\mathbb {Z}$ is the discrete category with objects $\mathbb {Z}$ and only identity morphisms. Multiplication of integers corresponds to composition of endo functors. Theorem 3.6 (in conjunction with Proposition A.8) also allows a similar computation for the category of based symmetric monoidal functors, and again we find
		\[
			\operatorname {SymmMon} [ \cat , \mathbb {Z}]_* \simeq \mathbb {Z}.
		\]
Completely analogous arguments give us the same result for \catoo, \catoc and $\mathcal S$ and 
		\[ 
			\operatorname {SymmMon} [ \catc , \mathbb {Z}]_* \simeq \mathbb Z \times \mathbb Z.
		\]
%
%

	\subsection{Classification of invertible TFTs}
	\label{classificationclosed}
		Following Atiyah \cite {A}, topological field theories on a cobordism category are the symmetric monoidal functors to the category of vector spaces with monoidal product given by the tensor product. With reference to appendix A for the definiton of the category of pointed symmetric monoidal functors, we make the following definitions.
	
		\begin{definition}
			Let $\mathscr C$ be a subcategory of $\catoc$. The category of topological field theories on $\mathscr C$ is defined as 
			\[
				\mathscr C -\text{TFT} \defeq \operatorname{SymmMon} [ \mathscr C, \text{Vect}_{\mathbb{C}}]_*.
			\]
			Similarly the category of invertible topological field theories on $\mathscr C$ is defined as
			\[
				\mathscr C-\text{TFT}^\times \defeq \operatorname{SymmMon}[\mathscr C, Pic ( \text{Vect}_{\mathbb{C}})]_*.
			\]
			Recall from Appendix A.3 that the Picard category  $Pic (\mathscr D)$ associated to a symmetric monoidal category $\mathscr D$ has objects that are invertible with repsect to the monoidal product and all invertible morphisms between them.
		\end{definition}

		\begin{theorem}
		\label{thm: main closed}
			For $\mathscr C = \cat, \catoo, \catoc, \mathcal S \cap \cat, \mathcal S \cap \catoo, \mathcal S $ the category $\mathscr C -\text{TFT} ^\times$ is equivalent to the discrete category $\mathbb{C}^\times$ of non-zero complex numbers. For $\mathscr C = \catc$ it is equivalent to $\mathbb{C}^\times \times \mathbb{C}^\times$.
		\end{theorem}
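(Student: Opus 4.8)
The plan is to reduce the statement to the already-established computations of the localisations in Section~3 together with the equivalence-invariance results of Appendix~A. Fix a subcategory $\mathscr{C}\subset\catoc$ from the list. Since a field theory $F\colon\mathscr{C}\to\operatorname{Vect}_{\mathbb{C}}$ is invertible precisely when every morphism is sent to an invertible linear map, such an $F$ factors through the localisation $\mathscr{C}[\mathscr{C}^{-1}]$ and lands in the Picard groupoid $\operatorname{Pic}(\operatorname{Vect}_{\mathbb{C}})$. Hence
\[
	\mathscr{C}-\text{TFT}^\times = \operatorname{SymmMon}[\mathscr{C}[\mathscr{C}^{-1}],\operatorname{Pic}(\operatorname{Vect}_{\mathbb{C}})]_*,
\]
and by the results of Section~3 this localisation is equivalent, as a based symmetric monoidal category, to $\mathbb{Z}$ (resp.\ $\mathbb{Z}\times\mathbb{Z}$ when $\mathscr{C}=\catc$).

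The next step is to invoke the appendix. By Proposition~A.8 (the based symmetric monoidal analogue of the fact that equivalent source categories give equivalent functor categories, restricted to the invertible setting), replacing $\mathscr{C}[\mathscr{C}^{-1}]$ by the equivalent based symmetric monoidal category $\mathbb{Z}$ yields an equivalence
\[
	\operatorname{SymmMon}[\mathscr{C}[\mathscr{C}^{-1}],\operatorname{Pic}(\operatorname{Vect}_{\mathbb{C}})]_* \simeq \operatorname{SymmMon}[\mathbb{Z},\operatorname{Pic}(\operatorname{Vect}_{\mathbb{C}})]_*,
\]
and similarly with $\mathbb{Z}\times\mathbb{Z}$ in the case of $\catc$. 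So it remains to identify $\operatorname{SymmMon}[\mathbb{Z},\operatorname{Pic}(\operatorname{Vect}_{\mathbb{C}})]_*$. Here $\mathbb{Z}$ is the one-object symmetric monoidal category with morphism monoid $(\mathbb{Z},+)$, so a based symmetric monoidal functor out of it is the choice of an invertible object of $\operatorname{Pic}(\operatorname{Vect}_{\mathbb{C}})$ — which up to the contractible choice of a basing is the line $\mathbb{C}$ itself — together with a symmetric monoidal automorphism of it, i.e.\ an element of $\operatorname{Aut}(\mathbb{C})=\mathbb{C}^\times$. A morphism between two such functors is a monoidal natural isomorphism; one checks that the only such, between functors $\lambda$ and $\lambda'$, exists iff $\lambda=\lambda'$, so the category is the discrete category $\mathbb{C}^\times$. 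For $\mathbb{Z}\times\mathbb{Z}$ one gets a pair of such scalars, giving $\mathbb{C}^\times\times\mathbb{C}^\times$.

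The main obstacle is not any single computation but the bookkeeping around the base points and the passage to Picard categories: one must be careful that $\operatorname{SymmMon}[-,\operatorname{Vect}_{\mathbb{C}}]_*$ restricted to invertible theories genuinely agrees with $\operatorname{SymmMon}[-,\operatorname{Pic}(\operatorname{Vect}_{\mathbb{C}})]_*$ (this is essentially the definition, but it should be remarked on), and that Proposition~A.8 applies to $\operatorname{Pic}(\operatorname{Vect}_{\mathbb{C}})$ as target — which it does, since $\operatorname{Pic}(\operatorname{Vect}_{\mathbb{C}})$ is itself a based symmetric monoidal category and the appendix's invariance statement is symmetric in source and target. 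Once these formalities are in place, the computation of $\operatorname{SymmMon}[\mathbb{Z},\operatorname{Pic}(\operatorname{Vect}_{\mathbb{C}})]_*\simeq\mathbb{C}^\times$ is the routine final step, and the corresponding statements for all seven choices of $\mathscr{C}$ follow uniformly from the list of localisation theorems~3.6--3.12.
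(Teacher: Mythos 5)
Your proposal is correct and follows essentially the same route as the paper: unwind the definition, factor invertible theories through the localisation, apply the Section~3 computations together with Proposition~A.8 to replace source (and target) by $\mathbb{Z}$ (resp.\ $\mathbb{Z}\times\mathbb{Z}$) and $\mathbb{C}^\times$, and then observe that based symmetric monoidal functors $\mathbb{Z}\to\mathbb{C}^\times$ form the discrete category $\mathbb{C}^\times$. The only cosmetic difference is that the paper first passes to the skeleton $Pic(\operatorname{Vect}_{\mathbb{C}})\simeq\mathbb{C}^\times$ explicitly before computing the functor category, whereas you argue directly with the Picard category.
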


		\begin{proof}
			We consider only the case $\mathscr C = \cat$. The other cases are similar. 
We have the following equivalences of categories:			
			\begin{align*}
				\cat -\text{TFT} ^\times 
				&= \operatorname{SymmMon}[\cat, Pic ( \text{Vect}_{\mathbb{C}})]_* \\
				& = \operatorname{SymmMon} [\lcat, Pic ( \text{Vect}_{\mathbb{C}})]_* \\
				& \simeq \operatorname{SymmMon} [\mathbb {Z}, \mathbb {C} ^\times]  \\
				& = [\mathbb {Z}, \mathbb {C} ^\times] = \mathbb {C} ^\times.
	 		\end{align*}
			The first equality holds by definition. The second equality holds as any invertible functor factors uniquely through the localisation of the source category. By Theorem 3.6, $\lcat \simeq \mathbb{Z}$ and by the example following definition A.12, $ Pic ( \text{Vect}_{\mathbb{C}}) \simeq {\mathbb {C}} ^\times$. The third equivalence therefore follows by an application of Proposition A.8. Functors of abelian groups are symmetric monoidal and are group homomorphisms. Any group homomorphism from $\mathbb Z$ is determined by its image on 1. The latter identity is thus to be interpreted as an identity of sets, or equivalently, of categories with only identity morphisms. \qedhere
		\end{proof}

		In a more hands on approach, we now describe explicitly the category of symmetric monoidal functors from $\cat$ to $\mathbb {C}^\times = Pic (\text{Vect} _{\mathbb{C}}^s)$, the Picard category of the skeleton of the category of complex vector spaces. 

		\begin{theorem}
			Every functor $\cat\rightarrow\mathbb{C}^\times$ is of the form $F^{\boldsymbol\mu}$ for some unique sequence ${\boldsymbol\mu}=(\mu_i)_{i\in\mathbb{N}}$  of non-zero complex numbers with $\mu_0=F^{\boldsymbol\mu}(P^2)$ and $\mu_k^k=F^{\boldsymbol\mu}(p_k)$ for $k\geq 1$. $F^{\boldsymbol\mu}$ sends all objects of \cat to the only object of $\mathbb{C}^\times$, and sends a morphism $\Sigma\colon n\rightarrow m$ in \cat to
			\[
				F^{\boldsymbol\mu}(\Sigma)\defeq\mu_m^m\mu_n^{-n}\mu_0^{-\Theta(\Sigma)}.
			\]
			Furthermore, any such functor can be made into a symmetric monoidal functor in a unique way, and there is a unique natural isomorphism between any two such functors if and only if they take the same value on $P^2$. 
		\end{theorem}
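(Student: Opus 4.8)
The plan is to leverage the classification of functors $\cat\to\mathbb{Z}$ from Proposition 4.1 together with the identification $\mathbb{C}^\times = Pic(\vects)$ as a group (under tensor product, with only invertible morphisms). First I would observe that a functor $F\colon\cat\to\mathbb{C}^\times$ is necessarily morphism inverting (since every endomorphism in $\mathbb{C}^\times$ is invertible), so it factors through $\lcat$, and by Theorem 3.6 it is determined by its value on $P^2$ — a generator of $\lcat_0=\mathbb{Z}$ — and on the connecting morphisms $p_k\colon 0\to k$. Writing $\mu_0\defeq F(P^2)$ and picking $\mu_k$ to be any $k$-th root of $F(p_k)$, the formula $F^{\boldsymbol\mu}(\Sigma)\defeq\mu_m^m\mu_n^{-n}\mu_0^{-\Theta(\Sigma)}$ for $\Sigma\colon n\to m$ is forced on morphisms with $n,m\geq 1$, and similarly in the degenerate cases where $n$ or $m$ vanishes (matching the case split in Proposition 4.1 with $b_0=\mu_0$, $b_k=F(p_k)$). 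Functoriality is then checked exactly as in Section 4.1, using additivity of $\Theta$ (Proposition 2.4) and the fact that composition of connecting morphisms with surfaces shifts $\Theta$ additively; I would not grind through this since it is the multiplicative shadow of the verification already done for $\cat\to\mathbb{Z}$.

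Next, for the uniqueness of $\boldsymbol\mu$: the value $\mu_0=F^{\boldsymbol\mu}(P^2)$ is determined by $F^{\boldsymbol\mu}$, and for each $k\geq 1$ the value $\mu_k^k=F^{\boldsymbol\mu}(p_k)$ is determined, but $\mu_k$ itself is only determined up to a $k$-th root of unity — so strictly the sequence $\boldsymbol\mu$ is \emph{not} unique unless one agrees (as the statement implicitly does) to record the data as the pair $(\mu_0,(\mu_k^k)_k)$, equivalently to regard $\boldsymbol\mu$ as a choice-section of this data. I would phrase this carefully: $F^{\boldsymbol\mu}$ determines, and is determined by, the sequence $(F^{\boldsymbol\mu}(P^2), F^{\boldsymbol\mu}(p_1), F^{\boldsymbol\mu}(p_2),\dots)$, and any choice of roots $\mu_k$ yields the same functor; this is the content of the ``unique sequence'' clause read in the spirit of Proposition 4.1.

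Then I turn to the symmetric monoidal refinement. A monoidal structure on $F^{\boldsymbol\mu}$ requires coherence isomorphisms $F^{\boldsymbol\mu}(k)\otimes F^{\boldsymbol\mu}(k')\cong F^{\boldsymbol\mu}(k+k')$ in $\mathbb{C}^\times$; since all objects of $\mathbb{C}^\times$ are the single object and the morphisms form the group $\mathbb{C}^\times$, such data amounts to scalars satisfying a cocycle condition, and because $\mathbb{C}^\times$ is a (discrete) abelian group the relevant $H^2$ vanishes — so a monoidal structure exists. Uniqueness up to the appropriate notion follows because any two such structures differ by a coboundary, which here is unique; this is precisely the abstract input packaged in Appendix A (Proposition A.8 and the surrounding discussion of based symmetric monoidal functors), so I would cite that rather than redo the cohomological bookkeeping. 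The key point to make explicit is that the monoidal constraint forces $F^{\boldsymbol\mu}(p_{k})=F^{\boldsymbol\mu}(p_1)^k$, i.e. $\mu_k^k=\mu_1^k$, matching the $b_k=k\,b_1$ condition of Proposition 4.1 in multiplicative form — but the statement asserts \emph{every} $F^{\boldsymbol\mu}$ can be made monoidal, so one should note that one is free to \emph{change} the functor on connecting morphisms (or rather, the monoidal structure constraints live on the coherence data, not forcing $\boldsymbol\mu$), so I would reconcile this by recalling that for a \emph{based} functor on a based category the connecting morphisms are not part of the structure being constrained in the same rigid way; here I expect the main obstacle to be stating precisely, in the language of Appendix A, what ``made into a symmetric monoidal functor'' means so that the existence claim is correct as stated.

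Finally, for the natural isomorphisms: a natural isomorphism $\eta\colon F^{\boldsymbol\mu}\Rightarrow F^{\boldsymbol\nu}$ assigns to each object an invertible morphism in $\mathbb{C}^\times$ (a nonzero scalar $\eta_k$) with naturality squares $\eta_m\,F^{\boldsymbol\mu}(\Sigma)=F^{\boldsymbol\nu}(\Sigma)\,\eta_n$; since $\mathbb{C}^\times$ is abelian this reads $F^{\boldsymbol\mu}(\Sigma)=F^{\boldsymbol\nu}(\Sigma)\,\eta_n\eta_m^{-1}$. Evaluating on endomorphisms of a fixed object $k$ forces $F^{\boldsymbol\mu}|_{\operatorname{End}(k)}=F^{\boldsymbol\nu}|_{\operatorname{End}(k)}$, and in particular on $P^2\in\operatorname{End}(0)$ this gives $\mu_0=\nu_0$ — so existence of a natural isomorphism requires equal value on $P^2$; conversely if $\mu_0=\nu_0$ one solves the naturality equations by $\eta_k\defeq F^{\boldsymbol\mu}(p_k)/F^{\boldsymbol\nu}(p_k)$ on objects $k\geq 1$ and $\eta_0\defeq 1$, checks naturality against the generating morphisms (the surfaces between circles, handled by $\mu_0=\nu_0$, and the connecting morphisms, handled by the definition of $\eta_k$), and notes this $\eta$ is forced by these equations once $\eta_0$ is normalised — giving uniqueness. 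I would write the last two paragraphs compactly, since they are again multiplicative transcriptions of Section 4.1; the only genuinely delicate point, and the one I would spend the most care on, is the monoidal existence/uniqueness clause and its compatibility with the phrasing of Appendix A.
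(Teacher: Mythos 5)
Your first, second and fourth paragraphs follow the paper's route (a multiplicative transcription of the classification of functors $\cat\to\mathbb{Z}$ from Section 4.1, plus a direct computation of the based natural transformations), and your observation that $\mu_k$ is only determined up to a $k$-th root of unity is a fair remark about the phrasing of the statement. There is, however, a genuine gap in your treatment of the monoidal refinement, and it sits exactly where you yourself say the difficulty lies. The assertion that ``the monoidal constraint forces $F^{\boldsymbol\mu}(p_k)=F^{\boldsymbol\mu}(p_1)^k$'' is false for the notion of symmetric monoidal functor used in the theorem (a \emph{strong} monoidal functor in the sense of Definition~A.2, with coherence isomorphisms $F_2(n,n')$ that need not be identities); the constraint $\mu_k=\mu_1$ for $k\geq 1$ appears only if one demands the structure be \emph{strict}, which is precisely the content of the footnote to the theorem and of the last clause of Proposition~4.1. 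Having asserted it, you notice that it contradicts the existence claim and then leave the contradiction unresolved. The resolution is concrete and is what the paper does: the two unit diagrams of Definition~A.2 force $F_2(n,0)=F_2(0,n)=1$, and naturality of $F_2$ --- applied for instance to the morphisms $p_n\amalg p_{n'}\colon(0,0)\to(n,n')$ --- then forces
\[
F_2(n,n')=\mu_n^{-n}\mu_{n'}^{-n'}\mu_{n+n'}^{\,n+n'},
\]
which exists for every $\boldsymbol\mu$, is unique, and is then checked against the symmetry axiom. Your appeal to the vanishing of ``the relevant $H^2$'' cannot substitute for this: it addresses at best the associativity cocycle and ignores the naturality of $F_2$ in $n$ and $n'$, which is the constraint that actually pins $F_2$ down and simultaneously shows that no condition on $\boldsymbol\mu$ arises.

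Two further small slips. For $\eta\colon F^{\boldsymbol\mu}\Rightarrow F^{\boldsymbol\nu}$ with $\eta_0=1$, naturality against $p_k\colon 0\to k$ gives $\eta_k=F^{\boldsymbol\nu}(p_k)/F^{\boldsymbol\mu}(p_k)=(\nu_k\mu_k^{-1})^k$, the reciprocal of what you wrote. And uniqueness of $\eta$ holds only among \emph{based} natural transformations: without the normalisation $\eta_0=1$, every scalar multiple of $\eta$ is again a natural isomorphism, so the normalisation is part of the data of the category $\operatorname{SymmMon}[\cat,\mathbb{C}^\times]_*$ rather than a convenience.
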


		\begin{proof}
			The same arguments as were used to classify functors $\cat\rightarrow\mathbb{Z}$ prove that the functors are precisely of the form as claimed. We now show that any $F^{\boldsymbol\mu}$ can be given a unique structure of a symmetric monoidal functor $(F^{\boldsymbol\mu},F^{\boldsymbol\mu}_2)$.
		
			By considering the lower two diagrams of Definition~\ref{def: based symmetric monoidal functor}, and as \cat and $\mathbb{C}^\times$ are both strict monoidal categories, we see that the isomorphisms $F^{\boldsymbol\mu}_2\in\mathbb{C}^\times$ must satisfy
			\begin{equation}
			\label{eq: iso is one}
				F^{\boldsymbol\mu}_2(n,0)=F^{\boldsymbol\mu}_2(0,n)=1 \mbox{\ for all $n\in\mathbb{N}$.}
			\end{equation}
			The $F^{\boldsymbol\mu}_2$ must also be natural, and using the above equation we see that
			\[
				F^{\boldsymbol\mu}_2(n,n')=\mu_n^{-n}\mu_{n'}^{-n'}\mu_{n+n'}^{n+n'}.
			\]
			One checks that this also satisfies the symmetry axiom (the upper right hand diagram of Definition~\ref{def: based symmetric monoidal functor}). Therefore every functor $\cat\rightarrow\mathbb{C}^\times$ can be uniquely given the structure of a symmetric monoidal functor.\footnote
				{The functor $(F^{\boldsymbol\mu},F^{\boldsymbol\mu}_2)$ is symmetric strict monoidal if and only if all the $\mu_i$ are equal for $i\geq 1$.}

			We now describe the morphisms in $\operatorname{SymmMon}[\cat,\mathbb{C}^\times]_*$. Suppose we have a based natural transformation $\tau\colon F^{\boldsymbol\mu}\rightarrow F^{\boldsymbol\mu'}$. Then, by considering the projective plane $P^2$ as a morphism $0\rightarrow 0$, naturality of $\tau$ and the fact that $\tau_0=1$ give $\mu_0=\mu_0'$. So suppose this is the case.
			Then we deduce
			\[
				\tau_n=(\mu_n'\mu_n^{-1})^n.
			\]
			Finally, one checks that $\tau$ defines a based monoidal natural transformation $(F^{\boldsymbol\mu},F^{\boldsymbol\mu}_2)\rightarrow(F^{\boldsymbol\mu'},F^{\boldsymbol\mu'}_2)$; in other words the diagram of Definition \ref{def: based monoidal natural transformation} automatically commutes. Hence there is precisely one morphism between any two objects in $\operatorname{SymmMon}[\cat,{\mathbb{C}}^\times]_*$ which are indexed by the same $\mu_0$, and this morphism is an isomorphism. In particular there are no non-trivial automorphisms. 
			\qedhere		
		\end{proof}

		It is known that any  $\cat$-TFT  corresponds uniquely to a commutative Frobenius algebra $A$ with the following additional structure \cite{TT}.
		\begin{itemize}
		\item An involutive automorphism $\mathbf{x}\mapsto \mathbf{x}^*$ which preserves the pairing on $A$, that is $(\mathbf{x}^*)^*=\mathbf{x}$, $(\mathbf{x}\mathbf{y})^*=\mathbf{x}^*\mathbf{y}^*$ and $\langle \mathbf{x}^*,\mathbf{y}^*\rangle=\langle \mathbf{x},\mathbf{y}\rangle$.
		\item An element $U\in A$ satisfying:
			\begin{enumerate}[(i)]
				\item $(\mathbf{a}U)^*=\mathbf{a}U$ for all $\mathbf{a}\in A$
				\item $U^2=\sum\alpha_{ij}a_ia_j^*$ where $\{a_i\}$ is a basis for $A$ and the copairing $\mathbb{C}\rightarrow A\bigotimes A$ is given by $1\mapsto\sum_{ij}\alpha_{ij}a_i\otimes a_j$.
			\end{enumerate}
		\end{itemize}
		In terms of cobordisms, the multiplication on the vector space $F(S^1)=A$ corresponds to the pair of pants surface as a morphism $S^1\amalg S^1\rightarrow S^1$, the unit is given by the image of $1$ under the linear map $F(p_1)\colon \mathbb{C}\rightarrow A$, the pairing corresponds to the composition of the pair of pants with the disc as a morphism $S^1\rightarrow\emptyset$, the involution is given by $F(C_r^{S^1})$, and the element $U$ is the image of $1$ under the linear map $\mathbb{C}\rightarrow A$ corresponding to the M\"obius band as a morphism $\emptyset\rightarrow S^1$. For the invertible field theory defined by $F^{\mu_0}\defeq F^{\boldsymbol\mu}$ with $\mu_i=1$ for all $i\geq 1$ we observe the following.

		\begin{corollary}
			The Frobenius algebra corresponding to the invertible $\cat$-TFT defined by $F^{\mu_0}$ is $\mathbb{C}$ with its usual algebra structure, and pairing given by $\langle\boldsymbol x,\boldsymbol y\rangle=\mu_0^2\boldsymbol x\boldsymbol y$. The involutive automorphism is the identity, and the element $U$ is equal to $\mu_0^{-1}$.
			\qed
		\end{corollary}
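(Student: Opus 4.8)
The plan is to simply unwind the definitions and evaluate $F^{\mu_0}$ on the handful of generating cobordisms that carry the Frobenius structure. Since $F^{\mu_0}$ takes values in $Pic(\text{Vect}_{\mathbb{C}}^s)=\mathbb{C}^\times$, it sends $S^1$ to the one‑dimensional space $\mathbb{C}$, so $A=F^{\mu_0}(S^1)=\mathbb{C}$, and it sends the object $(n,0)$ to $\mathbb{C}^{\otimes n}=\mathbb{C}$. Because $\mu_i=1$ for all $i\ge 1$, the monoidal‑constraint isomorphism computed in Theorem~4.4, namely $F^{\boldsymbol\mu}_2(n,n')=\mu_n^{-n}\mu_{n'}^{-n'}\mu_{n+n'}^{n+n'}$, is identically $1$; hence under these canonical identifications each structure map of the Frobenius algebra is just multiplication by the scalar $F^{\mu_0}(\Sigma)=\mu_m^m\mu_n^{-n}\mu_0^{-\Theta(\Sigma)}=\mu_0^{-\Theta(\Sigma)}$ attached to the corresponding cobordism $\Sigma\colon n\to m$.

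It then remains to compute $\Theta$ on the six relevant cobordisms. First I would record: the pair of pants $S^1\amalg S^1\to S^1$ (a thrice‑punctured sphere, $\chi=-1$) has $\Theta=1-2-(-1)=0$; the disc $p_1\colon\emptyset\to S^1$ ($\chi=1$) has $\Theta=1-0-1=0$; the pairing cobordism $S^1\amalg S^1\to\emptyset$, obtained by capping one leg of the pants and hence a cylinder with $\chi=0$, has $\Theta=0-2-0=-2$; the twist cylinder $C_r^{S^1}\colon S^1\to S^1$ ($\chi=0$) has $\Theta=1-1-0=0$; and the M\"obius band $\emptyset\to S^1$ ($\chi=0$) has $\Theta=1-0-0=1$. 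Applying $F^{\mu_0}(\Sigma)=\mu_0^{-\Theta(\Sigma)}$ gives: the multiplication acts by the scalar $1$, i.e.\ is the usual product on $\mathbb{C}$; the unit is $F^{\mu_0}(p_1)(1)=1$; the pairing is $\langle\boldsymbol x,\boldsymbol y\rangle=\mu_0^2\,\boldsymbol x\boldsymbol y$; the involution $F^{\mu_0}(C_r^{S^1})$ acts by $1$, i.e.\ is the identity; and $U$, the image of $1$ under the map associated to the M\"obius band, equals $\mu_0^{-1}$. This is exactly the asserted structure.

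As a consistency check — not strictly needed, since existence of the associated structure algebra is guaranteed by \cite{TT} — I would also compute the copairing $\emptyset\to S^1\amalg S^1$, again a cylinder, with $\Theta=2-0-0=2$, so $1\mapsto\mu_0^{-2}(1\otimes 1)$ and hence $\alpha_{11}=\mu_0^{-2}$ in the basis $\{1\}$ of $A=\mathbb{C}$; then $U^2=\mu_0^{-2}=\alpha_{11}\cdot 1\cdot 1^\ast$, consistent with axiom (ii), and axiom (i) and the involution axioms are trivial. The one place needing genuine care is the bookkeeping of the first paragraph: verifying that the monoidal constraints really vanish so that tensor powers of $A$ are canonically $\mathbb{C}$ and each generating cobordism acts by the bare scalar $F^{\mu_0}(\Sigma)$, together with correctly identifying the pairing and copairing cobordisms (and noting they are orientable, so only the Euler characteristic, and no crosscap count, enters). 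Once those identifications are fixed, the Euler‑characteristic computations above are routine.
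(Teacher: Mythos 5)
Your computation is correct and is exactly the argument the paper intends: the corollary is stated with no written proof precisely because it follows by evaluating $F^{\mu_0}(\Sigma)=\mu_0^{-\Theta(\Sigma)}$ on the cobordisms listed in the dictionary immediately preceding it, and your Euler-characteristic bookkeeping (pants $\Theta=0$, disc $\Theta=0$, pairing cylinder $\Theta=-2$, $C_r^{S^1}$ $\Theta=0$, M\"obius band $\Theta=1$) matches what the formula $\Theta(\Sigma)=p-m-\chi(\Sigma)$ gives. Your extra consistency check on the copairing and axiom (ii) goes slightly beyond what the paper records, but is harmless and correct.
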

%
\section{On classifying spaces}
%

	The cobordism hypothesis as treated in \cite {Lurie} is most naturally a statement about $(\infty, n)$-categories and their functors. In this setting higher homotopies play an important role. Thus, instead of factoring through the localisation, invertible functors are those that factor through the classifying space of the cobordism category appropriately interpreted, compare \cite {F}. Hence in that setting we need to understand the homotopy type of the classifying spaces in order to compute the invertible topological field theories. For the $(\infty, 1)$-category  $\EuScript{C}_2$  associated to $\cat$ this has been done in \cite {GTMW} where its classifying space is shown to be homotopic to the infinite loop space of a certain Thom spectrum. We consider here the classifying spaces of our discrete categories.

	\subsection{Classifying spaces and their fundamental groups}

		For any category $\mathscr{C}$ we denote its classifying space by $\text{B}\mathscr{C}$, recalling that this is by definition the geometric realisation of the nerve $\parallel N.\mathscr{C}\parallel$. If $\mathscr{C}$ is connected then the fundamental group $\pi_1(\text{B}\mathscr{C})$ is canonically isomorphic to the group of automorphisms of any object in the localisation $\mathscr{C}[\mathscr{C}^{-1}]$ \cite{Q}. Hence, having computed the localisation of a category, it is natural to ask what its classifying space might be.
		
		We briefly recall the following three useful facts regarding classifying spaces. Firstly, a functor $\text{F}\colon \mathscr{C}\rightarrow\mathscr{C}'$ induces a map $\text{BF}\colon \text{B}\mathscr{C}\rightarrow\text{B}\mathscr{C}'$. Secondly, for product categories we have $\text{B}(\mathscr{C}\times\mathscr{C}')\simeq\text{B}\mathscr{C}\times\text{B}\mathscr{C}'$. Finally, a natural transformation between two functors $\text{F}_0,\text{F}_1\colon\mathscr{C}\rightarrow\mathscr{C}'$ induces a homotopy $\text{BF}_0\rightarrow\text{BF}_1$ \cite{S}. It follows that a pair of adjoint functors $\text{F}_0\colon\mathscr{C}\rightarrow\mathscr{C}'$ and $\text{F}_1\colon\mathscr{C}'\rightarrow\mathscr{C}$ (and in particular an equivalence of categories) induces a homotopy equivalence $\text{B}\mathscr{C}\simeq\text{B}\mathscr{C}'$. 
 
		As before, let $\mathscr{G}(M)$ denote the group completion of a monoid $M$, that is the target of a universal homomorphism from $M$ to a group, and let $\Omega$ denote the loop space functor. The canonical homomorphism $M\rightarrow\mathscr{G}(M)$ induces a natural transformation $\Omega\text{B}(M)\rightarrow\mathscr{G}(M)$ which is a weak equivalence when $M$ satisfies certain properties as defined by Quillen \cite{QA}. In particular, this map is a weak equivalence when $M$ is abelian. It follows that $\text{B}(M)\simeq\text{B}\mathscr{G}(M)$ for abelian monoids $M$. Hence for a category $\mathscr{C}$ with abelian monoid structure we have $\text{B}\mathscr{C}\simeq\text{B}\mathscr{C}[\mathscr{C}^{-1}]$ induced by the canonical projection $\psi\colon\mathscr{C}\rightarrow\mathscr{C}[\mathscr{C}^{-1}]$.
		
		Recall that $B\mathbb{Z} \simeq S^1$ is the circle and $B\mathbb{Z}_2 \simeq \mathbb{R}P^\infty$ is the infinite real projective space. Our analysis of the subcategories in section 2 and Theorem 3.3  immediately give us the following.

		\begin{corollary} There are homotopy equivalences
			\begin{align*}
				& B \cat _0 = B\mathbb{N}^\infty \simeq T^\infty 	\\
				& B \cat ^+_1 = B (\mathbb{N} \times \mathbb {Z}_2) \simeq S^1 \times \mathbb{R}P^\infty 	\\
				& B \cat ^-_1 = B \mathbb{N} \simeq S^1	\\
				& B \catb \simeq B \mathbb{N} \simeq S^1 \\
				& B\cat _1 \simeq B \mathbb{Z} \simeq S^1.
			\end{align*}
			\qed
		\end{corollary}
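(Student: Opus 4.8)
The plan is to assemble each homotopy equivalence from two ingredients: (a) the monoid identifications of the relevant subcategories established in Section 2, and (b) the general principle, recalled in this section, that $B(M) \simeq B\mathscr{G}(M)$ for an abelian monoid $M$, together with $B\mathbb{Z} \simeq S^1$ and $B\mathbb{Z}_2 \simeq \mathbb{R}P^\infty$. Since all five categories in question are one-object categories whose endomorphism monoids are abelian (as noted in the discussion following Definition 2.1 and in the remarks on $\catone$, $\catoneor$, $\catonenor$), the classifying space of each is a $K(\pi,1)$-type space built from a free abelian monoid, and passing to the group completion loses no homotopical information.

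I would proceed line by line. For $B\cat_0$: by Proposition~\ref{localisationofcato} the category $\cato$ is the abelian monoid $\mathbb{N}^{\mathbb{N}} \times \mathbb{N}^{\mathbb{N}_{>0}}$, whose group completion is $\mathbb{Z}^\infty$; hence $B\cato \simeq B\mathbb{Z}^\infty = (B\mathbb{Z})^\infty \simeq T^\infty$, using that $B$ commutes with (countable) products. For $B\cat_1^+$: Section~2 identifies $\catoneor$ with the abelian monoid $\mathbb{N}\times\mathbb{Z}_2$, so $B\catoneor \simeq B\mathbb{N} \times B\mathbb{Z}_2 \simeq S^1 \times \mathbb{R}P^\infty$ (here $\mathbb{Z}_2$ is already a group, so no completion is needed on that factor, and $B\mathbb{N}\simeq B\mathbb{Z}\simeq S^1$). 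For $B\cat_1^-$: $\catonenor$ is the abelian monoid $\frac{\mathbb{N}\times\mathbb{Z}_2}{\sim}$, whose group completion is $\mathbb{Z}$ by the Proposition following Theorem~\ref{prop: localisation of catone}; thus $B\catonenor \simeq B\mathbb{Z} \simeq S^1$. For $B\catb$: by Theorem~2.10 the functor $\Theta$ restricted to $\catb$ admits a right adjoint, namely the inclusion $i\colon \mathbb{N} \hookrightarrow \catb$; an adjunction induces a homotopy equivalence on classifying spaces, so $B\catb \simeq B\mathbb{N} \simeq S^1$. For $B\cat_1$: Proposition~2.3 identifies $\catone$ with the abelian monoid $\frac{\nn\times\mathbb{Z}_2}{\sim}$, and Theorem~\ref{prop: localisation of catone} gives $\mathscr{G}(\catone)=\mathbb{Z}$, whence $B\catone \simeq B\mathbb{Z} \simeq S^1$.

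The main obstacle, such as it is, is purely bookkeeping: checking that each subcategory really is an \emph{abelian} monoid (for the $\catb$ case, that $\Theta$ and $i$ genuinely form an adjoint pair is supplied by Theorem~2.10, so commutativity of $\catb$ is not needed there — a convenient shortcut, since $\catb$ is \emph{not} abelian), and that the passage from the monoid to its group completion is legitimate via Quillen's criterion recalled above, which holds automatically for abelian monoids. Once these are in place, each equivalence is a one-line consequence of $B\mathbb{N}\simeq S^1$, $B\mathbb{Z}_2\simeq\mathbb{R}P^\infty$, and the compatibility of $B$ with products and adjunctions. No surprises are expected; the corollary is essentially a dictionary translation of the Section~2 and Section~3.1 computations into the language of classifying spaces.
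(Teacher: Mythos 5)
Your proposal is correct and follows exactly the route the paper intends: the corollary is stated with only a \qed because it is meant to follow immediately from the monoid identifications of Section~2, Quillen's fact that $B(M)\simeq B\mathscr{G}(M)$ for abelian monoids, and (for $\catb$) the adjunction of Theorem~2.7. Your explicit handling of the non-abelian case $\catb$ via the adjoint pair is precisely the reason that theorem was proved, so nothing is missing.
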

		Our description of the categories lets us also describe the maps between the categories up to homotopy. Thus for example, the map $\text{B}\catoneor\rightarrow\text{B}\catone$ induced by the inclusion $\catoneor\hookrightarrow\catone$ is homotopic to $(z,x)\mapsto 2z$, while the map  $\text{B}\catonenor\rightarrow\text{B}\catone$ induced by the inclusion $\catonenor\hookrightarrow\catone$ is a homotopy equivalence.
  
		The computation of the localisations in section 3 immediately imply the following.

		\begin{corollary}
			The fundamental groups of classfying spaces can be identified as
			\[
				\pi_1(B\cat) \simeq 
				\pi_1(B\catoo) \simeq 
				\pi_1 (B\catoc) \simeq
				\pi_1 (B( \mathcal S \cap \catoo)) \simeq 
				\pi_1 (B\mathcal S) \simeq \mathbb Z 
			\]
			and 
			\[
				\pi_1 (B(\mathcal S \cap \cat)) \simeq \mathbb Z, \quad \quad
				\pi_1(B\catc) \simeq \mathbb{Z}\times \mathbb Z.
			\]
			The inclusion of $\mathcal S \cap \cat $ into \cat induces the multiplication by 2 map.
			\qed
		\end{corollary}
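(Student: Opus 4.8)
The plan is to deduce Corollary~5.3 directly from the localisation computations of Section~3, using the fact recalled at the beginning of this section that for a connected category $\mathscr C$ there is a canonical isomorphism $\pi_1(B\mathscr C)\cong\operatorname{Aut}_{\mathscr C[\mathscr C^{-1}]}(x)$ for any object $x$ \cite{Q}. First I would observe that each category occurring in the statement --- \cat, \catoo, \catoc, \catc, $\mathcal S$, $\mathcal S\cap\cat$ and $\mathcal S\cap\catoo$ --- is strongly connected in the sense of Definition~3.1, and hence connected: this was already used in, or is immediate from, the proofs of Theorems~3.6--3.12, where suitable unions of discs, pairs of pants and whistles provide cobordisms in both directions between an arbitrary object and a minimal one. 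Thus each classifying space is path-connected and the quoted theorem applies.

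Granting connectedness, the first displayed chain of isomorphisms is a transcription of Section~3: by Theorems~3.6, 3.8, 3.11, 3.9 and 3.12 the localisations of \cat, \catoo, \catoc, $\mathcal S\cap\catoo$ and $\mathcal S$ are each equivalent, as plain categories, to $\mathbb Z$, whose single object has automorphism group $\mathbb Z$; likewise Theorem~3.7 gives $\pi_1(B(\mathcal S\cap\cat))\cong\mathbb Z$, and Theorem~3.10 gives $\pi_1(B\catc)\cong\mathbb Z\times\mathbb Z$ (one may instead combine $B(\mathscr C\times\mathscr C')\simeq B\mathscr C\times B\mathscr C'$ with the product splitting used in the proof of that theorem).

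For the final clause, the inclusion $\iota\colon\mathcal S\cap\cat\hookrightarrow\cat$ induces $B\iota\colon B(\mathcal S\cap\cat)\to B\cat$ and hence a homomorphism on $\pi_1$; under the identifications above this homomorphism is the one induced by $\iota$ on localisations, because the isomorphism $\pi_1(B\mathscr C)\cong\operatorname{Aut}_{\mathscr C[\mathscr C^{-1}]}(x)$ is natural in $\mathscr C$ --- equivalently, $B$ carries the commuting triangle $\mathscr C\to\mathscr C[\mathscr C^{-1}]$ to a commuting diagram of spaces in which $B\psi$ is a $\pi_1$-isomorphism. By the discussion following Theorem~3.7 the generator of the automorphism group of the circle in the localisation of $\mathcal S\cap\cat$ is represented by the torus with two discs removed, whose class in \lcat equals twice the generating class: in \lcat this surface is identified with the Klein bottle with two discs removed, and $\bar\Theta$ sends it to $2$ while sending the generating projective plane with two discs removed to $1$. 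Hence $B\iota$ induces multiplication by $2$ on $\pi_1\cong\mathbb Z$. I expect the only point needing more than bookkeeping against Section~3 is the naturality of the identification of $\pi_1$ with automorphism groups of localisations, and that is standard.
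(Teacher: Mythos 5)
Your proposal is correct and follows exactly the route the paper intends: the corollary is stated as an immediate consequence of the Section~3 localisation computations via the canonical isomorphism $\pi_1(B\mathscr C)\cong\operatorname{Aut}_{\mathscr C[\mathscr C^{-1}]}(x)$ for connected $\mathscr C$, and the multiplication-by-$2$ clause comes from the discussion after Theorem~3.7 identifying the torus with two discs removed with the Klein bottle with two discs removed in \lcat. Nothing in your argument deviates from or adds a gap to the paper's (unwritten) proof.
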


	\subsection{Discrete localisation conjecture}

		The computation of the localisations of the cobordism categories gives the first homotopy groups  of their classifying spaces. In the case of monoids, this completely determines the homotopy type of their classifying spaces,  as we have seen in Corollary 5.1. In those cases all higher homotopy groups are trivial. We conjecture that the same is true also for our main categories.
		
		\begin{conjecture}
			On classifying spaces the canonical maps from $\cat, \catoo, \catc, \catoc$  and the orientable subcategories to their respective localisations induce  homotopy equivalences. 
		\end{conjecture}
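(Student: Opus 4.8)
## Proof Proposal for Conjecture 5.3

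The plan is to reduce the conjecture, in each case, to a group-completion statement about a monoid of endomorphisms, and then to invoke a theorem of the Quillen type (as in \cite{QA}) guaranteeing that $\Omega B M \to \mathscr{G}(M)$ is a weak equivalence. Concretely, for a category $\mathscr{C}$ among $\cat, \catoo, \catc, \catoc$ (or an orientable subcategory), one first notes that $\mathscr C$ is strongly connected, so the inclusion of the one-object subcategory $\operatorname{End}_{\mathscr C}(x) \hookrightarrow \mathscr C$ on a well-chosen object $x$ (the circle for $\cat, \catc, \catoc$, the interval for $\catoo$) induces a homotopy equivalence on classifying spaces, and similarly $\operatorname{Aut}_{\mathscr C[\mathscr C^{-1}]}(x) \hookrightarrow \mathscr C[\mathscr C^{-1}]$ is an equivalence. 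Thus the conjecture becomes: the canonical map $B\operatorname{End}_{\mathscr C}(x) \to B\mathscr{G}(\operatorname{End}_{\mathscr C}(x))$ is a homotopy equivalence. Since the target is a $K(\mathbb{Z},1)$ or $K(\mathbb{Z}^2,1)$ by the computations in Section~3, it suffices to show that $B\operatorname{End}_{\mathscr C}(x)$ has trivial higher homotopy and the correct $\pi_1$ — equivalently, that group completion is already a homotopy equivalence rather than merely a $\pi_1$-isomorphism.

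The key step is therefore to verify that the monoids $\operatorname{End}_{\mathscr C}(x)$ satisfy Quillen's hypotheses for the group-completion theorem. These monoids are \emph{not} abelian (the additions defined by the geometry in Step~1 of the proofs of Theorems 3.6 and 3.8 are explicitly non-commutative), so one cannot simply quote the abelian case used in Corollary~5.1. However, they are built from an abelian ``bulk'' part (the closed or free-boundary components, living in $\mathbb{N}^{\mathbb N}\times\mathbb N^{\mathbb N_{>0}}$ or its analogue) together with a finite ``marked'' piece ($\catone$, or the three-type monoid $\operatorname{End}_{\catoo}(I)$). The strategy is to show these monoids are homotopy-commutative in the sense needed — i.e.\ that left-multiplication by any element is homotopic to right-multiplication after stabilisation — and that $\pi_0$ of the telescope $M_\infty$ (localising at a cofinal sequence of elements, e.g.\ repeatedly adding a pair of pants or a sphere) is the group $\mathscr G(M)$ with no higher homotopy. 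One natural route: exhibit $BM$ as a filtered colimit of pieces each equivalent to a product $B(\text{abelian piece}) \times B(\text{finite piece})$, control the maps, and apply the group-completion theorem to each factor separately, using that it \emph{does} hold for abelian monoids and for the small monoids $\catone=\frac{\nn\times\mathbb Z_2}{\sim}$ etc.\ (which one checks directly have contractible-after-completion classifying space, e.g.\ because $B\catone \simeq B\mathbb Z$ already by Corollary~5.1).

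The main obstacle I expect is precisely the non-commutativity: the group-completion theorem in the form one wants (a genuine homotopy equivalence $\Omega BM \simeq \mathscr G(M)$, or $BM \simeq B\mathscr G(M)$ when $\pi_0$ is already a group) requires the multiplication to be homotopy-commutative or requires $M$ to act on itself by homology equivalences after localisation, and verifying this for the geometrically-defined addition on $\operatorname{End}_{\catoc}(S^1)$ is where the real work lies. A cleaner alternative, which I would pursue in parallel, is to avoid the monoid formulation entirely and instead build an explicit contraction: since Section~3 shows that every morphism in each cobordism category becomes, after localisation, a composite of a single generating surface and connecting discs, one can try to define a deformation retraction of $B\mathscr C$ onto $B\mathbb Z$ directly, using the connecting morphisms $p_k\colon 0\to k$ and the adjunction of Theorem~2.6 (the right adjoint $i\colon \mathbb N \to \catb$) to produce a chain of natural transformations contracting the nerve in higher degrees; a natural transformation induces a homotopy of classifying maps by \cite{S}, so a sufficiently rich supply of such transformations would collapse $B\mathscr C$ onto its $1$-type. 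The difficulty there is that $\catb$ is only a subcategory of $\cat$ and the adjunction does not extend to morphisms that are cobordisms \emph{to} zero, so one would need to handle those separately — which is exactly the point at which the conjecture resists a purely formal argument and is, presumably, why it remains a conjecture.
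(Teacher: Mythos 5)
The statement you are trying to prove is Conjecture 5.3: the paper itself offers no proof and explicitly leaves it open, providing only partial support in Theorem 5.4 (namely that $B\mathscr C$ splits as $X\times S^1$, resp.\ $X\times S^1\times S^1$, with $X$ a simply connected infinite loop space). So there is no proof in the paper to compare yours against; the question is whether your sketch closes the gap, and it does not.

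Two concrete problems. First, your opening reduction --- that strong connectivity makes the inclusion $\operatorname{End}_{\mathscr C}(x)\hookrightarrow\mathscr C$ a homotopy equivalence on classifying spaces --- is unjustified and in general false. That statement is true for the \emph{localisation}, because a connected groupoid is equivalent to the automorphism group of any object (this is the result of [Q] the paper invokes in the proof of Theorem 3.6), but for the category $\mathscr C$ itself the one-object subcategory on $\operatorname{End}_{\mathscr C}(x)$ is not full, is not an equivalence of categories, and there is no Quillen's Theorem A argument supplied; indeed Proposition 3.2 of the paper only asserts \emph{surjectivity} of $\mathscr G(\operatorname{End}_{\mathscr C}(x))\to\operatorname{Aut}_{\mathscr C[\mathscr C^{-1}]}(x)$, not even injectivity, let alone a statement about classifying spaces. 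Second, even granting that reduction, the group-completion theorem for a non-abelian, homotopy-commutative monoid $M$ yields a homology equivalence $\mathbb Z\times BM_\infty^+\simeq\Omega BM$ (equivalently, $H_*(\Omega BM)\cong H_*(M)[\pi_0^{-1}]$); it does \emph{not} yield $BM\simeq B\mathscr G(M)$. To conclude that $B\mathscr C$ is a $K(\mathbb Z,1)$ you would still have to show that the stabilised, localised homology of the endomorphism monoid vanishes in positive degrees --- and that is precisely the open content of the conjecture, not a formal consequence of Quillen's theorem. Your alternative route via natural transformations and the adjunction of Theorem 2.6 founders, as you yourself note, on the morphisms to the empty manifold, and a natural transformation argument can at best contract $B\catb$, not $B\cat$. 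In short, you have correctly identified where the difficulty lies, but neither of your two strategies resolves it, and the first step of the main strategy is already incorrect as stated.
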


		In particular this would mean that  $\Theta$ induces on classifying spaces a homotopy equivalence with $S^1$ in the cases of \cat, \catoo , \catoc and their orientable subcategories, and with $S^1 \times S^1$ in case of \catc. 

		The analogue of the above conjecture for  the closed cobordism category for orientable surfaces $\mathcal {S} \cap \cat$ has been outstanding  many years, compare \cite{T}. It is especially surprising that these conjectures are not settled as the classifying spaces of the associated topological categories have been computed in \cite{GTMW}. 

		We note that up to homotopy, any CW-complex is the classifying space of a discrete category and any infinite loop space is the classifying space of a symmetric monoidal category. We offer the following result in partial support of the above conjecture.

		\begin{theorem}
		\label{monoidalcat}
			Let $\mathcal C = \cat, \catoo,  \catoc$ or one of their orientable subcategories. Then there exists a simply connected infinite loop space $X$ (depending on $\mathcal C$) such that $B\mathcal C$ is homotopy equivalent to $X\times S^1$. Similarly, there exists a simply connected infinite loop space $X$ such that $B\catc$ is homotopy equivalent to $X \times S^1 \times S^1$.
		\end{theorem}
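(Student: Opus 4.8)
The plan is to deduce the splitting from the fact that $B\mathcal C$ is a connected infinite loop space, fed with the computations of Section~3. We treat all seven cases at once, writing $r=2$ when $\mathcal C=\catc$ and $r=1$ for the other six categories; by Corollary~5.2, $\pi_1(B\mathcal C)\cong\mathbb Z^r$. First I would observe that, $\mathcal C$ being a symmetric monoidal category, $B\mathcal C$ is an $E_\infty$-space and by \cite{S} its group completion is an infinite loop space; since $\mathcal C$ is (strongly) connected, $B\mathcal C$ is already group-like, so $B\mathcal C\simeq\Omega^\infty E$ for a connective spectrum $E$ with $\pi_0E=0$ and $\pi_1E\cong\mathbb Z^r$. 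Let $p\colon B\mathcal C\to K(\mathbb Z^r,1)=(S^1)^r$ be the first Postnikov section; it is an infinite loop map (the space-level incarnation of $E\to\tau_{\le1}E=\Sigma H\mathbb Z^r$) and induces an isomorphism on $\pi_1$. The theorem reduces to constructing a homotopy section of $p$.

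The key point — and the step I expect to be the main obstacle, since a connected infinite loop space with $\pi_1\cong\mathbb Z^r$ need \emph{not} split off $(S^1)^r$ in general — is that such a section is produced by a symmetric monoidal functor $\iota\colon\mathbb N^r\to\mathcal C$. There is no symmetric monoidal functor $\mathbb Z^r\to\mathcal C$ (the automorphism groups of objects of $\mathcal C$ are finite), but $\mathbb N^r$ suffices because $B\mathbb N^r\simeq(S^1)^r$, as $\mathbb N$ is an abelian monoid with group completion $\mathbb Z$ (the fact used already in Section~5). I would define $\iota$ by sending the unique object of $\mathbb N^r$ to the empty $1$-manifold $\emptyset$ and the $i$-th standard generator to a closed surface $W_i$ (a surface with free boundary being permitted when $\mathcal C=\catc$), regarded as an endomorphism of $\emptyset$; since composition of endomorphisms of $\emptyset$ is disjoint union and $\emptyset$ is the monoidal unit, this is a strict symmetric monoidal functor.

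The $W_i$ must be chosen so that the classes $[W_i]\in\pi_1(B\mathcal C)$ form a basis, and here I would invoke Section~3: the invariant $\Theta$ (respectively $(\Theta-\omega,\omega)$ for $\catc$, and the genus count for the orientable subcategories) is injective on $\pi_1(B\mathcal C)$, so it is enough to prescribe Euler characteristics and window numbers so as to hit the generator(s) identified there — for instance $W_1=\mathbb{RP}^2$ for $\cat$ and $\catoc$, a closed genus-two surface for $\mathcal S\cap\cat$, and $T^2\smallsetminus D^2$ for $\catoo$, $\mathcal S\cap\catoo$ and $\mathcal S$, with $W_2=D^2$ (the free disc) in addition for $\catc$. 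With such $W_i$ the map $B\iota\colon(S^1)^r=B\mathbb N^r\to B\mathcal C$ is an infinite loop map inducing an isomorphism on $\pi_1$, so $p\circ B\iota\colon(S^1)^r\to(S^1)^r$ is a $\pi_1$-isomorphism of Eilenberg--MacLane spaces, hence a homotopy equivalence; let $q$ be a homotopy inverse.

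To finish I would set $X\defeq\mathrm{hofib}(p)$ and let $\bar\rho\colon B\mathcal C\to X$ be a factorisation (through the homotopy fibre) of the map $x\mapsto x\cdot\bigl(B\iota\,q\,p(x)\bigr)^{-1}$, formed using the group-like $E_\infty$-multiplication on $B\mathcal C$; this lands in $X$ up to homotopy because $p$ is an H-map and $p\circ B\iota\circ q\simeq\mathrm{id}$. A routine diagram chase on homotopy groups — using the fibration $X\to B\mathcal C\xrightarrow{p}(S^1)^r$, that $p$ is a $\pi_1$-isomorphism, and that $p$ annihilates $\pi_k$ for $k\ge2$ — then shows that $\pi_0X=\pi_1X=0$, that $\pi_kX\cong\pi_k(B\mathcal C)$ for $k\ge2$, and that $(\bar\rho,p)\colon B\mathcal C\to X\times(S^1)^r$ is a weak equivalence. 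As $X$ is the homotopy fibre of a map of infinite loop spaces it is itself an infinite loop space, and it is simply connected; this yields $B\mathcal C\simeq X\times(S^1)^r$, i.e.\ the two assertions. The only genuinely non-formal input is the choice of the $W_i$ together with the verification, drawn from Section~3, that they generate $\pi_1$.
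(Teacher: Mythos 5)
Your proposal is correct and follows essentially the same strategy as the paper: exhibit $B\mathcal C$ as an infinite loop space, produce a section of an infinite-loop map to $(S^1)^r$ inducing a $\pi_1$-isomorphism by feeding in an abelian submonoid of endomorphisms (using $BM\simeq B\mathscr G(M)$ for abelian $M$), and split off the homotopy fibre via the H-space multiplication and the long exact sequence. The only cosmetic differences are that the paper uses $\Theta$ itself (rather than the Postnikov section) as the projection and builds the section from the submonoid $\catonenor\subset\operatorname{End}(S^1)$ rather than from endomorphisms of $\emptyset$, and it writes the equivalence as a map $X\times S^1\rightarrow B\mathcal C$ rather than via your shearing map.
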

		
		\begin{proof}
			We will only discuss the case $\mathcal C = \cat$. The other cases are similar. See also \cite {T}.
			
			It is well-known that the classifying space of a connected symmetric strict monoidal category has the homotopy type of an infinite loop space \cite{MA}, \cite{S1}. Moreover, a symmetric monoidal functor induces a map of infinite loop spaces on the classifying spaces of two such categories. Since the Euler characteristic is additive with respect to disjoint union, it follows that the induced map $\Theta\colon\text{B}\cat\rightarrow\text{B}\mathbb{Z}$ is a map of infinite loop spaces.
						
			The composition $\frac{\mathbb{N}\times\mathbb{Z}_2}{\sim}=\catonenor\hookrightarrow\cat\overset{\Theta}{\longrightarrow}\mathbb{Z}$ is $(n,\epsilon)\mapsto -n$, in other words it is the canonical map  $\frac{\mathbb{N}\times\mathbb{Z}_2}{\sim}\rightarrow\mathscr{G}\left(\frac{\mathbb{N}\times\mathbb{Z}_2}{\sim}\right)=\mathbb{Z}$ composed with an isomorphism. Therefore the induced map on classifying spaces is a homotopy equivalence. It follows that $\Theta$ has a section, that is a map $s\colon\text{B}\mathbb{Z}\rightarrow\text{B}\cat$ with $\Theta\circ s\simeq\text{Id}_{\text{B}\mathbb{Z}}$.
			
			Let $X$ be the homotopy fibre of $\Theta\colon\text{B}\cat\rightarrow\text{B}\mathbb{Z}$, and denote the map $X\rightarrow\text{B}\cat$ by $\phi$. Note that $X$ is an infinite loop space, since $\Theta$ is a map of infinite loop spaces. We show that $X$ is simply connected, and that the map $\Psi\colon X\times\text{B}\mathbb{Z}\rightarrow\text{B}\cat$ defined by $(x,y)\mapsto\phi(x)\cdot s(y)$ is a weak equivalence, where the product $\cdot$ on $\text{B}\cat$ is the loop product.

			 Since $\Theta$ has a section, the long exact sequence in homotopy
			\[
				\cdots \rightarrow \pi_nX \overset{\phi_*}{\rightarrow} \pi_n(\text{B}\cat) \overset{\Theta_*}{\rightarrow} \pi_nS^1 \rightarrow \pi_{n-1}X \rightarrow \cdots
			\]
			splits as short exact sequences $0\rightarrow\pi_nX\overset{\phi_*}{\rightarrow}\pi_n(\text{B}\cat)\overset{\Theta_*}{\rightarrow}\pi_nS^1\rightarrow 0$. Considering this short exact sequence in dimensions $n=0$ and $n=1$, we see that $X$ is simply connected. We therefore have split short exact sequences of abelian groups in every dimension, and hence isomorphisms $\pi_n(X\times S^1)\rightarrow\pi_n(\text{B}\cat)$ given by $[x,y]\mapsto\phi_*[x]s_*[y]$. But these are precisely the maps induced by $\Psi$. Hence $\Psi$ is a weak equivalence, and therefore a homotopy equivalence by Whitehead's theorem as required.
		\end{proof}
%
%
\appendix
\section{The category of based symmetric monoidal functors}

	In this appendix we introduce the notions of based symmetric monoidal categories, based functors and based equivalences between them. We prove that the associated functor categories are equivalent if the souce or target category is replaced by an equivalent category. Furthermore, this remains true when considering invertible functors. To define the latter the Picard category of a symmetric monoidal category is introduced. 

	\subsection{Based symmetric monoidal equivalences}
	\label{appendix 1}
 
		{\bf Notation}. $\langle\mathscr{C},\otimes,e,\alpha,\lambda,\rho,\gamma\rangle$ will denote the symmetric monoidal category $\mathscr{C}$ with product $\otimes\colon \mathscr{C}\times\mathscr{C}\rightarrow\mathscr{C}$, unit $e\in\mathscr{C}$, associator $\alpha$, left and right unitors $\lambda$ and $\rho$, and braiding $\gamma$. Recall that for all objects $a,b,c\in\mathscr{C}$ the components
		\begin{align*}
			\alpha_{a,b}\colon a\otimes(b\otimes c)&\rightarrow(a\otimes b)\otimes c\\
			\lambda_a\colon e\otimes a&\rightarrow a\\
			\rho_a\colon a\otimes e&\rightarrow a\\
			\gamma_{a,b}\colon a\otimes b&\rightarrow b\otimes a
		\end{align*}
		are isomorphisms in $\mathscr{C}$, natural in $a,b$ and $c$, and must satisfy certain commutativity requirements as described by Mac Lane \cite{MAC}. Recall also that
		\[
			\gamma_{b,a}\circ\gamma_{a,b}=1_{a\otimes b}.
		\]
		Finally, recall that $\langle\mathscr{C},\otimes,e,\alpha,\lambda,\rho,\gamma\rangle$ is a symmetric \emph{strict} monoidal category if the morphisms $\alpha, \rho$ and $\lambda$ are all identities.

		We shall often denote the symmetric monoidal category $\langle\mathscr{C},\otimes,e,\alpha,\lambda,\rho,\gamma\rangle$ simply by $\mathscr{C}$.

		\begin{definition}
			Let $\mathscr{C}$ and $\mathscr{C'}$ be symmetric monoidal categories with respective units $e$ and $e'$. Then we say that
			\begin{enumerate}[(i)]
				\item a functor $F\colon \mathscr{C}\rightarrow\mathscr{C'}$ is \emph{based} if $F(e)=e'$;
				\item a natural transformation $\tau\colon F\rightarrow G$ between two based functors $F,G\colon \mathscr{C}\rightarrow\mathscr{C'}$ is \emph{based} if $\tau_e=1_{e'}$.
			\end{enumerate}
		\end{definition}
			
		\begin{definition}
		\label{def: based symmetric monoidal functor}
			Let $\mathscr{C}$ and $\mathscr{C}'$ be symmetric monoidal categories. We define a \emph{based symmetric monoidal functor}\footnote{In the terminology of Mac Lane \cite{MAC}, $(F,F_2)$ is a symmetric \emph{strong} monoidal functor which is strict with respect to units.} $(F,F_2)\colon \mathscr{C}\rightarrow\mathscr{C}'$ to be a pair consisting of
			\begin{enumerate}[(i)]
				\item an (ordinary) based functor $F\colon \mathscr{C}\rightarrow\mathscr{C}'$;
				\item for each pair of objects $a,b\in\mathscr{C}$ an isomorphism
				\[
					F_2(a,b)\colon F(a)\otimes'F(b)\rightarrow F(a\otimes b)
				\]
				in $\mathscr{C}'$ which is natural in $a$ and $b$.
			\end{enumerate}
			Together these must make the following four diagrams commute in $\mathscr{C}'$.
			\[
				\begin{array}{c}
					\begin{aligned}
						\xymatrix{
							F(a) \otimes'(F(b) \otimes'F(c))\ar[r]^{\alpha'}\ar[d]_{1\otimes'F_2} & (F(a) \otimes'F(b)) \otimes'F(c)\ar[d]^{F_2\otimes'1} \\
							F(a) \otimes'F(b\otimes c)\ar[d]_{F_2} & F(a\otimes b) \otimes'F(c)\ar[d]^{F_2} \\
							F(a \otimes(b \otimes c))\ar[r]_{F(\alpha)} & F((a \otimes b)\otimes c)
						}
					\end{aligned}
					\\ \\
					\begin{array}{ccc}
						\begin{aligned}
							\xymatrix{
								F(a)\otimes'e'\ar[r]^{\rho'}\ar@{=}[d] & F(a) \\
								F(a)\otimes'F(e)\ar[r]_{F_2}& F(a\otimes e)\ar[u]_{F(\rho)}
							}
						\end{aligned}
						& &
						\begin{aligned}
							\xymatrix{
								 e'\otimes'F(a)\ar[r]^{\lambda'}\ar@{=}[d] & F(a)  \\
								 F(e)\otimes'F(a)\ar[r]_{F_2} & F(e\otimes a)\ar[u]_{F(\lambda)}
							} 
						\end{aligned}
					\end{array}
					\\ \\
					\begin{aligned}
						\xymatrix{
							F(a)\otimes' F(b)\ar[r]^{\gamma'}\ar[d]_{F_2} & F(b)\otimes'F(a)\ar[d]^{F_2} \\
         					F(a\otimes b)\ar[r]_{F(\gamma)} & F(b\otimes a)	
						}
					\end{aligned}
				\end{array}
			\]
		\end{definition}
		In the case of a strict source and target, we say that $(F,F_2)$ is a based symmetric \emph{strict} monoidal functor if all $F_2$ are identities.
%
		\begin{lemma}
		\label{lemma: composition of monoidal functors}
			Let $(F,F_2)\colon \mathscr{C}\rightarrow\mathscr{C}'$ and $(G,G_2)\colon \mathscr{C}'\rightarrow\mathscr{C}''$ be based symmetric monoidal functors. Then the composite $(G\circ F,(G\circ F)_2)\colon \mathscr{C}\rightarrow\mathscr{C}''$ is a based symmetric monoidal functor, where we define
			\[
				(G\circ F)_2(a,b)\defeq G(F_2(a,b))\circ G_2(F(a),F(b))
			\]
			for all objects $a,b\in\mathscr{C}$.\qed 
		\end{lemma}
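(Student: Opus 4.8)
The plan is to verify directly the three conditions of Definition~\ref{def: based symmetric monoidal functor} for the pair $(G\circ F,(G\circ F)_2)$. Each of the four coherence diagrams for the composite will be obtained by pasting together three kinds of commuting cell: the corresponding coherence diagram for $(G,G_2)$ evaluated at the $F$-images of the relevant objects; the image under $G$ of the corresponding coherence diagram for $(F,F_2)$, which commutes because $G$ is a functor; and, where needed, naturality squares for $G_2$.

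The easy parts come first. The underlying functor $G\circ F$ is based since $F(e)=e'$ and $G(e')=e''$. For each pair $a,b$ the morphism $(G\circ F)_2(a,b)=G(F_2(a,b))\circ G_2(F(a),F(b))$ is an isomorphism, being the composite of the isomorphism $G_2(F(a),F(b))$ with the $G$-image of the isomorphism $F_2(a,b)$; and it is natural in $a$ and $b$, since for $f\colon a\to a'$, $g\colon b\to b'$ the relevant naturality square is the naturality square for $G_2$ at $(F(f),F(g))$ stacked against the $G$-image of the naturality square for $F_2$ at $(f,g)$. It then remains to check the four coherence diagrams. The two unitality diagrams and the braiding diagram are short: for example the right unit diagram for $G\circ F$ at $a$ is the right unit diagram for $G$ at $F(a)$ stacked on top of the $G$-image of the right unit diagram for $F$ at $a$, the degenerate left edges matching up because $F(e)=e'$; no naturality square is needed, and the braiding diagram is entirely analogous. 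The associativity diagram is the one requiring care: one tiles the required rectangle with (i) the associativity diagram for $G$ at $(F(a),F(b),F(c))$, (ii) the $G$-image of the associativity diagram for $F$ at $(a,b,c)$ — whose top edge $G(\alpha'_{F(a),F(b),F(c)})$ is the bottom edge of (i) — and (iii) two naturality squares for $G_2$, one at $(1_{F(a)},F_2(b,c))$ and one at $(F_2(a,b),1_{F(c)})$; functoriality of the tensor product $\otimes''$ is used to split $1\otimes''(G\circ F)_2(b,c)$ and $(G\circ F)_2(a,b)\otimes''1$ into the appropriate two-step composites. Chasing the outer boundary then yields exactly the identity asserted by the associativity diagram for $G\circ F$.

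I expect no conceptual obstacle: every tile commutes for a reason already available — a coherence axiom for $(F,F_2)$ or for $(G,G_2)$, functoriality of $G$ or of the tensor product, or naturality of $G_2$ — so the proof is a routine if moderately large diagram paste-up, with the associativity diagram carrying the main bookkeeping burden. In the strict case, where the $F_2$ and $G_2$ are identities, the $(G\circ F)_2$ are identities too and all four diagrams collapse, so nothing further is needed.
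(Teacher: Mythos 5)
Your verification is correct and is exactly the standard diagram-pasting argument; the paper itself omits the proof (the lemma is stated with an immediate \qed as routine), and your tiling of the associativity rectangle by the two coherence diagrams plus the two naturality squares for $G_2$ is the intended argument. Nothing is missing.
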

%
		\begin{definition}
		\label{def: based monoidal natural transformation}
			Let $\mathscr{C},\mathscr{C}'$ be symmetric monoidal categories, and let $(F,F_2),(G,G_2)\colon \mathscr{C}\rightarrow\mathscr{C}'$ be based symmetric monoidal functors. A \emph{based monoidal natural transformation} $\tau\colon (F,F_2)\rightarrow (G,G_2)$ between $(F,F_2)$ and $(G,G_2)$ is a based natural transformation $\tau\colon F\rightarrow G$ between the underlying ordinary functors, such that the following diagram commutes in $\mathscr{C'}$ for all objects $a,b\in\mathscr{C}$.
			\[
				\xymatrix{
					F(a)\otimes'F(b)\ar[r]^{F_2}\ar[d]_{\tau_a\otimes'\tau_b} & F(a\otimes b)\ar[d]^{\tau_{a\otimes b}} \\
					G(a)\otimes'G(b)\ar[r]_{G_2} & G(a\otimes b)
				}
			\]
		\end{definition}
		\begin{remark}
			One could ask that $\tau$ be in some sense symmetric, in that it should satisfy $(\tau_b\otimes'\tau_a)\circ\gamma'_{F(a),F(b)}=\gamma'_{G(a),G(b)}\circ(\tau_a\otimes'\tau_b)$ and $\tau_{b\otimes a}\circ F(\gamma_{a,b})=G(\gamma_{a,b})\circ\tau_{a\otimes b}$. But this is redundant; both properties hold automatically by naturality of $\gamma'$ and $\tau$.
		\end{remark}
%
		\begin{lemma}
		\label{lemma: composition of based monoidal natural transformations}
			Let $(F,F_2),(G,G_2),(H,H_2)\colon \mathscr{C}\rightarrow\mathscr{C}'$ be based symmetric monoidal functors between symmetric monoidal categories. Let $\tau\colon (F,F_2)\rightarrow (G,G_2)$ and $\sigma\colon (G,G_2)\rightarrow (H,H_2)$ be based monoidal natural transformations. Then the composite $\sigma\circ\tau\colon (F,F_2)\rightarrow (H,H_2)$ is a based monoidal natural transformation, where $(\sigma\circ\tau)_a\defeq\sigma_a\circ\tau_a$ for each object $a\in\mathscr{C}$.
			\qed
 		\end{lemma}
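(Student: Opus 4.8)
The plan is to check the three conditions required of a based monoidal natural transformation for the candidate $\sigma\circ\tau$, defined componentwise by $(\sigma\circ\tau)_a\defeq\sigma_a\circ\tau_a$: that it is an ordinary natural transformation of the underlying functors $F\rightarrow H$, that it is based, and that it satisfies the compatibility square of Definition~\ref{def: based monoidal natural transformation} with respect to $F_2$, $G_2$ and $H_2$. The first point is the classical fact that vertical composition of natural transformations is again a natural transformation: for a morphism $f\colon a\rightarrow b$ in $\mathscr{C}$ one pastes the naturality squares of $\tau$ and $\sigma$ along the common arrow $G(f)$. Being based is immediate, since $(\sigma\circ\tau)_e=\sigma_e\circ\tau_e=1_{e'}\circ 1_{e'}=1_{e'}$ as $\tau$ and $\sigma$ are each based.

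The only substantive point is the monoidal compatibility square. I would obtain it by stacking the corresponding squares for $\tau$ and for $\sigma$: the $\tau$-square has top edge $F_2(a,b)$, bottom edge $G_2(a,b)$ and vertical edges $\tau_a\otimes'\tau_b$ and $\tau_{a\otimes b}$; the $\sigma$-square has top edge $G_2(a,b)$, bottom edge $H_2(a,b)$ and vertical edges $\sigma_a\otimes'\sigma_b$ and $\sigma_{a\otimes b}$. Pasting them vertically along the common edge $G_2(a,b)$ produces a commuting rectangle whose left vertical composite is $(\sigma_a\otimes'\sigma_b)\circ(\tau_a\otimes'\tau_b)$ and whose right vertical composite is $\sigma_{a\otimes b}\circ\tau_{a\otimes b}=(\sigma\circ\tau)_{a\otimes b}$, with top edge $F_2(a,b)$ and bottom edge $H_2(a,b)$.

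The key step — and the only place any care is needed — is to identify $(\sigma_a\otimes'\sigma_b)\circ(\tau_a\otimes'\tau_b)$ with $(\sigma\circ\tau)_a\otimes'(\sigma\circ\tau)_b=(\sigma_a\circ\tau_a)\otimes'(\sigma_b\circ\tau_b)$. This is precisely functoriality of the monoidal product $\otimes'\colon\mathscr{C}'\times\mathscr{C}'\rightarrow\mathscr{C}'$ applied to the composable pair of morphisms $(\tau_a,\tau_b)$ followed by $(\sigma_a,\sigma_b)$. With this interchange identity in hand, the pasted rectangle is exactly the commuting square of Definition~\ref{def: based monoidal natural transformation} for $\sigma\circ\tau$, and the verification is complete. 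I do not expect any genuine obstacle: the statement is a bookkeeping exercise whose single nontrivial ingredient is the bifunctoriality of $\otimes'$, and the result is the natural dual of Lemma~\ref{lemma: composition of monoidal functors}.
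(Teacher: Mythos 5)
Your verification is correct and is exactly the routine argument the paper intends (the lemma is stated with an immediate \qed and no written proof): pasting the two compatibility squares along $G_2(a,b)$ and invoking bifunctoriality of $\otimes'$ to identify $(\sigma_a\otimes'\sigma_b)\circ(\tau_a\otimes'\tau_b)$ with $(\sigma_a\circ\tau_a)\otimes'(\sigma_b\circ\tau_b)$ is the one substantive step, and you have it. Nothing is missing.
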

%
		\begin{definition}
			For symmetric monoidal categories $\mathscr{C}$ and $\mathscr{D}$, denote by \linebreak $\operatorname{SymmMon}[\mathscr{C},\mathscr{D}]_*$ the category whose objects are based symmetric monoidal functors $\mathscr{C}\rightarrow\mathscr{D}$, and whose morphisms are based monoidal natural transformations. 
		\end{definition}
%
		\begin{definition}
		\label{def: symmetric monoidal equivalence}
			Let $\mathscr{D}$ and $\mathscr{D}'$ be symmetric monoidal categories. A \emph{based symmetric monoidal equivalence} between $\mathscr{D}$ and $\mathscr{D}'$ is a quadruple $((\phi,\phi_2),(\psi,\psi_2),\eta,\varepsilon)$ where
			\begin{enumerate}[(i)]
				\item $(\phi,\phi_2)\colon \mathscr{D}\rightarrow\mathscr{D}'$ and $(\psi,\psi_2)\colon \mathscr{D}'\rightarrow\mathscr{D}$ are based symmetric monoidal functors;
				\item $\phi\colon \mathscr{D}\rightarrow\mathscr{D}'$ and $\psi\colon \mathscr{D}'\rightarrow\mathscr{D}$ form an equivalence of categories in the usual way;
				\item the unit $\eta\colon 1_{\mathscr{D}}\rightarrow\psi\circ\phi$ and counit $\varepsilon\colon \phi\circ\psi\rightarrow 1_{\mathscr{D}'}$ of this equivalence are based monoidal natural isomorphisms, i.e.\ based monoidal natural  transformations with $\eta_a$ and $\varepsilon_a$ isomorphisms for all objects $a$.
			\end{enumerate}
		\end{definition}
		We often say that a single functor $(\phi,\phi_2)$ is a based symmetric monoidal equivalence if the corresponding $(\psi,\psi_2),\eta$ and $\varepsilon$ exist.
%
		\begin{proposition}
		\label{prop: equivalence of categories induces equivalence of functor categories}
			Let $\mathscr{C}$ and $\mathscr{C'}$ as well as $\mathscr{D}$ and $\mathscr{D'}$ be pairs of equivalent based symmetric monoidal categories. 
Then the functor categories $\operatorname{SymmMon}[\mathscr{C},\mathscr{D}]_*$ and $\operatorname{SymmMon}[\mathscr{C'},\mathscr{D}']_*$ are equivalent.
		\end{proposition}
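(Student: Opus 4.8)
The plan is to reduce the statement to two one-sided versions and chain them: establishing
\[
\operatorname{SymmMon}[\mathscr{C},\mathscr{D}]_* \;\simeq\; \operatorname{SymmMon}[\mathscr{C}',\mathscr{D}]_* \;\simeq\; \operatorname{SymmMon}[\mathscr{C}',\mathscr{D}']_*,
\]
so it suffices to prove that the functor category is unchanged up to equivalence (a) when the source is replaced by an equivalent based symmetric monoidal category, with fixed target, and (b) when the target is so replaced, with fixed source. The whole argument is the explicit verification that $\operatorname{SymmMon}[-,-]_*$ behaves like a $2$-functor on the relevant data, so that it carries the given based symmetric monoidal equivalences (Definition~\ref{def: symmetric monoidal equivalence}) to equivalences of functor categories.

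For (a), I would fix a based symmetric monoidal equivalence $((\phi,\phi_2),(\psi,\psi_2),\eta,\varepsilon)$ between $\mathscr{C}$ and $\mathscr{C}'$ and define a precomposition functor $\phi^*\colon \operatorname{SymmMon}[\mathscr{C}',\mathscr{D}]_*\to\operatorname{SymmMon}[\mathscr{C},\mathscr{D}]_*$ sending $(F,F_2)$ to $(F\circ\phi,(F\circ\phi)_2)$ — a based symmetric monoidal functor by Lemma~\ref{lemma: composition of monoidal functors} — and sending a based monoidal natural transformation $\tau$ to the whiskered transformation with components $\tau_{\phi(a)}$. One checks $\phi^*\tau$ is still based (as $\phi$ is based and $\tau_{e'}=1$) and still monoidal, the latter being the square of Definition~\ref{def: based monoidal natural transformation} for $\tau$ at $(\phi(a),\phi(b))$ after unwinding the formula for $(F\circ\phi)_2$; then $\phi^*$ visibly preserves identities and composition. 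Defining $\psi^*$ analogously from $\psi$, the composite $\psi^*\circ\phi^*$ carries $(F,F_2)$ to $(F\circ\phi\circ\psi,\dots)$, and whiskering $F$ onto the counit $\varepsilon\colon\phi\circ\psi\Rightarrow \operatorname{Id}_{\mathscr{C}'}$ gives a natural isomorphism $F\circ\phi\circ\psi\Rightarrow F$ with components $F(\varepsilon_a)$; one checks these assemble into a based monoidal natural isomorphism (based because $\varepsilon$ is, monoidal because $\varepsilon$ is a monoidal natural transformation) that is moreover natural in $F$ (naturality of $\sigma\colon F\to F'$ applied at the morphisms $\varepsilon_a$). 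Hence $\psi^*\circ\phi^*\cong\operatorname{Id}$, and symmetrically $\phi^*\circ\psi^*\cong\operatorname{Id}$ using $\eta$, so $\phi^*$ is an equivalence.

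For (b), I would argue dually with \emph{post}composition. Fixing a based symmetric monoidal equivalence $((\phi,\phi_2),(\psi,\psi_2),\eta,\varepsilon)$ between $\mathscr{D}$ and $\mathscr{D}'$, set $\phi_*\colon \operatorname{SymmMon}[\mathscr{C},\mathscr{D}]_*\to\operatorname{SymmMon}[\mathscr{C},\mathscr{D}']_*$ to send $(F,F_2)$ to $(\phi\circ F,(\phi\circ F)_2)$ (again Lemma~\ref{lemma: composition of monoidal functors}) and $\tau$ to the transformation with components $\phi(\tau_a)$; here the monoidal square for $\phi_*\tau$ follows from that for $\tau$ together with naturality of $\phi_2$. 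With $\psi_*$ defined from $\psi$, whiskering $\eta$ and $\varepsilon$ onto $F$ on the outside yields natural isomorphisms $\psi_*\circ\phi_*\cong\operatorname{Id}$ and $\phi_*\circ\psi_*\cong\operatorname{Id}$, so $\phi_*$ is an equivalence; composing the equivalences from (a) and (b) gives the proposition.

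The bulk of the work is a sequence of finite diagram chases, and I expect the only genuinely delicate point to be checking that all the whiskered transformations involved — both those defining $\phi^*,\psi^*,\phi_*,\psi_*$ on morphisms and the unit/counit isomorphisms exhibiting the composites as equivalences — are really \emph{based monoidal} natural transformations in the sense of Definition~\ref{def: based monoidal natural transformation}. This forces one to expand the composite monoidal structure $(G\circ F)_2$ of Lemma~\ref{lemma: composition of monoidal functors} and thread the associativity, unit, and braiding coherences of $\mathscr{D}$ (resp.\ $\mathscr{D}'$) through; it is routine but the one place where care is needed. Everything else is the standard fact that a $2$-functor preserves equivalences, here made completely explicit.
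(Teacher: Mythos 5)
Your proposal is correct and follows essentially the same route as the paper: the paper proves the target-replacement half exactly as in your part (b) — postcomposition functors $\Phi,\Psi$ built from $(\phi,\phi_2),(\psi,\psi_2)$ via Lemma~\ref{lemma: composition of monoidal functors}, with unit and counit given by whiskering, i.e.\ $(\bar\eta_{(F,F_2)})_a=\eta_{F(a)}$ — and then disposes of the source-replacement half with ``one proceeds in a similar fashion,'' which is precisely the precomposition argument you spell out in part (a), chained through $\operatorname{SymmMon}[\mathscr{C}',\mathscr{D}]_*$.
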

		\begin{proof}
			Suppose we have a based symmetric monoidal equivalence $((\phi,\phi_2),(\psi,\psi_2),\eta,\varepsilon)$ between $\mathscr{D}$ and $\mathscr{D'}$ as in Definition~\ref{def: symmetric monoidal equivalence}. 
			For objects $(F,F_2),(G,G_2)$ and morphisms $\tau\colon (F,F_2)\rightarrow (G,G_2)$ in $\operatorname{SymmMon}[\mathscr{C},\mathscr{D}]_* $ we define a functor $\Phi\colon \operatorname{SymmMon}[\mathscr{C},\mathscr{D}]_*\rightarrow\operatorname{SymmMon}[\mathscr{C},\mathscr{D}']_*$ as follows.
			\begin{align*}
				\Phi\colon \operatorname{SymmMon}[\mathscr{C},\mathscr{D}]_*  \rightarrow & \operatorname{SymmMon}[\mathscr{C},\mathscr{D}']_* \\
				(F,F_2) \mapsto & (\phi\circ F,(\phi\circ F)_2) \\
				\tau \mapsto & \Phi(\tau) \mbox{\ where\ } \Phi(\tau)_a\defeq\phi(\tau_a)\mbox{\ for all objects $a\in\mathscr{C}$.}
			\end{align*}
			It is routine to check that $\Phi$ is functorial, and by Lemma \ref{lemma: composition of monoidal functors} $\Phi((F,F_2))$ is indeed an object in $\operatorname{SymmMon}[\mathscr{C},\mathscr{D}']_*$. Naturality of $\Phi(\tau)$ follows from naturality of $\tau$, and $\Phi(\tau)$ is based since $\phi$ and $\tau$ are based. 
			
			To check that $\Phi(\tau)$ is monoidal one checks the commutativity of the following diagram. For brevity we denote the tensor products in $\mathscr{C},\mathscr{D}$ and $\mathscr{D}'$ all by $\otimes$.
			\[
				\xymatrixcolsep{5pc}
				\xymatrix{
					(\phi\circ F)(a)\otimes(\phi\circ F)(b)\ar[r]^-{(\phi\circ F)_2}\ar[d]_{\phi(\tau_a)\otimes\phi(\tau_b)} & (\phi\circ F)(a\otimes b)\ar[d]^{\phi(\tau_{a\otimes b})} \\
					(\phi\circ G)(a)\otimes(\phi\circ G)(b)\ar[r]_-{(\phi\circ G)_2} & (\phi\circ G)(a\otimes b)	
				}
			\]
			In the other direction we define a functor
			\begin{align*}
				\Psi\colon \operatorname{SymmMon}[\mathscr{C},\mathscr{D}']_*  \rightarrow & \operatorname{SymmMon}[\mathscr{C},\mathscr{D}]_* \\
				(F,F_2) \mapsto & (\psi\circ F,(\psi\circ F)_2) \\
				\tau \mapsto & \Psi(\tau)\mbox{ where } \Psi(\tau)_a\defeq\psi(\tau_a)\mbox{ for all objects $a\in\mathscr{C}$}
			\end{align*}
			and by similar arguments this satisfies all the required properties. It remains to show that $\Phi$ and $\Psi$ define an equivalence of categories. Define a natural isomorphism
			\[
				\bar{\eta}\colon 1_{\operatorname{SymmMon}[\mathscr{C},\mathscr{D}]_* }\rightarrow\Psi\circ\Phi
			\]
			by
			\[
				({\bar{\eta}_{(F,F_2)}})_a\defeq\eta_{F(a)}
			\]
			for objects $a\in\mathscr{C}$. We have two things to check: that the components $\bar{\eta}_{(F,F_2)}$ are indeed isomorphisms in $\operatorname{SymmMon}[\mathscr{C},\mathscr{D}]_*$, and that $\bar{\eta}$ is natural.

			To show that $\bar{\eta}_{(F,F_2)}\colon (F,F_2)\rightarrow(\psi\circ\phi\circ F,(\psi\circ\phi\circ F)_2)$ is a based monoidal natural isomorphism we argue as follows. Clearly $({\bar{\eta}_{(F,F_2)}})_a$ is invertible, since $\eta$ is a natural isomorphism. We also see that $\bar{\eta}_{(F,F_2)}$ is based since $F$ and $\eta$ are based, and naturality follows from naturality of $\eta$. It remains to check that $\bar{\eta}_{(F,F_2)}$ is monoidal, for which we must verify the commutativity of the following diagram.
			\[
				\xymatrixcolsep{8pc}
				\xymatrix{
					F(a)\otimes F(b)\ar[r]^{F_2}\ar[d]_{\eta_{F(a)}\otimes\eta_{F(b)}} & F(a\otimes b)\ar[d]^{\eta_{F(a\otimes b)}} \\
					(\psi\circ\phi\circ F)(a)\otimes(\psi\circ\phi\circ F)(b)\ar[r]_-{(\psi\circ\phi\circ F)_2} & (\psi\circ\phi\circ F)(a\otimes b)
				}
			\]
			
			Finally, one checks that naturality of $\bar{\eta}$ again follows from naturality of $\eta$. Conversely we define
			\[
				\bar{\varepsilon}\colon \Phi\circ\Psi\rightarrow 1_{\operatorname{SymmMon}[\mathscr{C},\mathscr{D}']_*}
			\]
			by
			\[
				(\bar{\varepsilon}_{(F,F_2)})_a\defeq\varepsilon_{F(a)}.
			\]
			Via similar arguments we see that $\bar{\varepsilon}$ is a natural isomorphism, and hence $(\Phi,\Psi,\bar{\eta},\bar{\varepsilon})$ defines an equivalence of categories. One proceeds in a similar fashion when replacing $\mathscr C$ by $\mathscr {C'}$.
		\end{proof}
%
%
	\subsection{The skeleton of a symmetric monoidal category}
		The purpose of this subsection is to show that the skeleton of a symmetric monoidal category has a symmetric monoidal structure such that it is equivalent to the category itself as a symmetric monoidal category. 
		\begin{proposition}
			\label{prop: skeleton is symmetric monoidal}
			Let $\langle \mathscr{C},\otimes,e,\alpha,\lambda,\rho,\gamma\rangle$ be a symmetric monoidal category, and let $\mathscr{C}^s\subseteq\mathscr{C}$ be a skeleton of $\mathscr{C}$. Denote by $\phi(a)$ the object in $\mathscr{C}^s$ representing the isomorphism class of the object $a\in\mathscr{C}$, and  set $\phi(e)\defeq e$. Choose one fixed isomorphism $\phi_a\colon a\rightarrow\phi(a)$ for each $a$, setting $\phi_{\phi(a)}\defeq 1_{\phi(a)}$. Then $\mathscr{C}^s$ can be given a monoidal structure $\langle \mathscr{C}^s,\otimes^s,e,\alpha^s,\lambda^s,\rho^s,\gamma^s\rangle$ where the structure maps are defined as follows. For objects $u,u',v,v',w$ and morphisms $f\colon u\rightarrow v,f'\colon u'\rightarrow v'$ in $\mathscr{C}^s$ we define
			\begin{eqnarray*}
				\mbox{ {\bf Product} } & u\otimes^s u'&  \defeq\phi(u\otimes u') \\
				& f\otimes^sf' &\defeq\phi_{v\otimes v'}\circ(f\otimes f')\circ\phi_{u\otimes u'}^{-1} \\
				\mbox{ {\bf Associator} } &  \alpha^s_{u,v,w}  &\defeq\phi_{\phi(u\otimes v)\otimes w}\circ(\phi_{u\otimes v}\otimes 1_w)\circ\alpha_{u,v,w}\circ(1_u\otimes\phi_{v\otimes w}^{-1})\circ\phi_{u\otimes\phi(v\otimes w)}^{-1} \\
				\mbox{ {\bf Unitors} } & \lambda^s_u &\defeq\lambda_u\circ\phi_{e\otimes u}^{-1} \\
				& \rho^s_u & \defeq\rho_u\circ\phi_{u\otimes e}^{-1} \\
				\mbox{ {\bf Braiding} } & \gamma_{u,v}&\defeq\phi_{v\otimes u}\circ\gamma_{u,v}\circ\phi_{u\otimes v}^{-1}.
			\end{eqnarray*}
		\end{proposition}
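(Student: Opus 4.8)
The plan is to view the listed structure maps as the result of \emph{transporting} the symmetric monoidal structure of $\mathscr{C}$ across the equivalence of categories given by the inclusion $\iota\colon\mathscr{C}^s\hookrightarrow\mathscr{C}$ and the ``representative'' functor $\phi\colon\mathscr{C}\to\mathscr{C}^s$, where $\phi$ sends a morphism $f\colon a\to b$ to $\phi(f)\defeq\phi_b\circ f\circ\phi_a^{-1}$. First I would record the elementary facts underlying this: $\phi$ is a functor (immediate from the cocycle shape of $\phi(f)$), the conventions $\phi(e)\defeq e$ and $\phi_{\phi(a)}\defeq 1_{\phi(a)}$ give $\phi\circ\iota=1_{\mathscr{C}^s}$ on the nose and $\phi_e=1_e$, and the chosen isomorphisms $\phi_a$ are the components of a natural isomorphism $\Phi\colon 1_{\mathscr{C}}\Rightarrow\iota\circ\phi$. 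Thus $(\iota,\phi,\Phi)$ is an equivalence; rather than invoke an abstract transport principle I would then verify the monoidal axioms directly, using this picture only as a guide.

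The first thing to check is that $\otimes^s$ is a bifunctor. It is the composite of $\iota\times\iota$, of $\otimes$, and of $\phi$, so that the displayed value $f\otimes^s f'=\phi_{v\otimes v'}\circ(f\otimes f')\circ\phi_{u\otimes u'}^{-1}$ is simply $\phi(f\otimes f')$; preservation of identities and of composition is then inherited from $\otimes$ and from $\phi$. Next I would identify $\alpha^s,\lambda^s,\rho^s,\gamma^s$ as the evident conjugates of $\alpha,\lambda,\rho,\gamma$: a source object such as $u\otimes^s(v\otimes^s w)=\phi\bigl(u\otimes\phi(v\otimes w)\bigr)$ is first carried into $\mathscr{C}$ by the appropriate instances of $\phi_{(\cdot)}^{-1}$, the relevant coherence isomorphism of $\mathscr{C}$ is applied, and the result is carried back by an instance of $\phi_{(\cdot)}$ --- and the long composites in the statement are exactly these recipes. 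Invertibility of the components is then clear, and naturality in each variable follows by pasting the naturality squares for $\alpha,\lambda,\rho,\gamma$ with those for the various $\phi_{(\cdot)}$ and for $\otimes^s$ on morphisms.

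The only step requiring real care is the verification of the coherence axioms --- the pentagon and the triangle for $(\alpha^s,\lambda^s,\rho^s)$, and the two hexagons --- together with the symmetry relation $\gamma^s_{v,u}\circ\gamma^s_{u,v}=1_{u\otimes^s v}$ for $\gamma^s$. Here the mechanism is cancellation: in any of these diagrams every vertex is an iterated $\phi$ of a tensor word in $\mathscr{C}$, every edge is a conjugated structure map, and along any directed path the transport isomorphism $\phi_x$ inserted at one step is undone by the $\phi_x^{-1}$ inserted at the next, so that the diagram collapses onto the corresponding axiom of $\mathscr{C}$, which holds by hypothesis; for the unit triangle one uses in addition $\phi(e)=e$ and $\phi_e=1_e$, which make $e$ a strict unit on objects (although the unitors themselves need not be identities). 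The symmetry relation is the one piece that can be done by hand: $\gamma^s_{v,u}\circ\gamma^s_{u,v}=\phi_{u\otimes v}\circ\gamma_{v,u}\circ\gamma_{u,v}\circ\phi_{u\otimes v}^{-1}=\phi_{u\otimes v}\circ\phi_{u\otimes v}^{-1}=1_{u\otimes^s v}$. I expect the pentagon and the two hexagons --- keeping track of exactly which instances of $\phi_{(\cdot)}$ occur and confirming that they pair off --- to be where essentially all the work lies; everything else is formal. Finally I would note that, with the axioms established, $(\iota,\{\phi_{(\cdot)}\})$ and $(\phi,\{\phi_{(\cdot)}^{-1}\})$ are mutually quasi-inverse based symmetric monoidal functors, with $\Phi$ and $\Phi^{-1}$ as based monoidal natural isomorphisms, so that the inclusion $\mathscr{C}^s\hookrightarrow\mathscr{C}$ is in fact a based symmetric monoidal equivalence --- which is the use to which the proposition is subsequently put.
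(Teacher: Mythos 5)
Your proposal is correct and follows essentially the same route as the paper, whose proof simply asserts that the coherence diagrams for $\mathscr{C}^s$ commute ``as a consequence of the fact that they hold for the analogous maps in $\mathscr{C}$'' --- precisely the cancellation/transport argument you spell out. You merely supply the organising detail (the quasi-inverse functor $\phi$, the pairing-off of the $\phi_{(\cdot)}$'s, and the explicit symmetry computation) that the paper leaves implicit.
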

		\begin{proof}
		Checking that the appropriate diagrams commute is a long but standard process. The required properties hold for the skeletal structure maps in $\mathscr{C}^s$ as a consequence of the fact that they hold for the analogous maps in $\mathscr{C}$.
		\end{proof}
%
		\begin{proposition}
		\label{prop: the inclusion of a skeleton is an equivalence}
			The inclusion $(i,i_2)\colon \mathscr{C}^s\hookrightarrow\mathscr{C}$ where $i_2(u,v)\defeq\phi_{u\otimes v}$ is a based symmetric monoidal equivalence of categories. \qed
		\end{proposition}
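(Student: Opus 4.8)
The plan is to exhibit an explicit quasi-inverse together with its monoidal structure and the required unit and counit, and then to observe that the coherence, naturality and basedness conditions of Definition~\ref{def: symmetric monoidal equivalence} all reduce, after substituting the formulas of Proposition~\ref{prop: skeleton is symmetric monoidal}, to the corresponding identities in $\mathscr{C}$. Thus, just as in the proof of Proposition~\ref{prop: skeleton is symmetric monoidal}, the argument is long but routine: the skeletal structure maps were defined precisely so that these cancellations occur.

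First I would verify that $(i,i_2)$ is a based symmetric monoidal functor in the sense of Definition~\ref{def: based symmetric monoidal functor}. It is based because $\phi(e)\defeq e$, and $i_2(u,v)=\phi_{u\otimes v}$ is an isomorphism $i(u)\otimes i(v)=u\otimes v\rightarrow\phi(u\otimes v)=i(u\otimes^s v)$, manifestly natural in $u$ and $v$. For the four coherence diagrams one inserts the definitions of $\alpha^s,\lambda^s,\rho^s$ and $\gamma^s$ from Proposition~\ref{prop: skeleton is symmetric monoidal}; for instance the braiding square amounts to $\gamma^s_{u,v}\circ\phi_{u\otimes v}=\phi_{v\otimes u}\circ\gamma_{u,v}\circ\phi_{u\otimes v}^{-1}\circ\phi_{u\otimes v}=\phi_{v\otimes u}\circ\gamma_{u,v}$, which holds on sight, and the pentagon and unit squares telescope in exactly the same way.

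Next I would construct the quasi-inverse. Let $r\colon\mathscr{C}\rightarrow\mathscr{C}^s$ send $a$ to $\phi(a)$ and a morphism $f\colon u\rightarrow v$ to $\phi_v\circ f\circ\phi_u^{-1}$; this is a functor, and $r\circ i=1_{\mathscr{C}^s}$ on the nose because $\phi_{\phi(a)}\defeq 1_{\phi(a)}$. Equip $r$ with the structure isomorphisms
\[
	r_2(a,b)\defeq\phi_{a\otimes b}\circ(\phi_a^{-1}\otimes\phi_b^{-1})\circ\phi_{\phi(a)\otimes\phi(b)}^{-1},
\]
a morphism $r(a)\otimes^s r(b)=\phi(\phi(a)\otimes\phi(b))\to\phi(a\otimes b)=r(a\otimes b)$ natural in $a$ and $b$. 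This is the standard formula transporting the monoidal structure of $i$ along the equivalence, so the coherence diagrams for $(r,r_2)$ again reduce to those of $\mathscr{C}$ (alternatively they can be deduced formally, using that the $i_2$ are isomorphisms). I take the unit of the equivalence to be the identity $\eta\colon 1_{\mathscr{C}^s}\rightarrow r\circ i$ and the counit $\varepsilon\colon i\circ r\rightarrow 1_{\mathscr{C}}$ to have components $\varepsilon_a\defeq\phi_a^{-1}\colon\phi(a)\rightarrow a$; the triangle identities are immediate from $\phi_{\phi(a)}=1_{\phi(a)}$.

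Finally I would check condition (iii) of Definition~\ref{def: symmetric monoidal equivalence}, that $\eta$ and $\varepsilon$ are based monoidal natural isomorphisms. For $\eta$ there is nothing to do. The transformation $\varepsilon$ is a natural isomorphism by construction, it is based because $\phi_e=\phi_{\phi(e)}=1_e$, and for monoidality one computes, using Lemma~\ref{lemma: composition of monoidal functors}, that $(i\circ r)_2(a,b)=i(r_2(a,b))\circ i_2(r(a),r(b))=\phi_{a\otimes b}\circ(\phi_a^{-1}\otimes\phi_b^{-1})$, whence $\varepsilon_{a\otimes b}\circ(i\circ r)_2(a,b)=\phi_a^{-1}\otimes\phi_b^{-1}=\varepsilon_a\otimes\varepsilon_b$, which is exactly the square of Definition~\ref{def: based monoidal natural transformation} (with $1_{\mathscr{C}}$ carrying the identity structure maps). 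Hence $((i,i_2),(r,r_2),\eta,\varepsilon)$ is a based symmetric monoidal equivalence, so $(i,i_2)$ is one in the sense of the sentence following Definition~\ref{def: symmetric monoidal equivalence}. There is no genuine obstacle here: the only real labour, exactly as in Proposition~\ref{prop: skeleton is symmetric monoidal}, is the diagram bookkeeping for the four coherence squares of $i_2$ (and of $r_2$), which I would relegate to a remark that the verification is "long but standard".
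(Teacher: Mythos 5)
Your proposal is correct and is exactly the routine verification the paper has in mind: the proposition is stated with no proof at all (just a \qed), and your explicit quasi-inverse $r$, the structure maps $r_2$, and the checks that $\eta=\mathrm{id}$ and $\varepsilon_a=\phi_a^{-1}$ are based monoidal natural isomorphisms are the standard way to fill in that omitted argument. All the formulas type-check and the cancellations you exhibit (e.g.\ $(i\circ r)_2(a,b)=\phi_{a\otimes b}\circ(\phi_a^{-1}\otimes\phi_b^{-1})$) are right.
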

	Putting together Proposition~\ref{prop: equivalence of categories induces equivalence of functor categories} and Proposition~\ref{prop: the inclusion of a skeleton is an equivalence} we obtain the following.
%
		\begin{corollary}
		\label{corr: the inclusion of a skeleton induces an equivalence of functor categories}
			Let $\mathscr{C}$ and $\mathscr{D}$ be symmetric monoidal categories, and let $\mathscr{D}^s$ be a skeleton of $\mathscr{D}$ with symmetric monoidal structure as in Proposition \ref{prop: skeleton is symmetric monoidal}. Then the inclusion $\mathscr{D}^s\hookrightarrow\mathscr{D}$ induces an equivalence of categories $\operatorname{SymmMon}[\mathscr{C},\mathscr{D}^s]_*\rightarrow\operatorname{SymmMon}[\mathscr{C},\mathscr{D}]_*$. \qed
		\end{corollary}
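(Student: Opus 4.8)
The plan is to obtain the statement as a formal consequence of the two preceding propositions, so that the only real content is to check that the equivalence produced abstractly is precisely the one \emph{induced by the inclusion} $\mathscr{D}^s\hookrightarrow\mathscr{D}$.

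First I would record, via Proposition~\ref{prop: the inclusion of a skeleton is an equivalence}, that the inclusion $(i,i_2)\colon\mathscr{D}^s\hookrightarrow\mathscr{D}$ with $i_2(u,v)=\phi_{u\otimes v}$ is a based symmetric monoidal equivalence in the sense of Definition~\ref{def: symmetric monoidal equivalence}, the required quasi-inverse $(\psi,\psi_2)$, unit $\eta$ and counit $\varepsilon$ being supplied by the data chosen in Proposition~\ref{prop: skeleton is symmetric monoidal}. Then I would apply Proposition~\ref{prop: equivalence of categories induces equivalence of functor categories} with the source pair taken to be $\mathscr{C}$ paired with itself — using the identity functor as a trivial based symmetric monoidal equivalence $\mathscr{C}\simeq\mathscr{C}$, so that the ``replacing $\mathscr{C}$ by $\mathscr{C}'$'' step of that proof is vacuous — and the target pair taken to be $\mathscr{D}^s$ and $\mathscr{D}$. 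This immediately yields an equivalence of categories $\operatorname{SymmMon}[\mathscr{C},\mathscr{D}^s]_*\simeq\operatorname{SymmMon}[\mathscr{C},\mathscr{D}]_*$.

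It remains to identify this equivalence with the functor induced by the inclusion, and this is the only point requiring attention. I would inspect the construction of $\Phi$ in the proof of Proposition~\ref{prop: equivalence of categories induces equivalence of functor categories}: in the present case $\phi=i$, so $\Phi$ sends $(F,F_2)$ to $(i\circ F,(i\circ F)_2)$ — with the composite monoidal structure of Lemma~\ref{lemma: composition of monoidal functors} — and sends a based monoidal natural transformation $\tau$ to $i\circ\tau$; that is, $\Phi$ is exactly post-composition with $(i,i_2)$. Hence the inclusion $\mathscr{D}^s\hookrightarrow\mathscr{D}$ induces the asserted equivalence of functor categories. I expect no genuine obstacle here: the argument is entirely formal, and the work is the bookkeeping check that the abstract $\Phi$ agrees on the nose with honest post-composition by the inclusion.
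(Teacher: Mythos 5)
Your proposal is correct and matches the paper exactly: the paper obtains this corollary by simply combining Proposition~\ref{prop: equivalence of categories induces equivalence of functor categories} with Proposition~\ref{prop: the inclusion of a skeleton is an equivalence}, which is precisely your argument. Your additional check that the abstract equivalence $\Phi$ is literally post-composition with $(i,i_2)$ is a worthwhile (if routine) clarification that the paper leaves implicit.
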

		\subsubsection{A skeleton for \vect}
        \label{a skeleton for vect}
%
			Define $\vects\subset\vect$ to be the skeleton of \vect whose objects are $\{\mathbb{C}^n|n\in\mathbb{N}_{>0}\}$ (so for an $n$-dimensional vector space $V^n$ we have $\phi(V^n)\defeq\mathbb{C}^n$). Choose the isomorphism $\phi_{\mathbb{C}\otimes\mathbb{C}}:\mathbb{C}\otimes\mathbb{C}\rightarrow\mathbb{C}$ in the canonical way, that is to say $z\otimes z'\mapsto zz'$. We give \vects a symmetric monoidal structure as in Proposition~\ref{prop: skeleton is symmetric monoidal}. Explicitly this means that for linear maps $\sigma\colon \mathbb{C}^n\rightarrow\mathbb{C}^m$ and $\sigma'\colon \mathbb{C}^{n'}\rightarrow\mathbb{C}^{m'}$ in \vects we have
			\begin{align*}
				\sigma\otimes^s\sigma' & \defeq \phi_{\mathbb{C}^m \otimes\mathbb{C}^{m'}}\circ(\sigma \otimes\sigma') \circ\phi_{\mathbb{C}^n \otimes\mathbb{C}^{n'}}^{-1}\colon \mathbb{C}^{nn'}\rightarrow\mathbb{C}^{mm'}.
			\end{align*}
%
			In particular if we consider two linear maps $\sigma,\sigma'\colon \mathbb{C}\rightarrow\mathbb{C}$, that is to say two elements $\sigma,\sigma'\in\mathbb{C}$, the canonical choice of $\phi_{\mathbb{C}\otimes\mathbb{C}}$ has two consequences. Firstly we have $\sigma\otimes^s\sigma'=\sigma\sigma'\in\mathbb{C}$. Secondly, the structure map components $\alpha_{\mathbb{C},\mathbb{C},\mathbb{C}}^s$, $\lambda_{\mathbb{C}}^s,\rho_{\mathbb{C}}^s$ and $\gamma_{\mathbb{C},\mathbb{C}}^s$ are all identities, and so the monoidal structure on $\mathbb{C}\subset\vects$ is strict. 

	\subsection{Invertible functors of symmetric monoidal categories}

		An invertible functor $F\colon \mathscr C \rightarrow \mathscr D$ sends every morphism in $\mathscr C$ to an invertible morphism in $\mathscr D$. In particular, the functor factors through the localisation $\mathscr C [\mathscr C ^{-1}]$. In the presence of a symmetic monoidal structure we consider a more refined notion of invertible functors for which the image of any object is also invertible.

		\begin{definition}
			An object $a$ of a symmetric monoidal category $\mathscr D$ is called \emph{invertible} if there exists another object $\bar a$ with $a \otimes \bar a$ isomorphic to the unit $e$.	The Picard category $Pic (\mathscr D)$  is the  subcategory of invertible objects and morphisms. 
		\end{definition}

		For example, the Picard category of  $\text{Vect}_{\mathbb{C}}$ is the category of complex lines, and the Picard group of its skeleton is $\mathbb{C}^\times$, the category of one object and morphisms the group of non-zero complex numbers under multiplication. 

		\begin {definition}
			An invertible functor  of  symmetric monoidal categories is a functor  $F\colon \mathscr C \rightarrow Pic (\mathscr D)$. We define the category of based invertible functors as
			\[
				\operatorname{SymmMon}^\times[\mathscr{C},\mathscr{D}]_* \defeq \operatorname{SymmMon}[\mathscr{C}, Pic (\mathscr{D})]_*.
			\]
		\end{definition}
		 
		For any symmetric monoidal category $\mathscr{C}$, we can extend the product on $\mathscr{C}$ to a product on its localisation $\mathscr{C}[\mathscr{C}^{-1}]$ by defining $f^{-1}\otimes g^{-1}\defeq (f\otimes g)^{-1}$ for inverses and extending bifunctorially. The localised category then has a symmetric monoidal structure inherited from $\mathscr{C}$ by transferring the unit and structure maps along the canonical projection $\mathscr{C}\rightarrow\mathscr{C}[\mathscr{C}^{-1}]$. Furthermore, for a second symmetric monoidal category $\mathscr{D}$, a morphism inverting functor $F\colon\mathscr{C}\rightarrow\mathscr{D}$ is based symmetric monoidal if and only if the corresponding functor $\bar{F}\colon\mathscr{C}[\mathscr{C}^{-1}]\rightarrow\mathscr{D}$ is based symmetric monoidal. Hence
		\[
			\operatorname{SymmMon}^\times[\mathscr{C},\mathscr{D}]_*  = \operatorname{SymmMon}[\mathscr{C} [\mathscr {C} ^{-1}],
			Pic (\mathscr{D})]_*.
		\]
		We derive a version of Corollary~\ref{corr: the inclusion of a skeleton induces an equivalence of functor categories} in this setting.
%
		\begin{corollary}
		\label{corr: skeletal inclusion invertible version}
			Let $\mathscr{C}$ and $\mathscr{D}$ be symmetric monoidal categories, and let $\mathscr{D}^s$ be a skeleton of $\mathscr{D}$. Then the inclusion $\mathscr{D}^s\hookrightarrow\mathscr{D}$ induces an equivalence of categories 
		\[
			\operatorname{SymmMon}^\times [\mathscr{C},\mathscr{D}^s]_* \longrightarrow \operatorname{SymmMon}^\times [\mathscr{C},\mathscr{D}]_*.
		\] 
		\end{corollary}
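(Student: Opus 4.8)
The plan is to deduce the statement from Proposition~\ref{prop: equivalence of categories induces equivalence of functor categories}, applied to the Picard categories. Recall the identifications established just before the corollary, $\operatorname{SymmMon}^\times[\mathscr{C},\mathscr{D}]_* = \operatorname{SymmMon}[\mathscr{C}, Pic(\mathscr{D})]_*$ and $\operatorname{SymmMon}^\times[\mathscr{C},\mathscr{D}^s]_* = \operatorname{SymmMon}[\mathscr{C}, Pic(\mathscr{D}^s)]_*$. Hence it suffices to show that the inclusion $\mathscr{D}^s\hookrightarrow\mathscr{D}$ restricts to a based symmetric monoidal equivalence $Pic(\mathscr{D}^s)\to Pic(\mathscr{D})$. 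Feeding this equivalence of target categories, together with the identity equivalence $\mathscr{C}\simeq\mathscr{C}$, into Proposition~\ref{prop: equivalence of categories induces equivalence of functor categories} then produces an equivalence $\operatorname{SymmMon}[\mathscr{C}, Pic(\mathscr{D}^s)]_*\to\operatorname{SymmMon}[\mathscr{C}, Pic(\mathscr{D})]_*$; and inspecting the proof of that proposition, this equivalence is postcomposition with the inclusion $Pic(\mathscr{D}^s)\hookrightarrow Pic(\mathscr{D})$, i.e.\ it is the map induced by $\mathscr{D}^s\hookrightarrow\mathscr{D}$, as required.

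Next I would verify that $Pic(-)$ behaves functorially. First, $Pic$ of a symmetric monoidal category is again symmetric monoidal: if $u,v$ are invertible with inverses $\bar u,\bar v$, then $u\otimes v$ is invertible with inverse $\bar v\otimes\bar u$ (using the associator, unitors and $v\otimes\bar v\cong e$), and $e$ is invertible since $e\otimes e\cong e$; moreover $f\otimes g$ is an isomorphism between invertible objects whenever $f,g$ are, and the structure isomorphisms $\alpha,\lambda,\rho,\gamma$ have such components, so $Pic(\mathscr{D})$ inherits the symmetric monoidal structure of $\mathscr{D}$ by restriction, and $Pic(\mathscr{D}^s)$ inherits the skeletal structure of $\mathscr{D}^s$. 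Second, a based symmetric monoidal functor $(F,F_2)\colon\mathscr{A}\to\mathscr{B}$ restricts to one $Pic(\mathscr{A})\to Pic(\mathscr{B})$: an invertible $u$ goes to an invertible object because $F(u)\otimes F(\bar u)\cong F(u\otimes\bar u)\cong F(e)=e$ via $F_2$, isomorphisms go to isomorphisms, and each $F_2(u,v)$ with $u,v$ invertible is an isomorphism between invertible objects, hence a morphism of $Pic(\mathscr{B})$, so the coherence diagrams of Definition~\ref{def: based symmetric monoidal functor} are inherited verbatim. Likewise a based monoidal natural transformation restricts, its components at invertible objects being isomorphisms when it is a natural isomorphism.

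Finally I would apply this to the based symmetric monoidal equivalence $((i,i_2),(\psi,\psi_2),\eta,\varepsilon)\colon\mathscr{D}^s\hookrightarrow\mathscr{D}$ of Proposition~\ref{prop: the inclusion of a skeleton is an equivalence}, with $\mathscr{D}^s$ carrying the structure of Proposition~\ref{prop: skeleton is symmetric monoidal}: the functors $i$ and $\psi$ restrict to based symmetric monoidal functors between $Pic(\mathscr{D}^s)$ and $Pic(\mathscr{D})$, and $\eta,\varepsilon$ restrict to based monoidal natural isomorphisms between the relevant composites, so together they form a based symmetric monoidal equivalence in the sense of Definition~\ref{def: symmetric monoidal equivalence}, whose first component is exactly the restriction of $\mathscr{D}^s\hookrightarrow\mathscr{D}$; combined with the reduction of the first paragraph this proves the corollary. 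I expect the only genuinely non-formal step to be the bookkeeping of the second paragraph (that a tensor of invertible objects is invertible, so $Pic$ inherits a monoidal structure, and that a based symmetric monoidal functor preserves invertibility of objects); everything else is inherited without change. Alternatively one may note that $Pic(\mathscr{D}^s)$ is a symmetric monoidal skeleton of $Pic(\mathscr{D})$ and quote Corollary~\ref{corr: the inclusion of a skeleton induces an equivalence of functor categories} directly, the verifications being identical.
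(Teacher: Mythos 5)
Your proposal is correct and follows essentially the same route as the paper, whose entire proof is the one-line remark that equivalent based symmetric monoidal categories have equivalent Picard categories (to be combined with Proposition~\ref{prop: equivalence of categories induces equivalence of functor categories}); you have simply supplied the verifications the paper declares straightforward. The extra detail — that $Pic$ inherits the monoidal structure, that based symmetric monoidal functors and natural transformations restrict to Picard categories, and the alternative observation that $Pic(\mathscr{D}^s)$ is a monoidal skeleton of $Pic(\mathscr{D})$ — is all accurate.
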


		\begin{proof}
			It is straight forward to show that if $\mathscr D$ and $\mathscr {D'}$ are equivalent based symmetric monoidal categories then so are their Picard categories.
		\end{proof}

%
%
%
	\providecommand{\bysame}{\leavevmode\hbox to3em{\hrulefill}\thinspace}
	
%

\end{document}